\title{The Complexity of Geodesic Spanners using Steiner Points}
\author{Sarita de Berg}{Department of Information and Computing Sciences, Utrecht University, The Netherlands}{S.deBerg@uu.nl}{https://orcid.org/0000-0001-5555-966X}{}
\author{Tim Ophelders}{Department of Information and Computing Sciences, Utrecht University, the Netherlands \and Department of Mathematics and Computer Science, TU Eindhoven, the Netherlands}{t.a.e.ophelders@uu.nl}{}{partially supported by the Dutch Research Council (NWO) under project no. VI.Veni.212.260.}
\author{Irene Parada}{Department of Mathematics, Universitat Politècnica de Catalunya, Spain}{irene.parada@upc.edu}{https://orcid.org/0000-0003-2401-8670}{is a Serra Húnter fellow and was supported by a Margarita Salas fellowship funded by the Ministry of Universities of Spain and the European Union (NextGenerationEU).}
\author{Frank Staals}{Department of Information and Computing Sciences, Utrecht University, The Netherlands}{F.Staals@uu.nl}{}{}
\author{Jules~Wulms}{Department of Mathematics and Computer Science, TU Eindhoven, the Netherlands}{J.H.H.M.Wulms@tue.nl}{https://orcid.org/0000-0002-9314-8260}{}
\authorrunning{S. de Berg, T. Ophelders, I. Parada, F. Staals, J. Wulms}
\keywords{spanner, simple polygon, polygonal domain, geodesic distance, complexity} 
\newcommand {\mathset} [1] {\ensuremath {\mathbb {#1}}\xspace}
\newcommand {\R} {\mathset {R}}
\newcommand{\mkmcal}[1]{\ensuremath{\mathcal{#1}}\xspace}
\newcommand{\F}{\mkmcal{F}}
\newcommand{\G}{\mkmcal{G}}
\newcommand{\T}{\ensuremath{T}}
\newcommand{\SPT}{\ensuremath{\mathit{SPT}}}
\newcommand{\VC}{\ensuremath{W}\xspace}
\newcommand{\for}{\mkmcal{T}}
\newcommand{\Min}{\ensuremath{M_{\mathit{in}}}}
\newcommand{\Mout}{\ensuremath{M_{\mathit{out}}}}
\newcommand{\Pwithr}{\ensuremath{P'_r}}
\newcommand{\Pnotr}{\ensuremath{P'_{\neg r}}}
\newcommand{\steiner}{\mkmcal{S}\xspace}
\newcommand{\vertexcover}{\textsc{Vertex Cover}\xspace}
\newcommand{\steinerspanner}{\textsc{Steiner Spanner}\xspace}
\newcommand{\eps}{\ensuremath{\varepsilon}\xspace}
\newcommand{\etal}{et al.\xspace}
\begin{document}

\maketitle

\begin{abstract}
    A geometric $t$-spanner $\G$ on a set $S$ of $n$ point sites in a metric space $P$ is a subgraph of the complete graph on $S$ such that for every pair of sites $p,q$ the distance in $\G$ is a most $t$ times the distance $d(p,q)$ in $P$. We call a connection between two sites a \emph{link}. In some settings, such as when $P$ is a simple polygon with $m$ vertices and a link is a shortest path in $P$, links can consist of $\Theta (m)$ segments and thus have non-constant complexity. The spanner complexity is a measure of how compact a spanner is, which is equal to the sum of the complexities of all links in the spanner. In this paper, we study what happens if we are allowed to introduce $k$ Steiner points to reduce the spanner complexity. We study such Steiner spanners in simple polygons, polygonal domains, and edge-weighted trees.
    
    Surprisingly, we show that Steiner points have only limited utility. For a spanner that uses $k$ Steiner points, we provide an $\Omega(nm/k)$ lower bound on the worst-case complexity of any $(3-\eps)$-spanner, and an $\Omega(mn^{1/(t+1)}/k^{1/(t+1)})$ lower bound on the worst-case complexity of any $(t-\eps)$-spanner, for any constant $\eps \in (0,1)$ and integer constant $t \geq 2$. These lower bounds hold in all settings. Additionally, we show NP-hardness for the problem of deciding whether a set of sites in a polygonal domain admits a $3$-spanner with a given maximum complexity using $k$ Steiner points.
    
    On the positive side, for trees we show how to build a $2t$-spanner that uses $k$ Steiner points of complexity $O(mn^{1/t}/k^{1/t} + n \log (n/k))$, for any integer $t \geq 1$. We generalize this result to forests, and apply it to obtain a $2\sqrt{2}t$-spanner in a simple polygon with total complexity $O(mn^{1/t}(\log k)^{1+1/t}/k^{1/t} + n\log^2 n)$. When a link in the spanner can be any path between two sites, we show how to improve the spanning ratio in a simple polygon to $(2k+\varepsilon)$, for any constant $\varepsilon \in (0,2k)$, and how to build a $6t$-spanner in a polygonal domain with the same complexity.
\end{abstract}

\section{Introduction}

Consider a set $S$ of $n$ point \emph{sites} in a metric space $P$. In
applications such as (wireless) network
design~\cite{alzoubi03geomet_spann_wirel_ad_hoc_networ}, regression
analysis~\cite{gottlieb17effic_regres_metric_spaces_approx_lipsc_exten},
vehicle
routing~\cite{cohen-addad20light_spann_low_embed_effic,remy10euclid},
and constructing TSP tours~\cite{borradaile15near_stein}, it is
desirable to have a compact network that accurately captures the
distances between the sites in $S$. Spanners provide such a
representation. Formally, a \emph{geometric $t$-spanner} \G is a
subgraph of the complete graph on $S$, so that for every pair of sites
$p,q$ the distance $d_\G(p,q)$ in \G is at most $t$ times the distance
$d(p,q)$ in $P$~\cite{proximity_algorithms_book}. The quality of a
spanner can be expressed in terms of the \emph{spanning ratio} $t$ and
a term to measure how ``compact'' it is. Typical examples are the
\emph{size} of the spanner, that is, the number of edges of \G, its
weight (the sum of the edge lengths), or its diameter. Such spanners
are well studied~\cite{Arya95short_thin_lanky,
  survey_geometric_spanners,Chan15Doubling_spanners,Elkin15optimal_short_thin_lanky}. For
example, for point sites in $\R^d$ and any constant $\eps > 0$ one can
construct a $(1+\eps)$-spanner of size
$O(n/\eps^{d-1})$~\cite{book_spanners}. Similar results exist for more
general
spaces~\cite{RoutingInDoublingMetrics,bounded_doubling_optimal_dynamic,
  FastConstructionDoublingMetrics}. Furthermore, there are various
spanners with other desirable spanner properties such as low maximum
degree, or
fault-tolerance~\cite{Bose05Bounded_degree_low_weight,Le19truly_opt_Euclidean,Levcopoulos98_fault_tolerant,book_spanners}.

When the sites represent physical locations, there are often other
objects (e.g. buildings, lakes, roads, mountains) that influence the
shortest path between the sites. In such settings, we need to
explicitly incorporate the environment. We consider the case where
this environment is modeled by a polygon $P$ with $m$ vertices, and
possibly containing holes. The distance between two points
$p,q \in P$ is then given by their \emph{geodesic distance}: the
length of a shortest path between $p$ and $q$ that is fully contained
in $P$. This setting has been considered before. For example, Abam,
Adeli, Homapou, and Asadollahpoor~\cite{SpannerPolygonalDomain}
present a $(\sqrt{10} +\eps)$-spanner of size $O(n \log^2 n)$
when $P$ is a simple polygon, and a $(5 + \eps)$-spanner of
size $O(n \sqrt{h} \log^2n)$ when the polygon has $h>1$ holes. Abam,
de Berg, and Seraji~\cite{SpannerPolyhedralTerrain} even obtain a
$(2+\eps)$-spanner of size $O(n\log n)$ when $P$ is actually
a terrain. To avoid confusion between the edges of $P$ and the edges of \G, we will from hereon use the term \emph{links} to refer to the
edges of~\G.

As argued by de Berg, van Kreveld, and Staals~\cite{complexity_spanners}, 
each link in a geodesic spanner may
correspond to a shortest path containing $\Omega(m)$ polygon
vertices. Therefore, the \emph{spanner complexity}, defined as the
total number of line segments that make up all links in the spanner,
more appropriate measures how compact a geodesic spanner is. In this definition, a line segment that appears in multiple links is counted multiple times: once for each link it appears in.
The above
spanners of~\cite{SpannerPolygonalDomain,SpannerPolyhedralTerrain} all have worst-case complexity $\Omega(mn)$, hence de Berg, van Kreveld, and Staals
present an algorithm to construct a $2\sqrt{2}t$-spanner in a simple polygon with
complexity $O(mn^{1/t} + n\log^2 n)$, for any integer $t\geq 1$. By relaxing the restriction of links being shortest paths to any path between two sites, they obtain, for any constant $\eps \in (0,2t)$, a \emph{relaxed} geodesic $(2t+\eps)$-spanner in a simple polygon, or a relaxed geodesic $6t$-spanner in a polygon with holes, of the same complexity. These complexity bounds are still relatively high. 
De Berg, van Kreveld, and Staals~\cite{complexity_spanners} also
show that these results are almost tight. 
In particular, for sites in
a simple polygon, any geodesic $(3-\eps)$-spanner has worst-case complexity
$\Omega(nm)$, and for any constant $\varepsilon \in (0,1)$ and integer
constant $t \geq 2$, a $(t-\varepsilon)$-spanner has worst-case complexity
$\Omega(mn^{1/(t-1)} + n)$.

\subparagraph{Problem Statement.} A very natural question is then if
we can reduce the total complexity of a geodesic spanner by allowing
\emph{Steiner points}. That is, by adding an additional set \steiner of
$k$ vertices in \G, each one corresponding to a (Steiner) point in
$P$. For the original sites $p,q \in S$ we still require that their
distance in \G is at most $t$ times their distance in $P$, but the graph distance from a Steiner
point $p' \in \steiner$ to any other site is unrestrained. Allowing
for such Steiner points has proven to be useful in reducing the
weight~\cite{bhore21light_euclid_stein_spann_plane,Elkin_low_light_Steiner} and size~\cite{Le19truly_opt_Euclidean} of spanners. In our setting, it
allows us to create additional ``junction'' vertices, thereby allowing
us to share high-complexity subpaths. See
Figure~\ref{fig:example_Steiner_spanner} for an illustration. Indeed,
if we are allowed to turn every polygon vertex into a Steiner point,
Clarkson~\cite{Clarkson87_approx_sp} shows that, for any $\eps > 0$,
we can obtain a $(1+\eps)$-spanner of complexity
$O((n+m)/\eps)$. However, the number of polygon vertices $m$ may be
much larger than the number of Steiner points we can afford. Hence, we
focus on the scenario in which the number of Steiner points $k$ is
(much) smaller than $m$ and~$n$.

\begin{figure}
    \centering
    \includegraphics{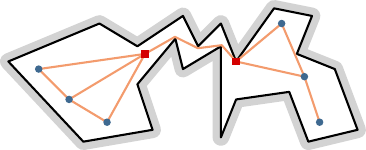}
    \caption{A spanner in a simple polygon that uses two Steiner points (red squares). By adding the two Steiner points, the spanner has a small spanning ratio and low complexity, as we no longer need multiple links that pass through the middle section of~$P$.}
    \label{fig:example_Steiner_spanner}
\end{figure}

\subparagraph{Our Contributions.} Surprisingly, we show that in this
setting, Steiner
points have only limited utility. In some cases, even a single Steiner point allows us to improve the complexity by a linear factor. However, we show that such improvements are not possible in general. First of all, we show that computing
a minimum cardinality set of Steiner points for sites in a polygonal
domain that allow for a $3$-spanner of a certain complexity is NP-hard. Moreover, we show that there is a set of $n$ sites
in a simple polygon with $m=\Omega(n)$ vertices for which any
$(2-\eps)$-spanner (with $k < n/2$ Steiner points) has complexity $\Omega(mn^2/k^2)$. Similarly, we give a $\Omega(mn/k)$ and 
$\Omega(mn^{1/(1+t)}/k^{1/(1+t)})$ lower bound on the complexity of a
$(3-\eps)$- and $(t-\eps)$-spanner with $k$ Steiner points. Hence, these results dash
our hopes for a near linear complexity spanner with ``few'' Steiner
points and constant spanning ratio.

These lower bounds actually hold in a more restricted setting. Namely,
when the metric space is simply an edge-weighted tree that has $m$
vertices, and the $n$ sites are all placed in leaves of the tree.  In
this setting, we show that we can efficiently construct a spanner
whose complexity is relatively close to optimal. In particular, our
algorithm constructs a $2t$-spanner of complexity
$O(mn^{1/t}/k^{1/t}+n\log(n/k))$. The main idea is to partition the
tree into $k$ subtrees of roughly equal size, construct a $2t$-spanner
without Steiner points on each subtree, and connect the spanners of
adjacent trees using Steiner points. The key challenge that we tackle,
and one of the main novelties of the paper, is to make sure that each
subtree contains only a constant number of Steiner points. We
carefully argue that such a partition exists, and that we can
efficiently construct it. Constructing the spanner takes
$O(n\log(n/k)+m+K)$ time, where $K$ is the output complexity. This
output complexity is either the size of the spanner ($O(n\log(n/k))$),
in case we only wish to report the endpoints of the links, or the
complexity, in case we wish to explicitly report the shortest paths
making up the links. An extension of this algorithm allows us to deal
with a forest as well.

This algorithm for constructing a spanner on an edge-weighted tree
turns out to be the crucial ingredient for constructing low-complexity
spanners for point sites in polygons. In particular, given a set of
sites in a simple polygon $P$, we use some of the techniques developed
by de Berg, van Kreveld, and Staals~\cite{complexity_spanners} to
build a set of trees whose leaves are the sites, and in which the
distances in the trees are similar to the distances in the polygon. We
then construct a $2t$-spanner with $k$ Steiner points on this forest
of trees using the above algorithm, and argue that this actually
results into a $2\sqrt{2}t$-spanner with respect to the distances in
the polygon. The main challenge here is to argue that the links used
still have low complexity, even when they are now embedded in the
polygon. We prove that the spanner (with respect to the polygon) has
complexity $O(mn^{1/t}(\log k)^{1+1/t}/k^{1/t} + n\log^2 n)$, and can
be constructed in time $O(n\log^2 n + m\log n + K)$. If we allow a link in the spanner to be any path between two sites (or Steiner points), then we obtain for any constant $\eps \in (0,2k)$ a relaxed $(2t+\eps)$-spanner of the same complexity. For $k = O(1)$
our spanners thus match the results of de Berg, van Kreveld, and
Staals~\cite{complexity_spanners}. Finally, we extend these results to
polygonal domains, where we construct a similar complexity relaxed $6t$-spanner in $O(n\log^2n + m \log n \log m + K)$ time.

\subparagraph{Organization.} We start with our results on
edge-weighted trees in
Section~\ref{sec:Steiner_spanners_for_trees}. To get a feel for the
problem, we first establish lower bounds on the spanner complexity in Section~\ref{sub:tree_lower_bounds}. In Section~\ref{sub:tree_spanner} we present the
algorithm for efficiently constructing a low complexity $2t$-spanner.
We extend this result to a forest in Section~\ref{sub:forest_spanner}. 
In Section~\ref{sec:spanners_in_a_simple_polygon}, we show
how to use these results to obtain a $2\sqrt{2}t$-spanner for sites in
a simple polygon $P$. In Section~\ref{sec:polygonal_domain} we further extend our algorithms to the most
general case in which~$P$ may even have holes. 
In Section~\ref{sub:np-hardness} we show that computing a minimum
cardinality set of Steiner points with which we can simultaneously
achieve a particular spanning ratio and maximum complexity is
NP-hard. Finally, in Section~\ref{sec:future_work} we pose some remaining open questions. 

\section{Steiner spanners for trees}
\label{sec:Steiner_spanners_for_trees}

In this section, we consider spanners on an edge-weighted rooted tree $\T$. We allow only positive weights. The goal is to construct a $t$-spanner on the leaves of the tree that uses $k$ Steiner points, i.e. the set of sites $S$ is the set of leaves. 
We denote by $n$ the number of leaves and by $m$ the number of vertices in $\T$. The complexity of a link between two sites (or Steiner points) $p,q \in \T$ is the number of edges in the shortest path $\pi(p,q)$, and the distance $d(p,q)$ is equal to the sum of the weights on this (unique) path. We denote by~$\T(v)$ the subtree of $\T$ rooted at vertex~$v$. For an edge $e \in \T$ with upper endpoint~$v_1$ (endpoint closest to the root) and lower endpoint~$v_2$, we denote by $\T(e) := \T(v_2) \cup \{e\}$ the subtree of $T$ rooted at $v_1$.

The Steiner points are not restricted to the vertices of $T$, but can lie anywhere on the tree. To be precise, for any $\delta \in (0,1)$ a Steiner point $s$ can be placed on an edge $(u,v)$ of weight $w$. This edge is then replaced by two edges $(u,s)$ and $(s,v)$ of weight $\delta w$ and $(1-\delta)w$.
Observe that this increases the complexity of a spanner on $T$ by at most a constant factor as long as there are at most a constant number of Steiner points placed on a single edge. The next lemma states that it is indeed never useful to place more than one Steiner point on the interior of an edge.
\begin{lemma}\label{lem:steiner_interior_edge}
    If a $t$-spanner $\G$ of a tree $T$ has more than one Steiner point on the interior of an edge $e = (u,v)$, then we can modify $\G$ to obtain a $t$-spanner $\G'$ that has no Steiner points on the interior of $e$ without increasing the complexity and number of Steiner points.
\end{lemma}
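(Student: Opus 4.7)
\begin{proofsketch}
The plan is to remove every Steiner point from the interior of $e$ and to compensate with at most two new Steiner points, $s_u$ and $s_v$, placed at $u$ and $v$ respectively. Since the hypothesis gives $j\geq 2$ interior Steiner points on $e$, and since we reuse an existing vertex of $\G$ whenever $u$ or $v$ already hosts one, this replacement cannot increase the Steiner point count. I would then build $\G'$ by inheriting every link of $\G$ that avoids the interior of $e$; every link $(s_i,x)$ incident to an interior Steiner point $s_i$ is redirected to $(s_u,x)$ if the $T$-path from $s_i$ to $x$ passes through $u$, and to $(s_v,x)$ otherwise; finally, a single link $(s_u,s_v)$ is added to service links between two interior Steiner points and any spanner path that crosses $e$ through its interior.

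To verify the spanner property, I would take, for each pair of sites $p,q \in S$, a shortest path $\pi$ from $p$ to $q$ in $\G$ and transform it into a walk in $\G'$ by replacing each maximal sub-walk of $\pi$ lying in the interior of $e$ by one of the patterns $a\to s_u\to a'$, $b\to s_v\to b'$, or $a\to s_u\to s_v\to b$, according to the sides of $e$ on which the sub-walk enters and exits. A straightforward triangle-inequality computation along the line segment $e$ shows that each replacement is at least as short as the original, so $d_{\G'}(p,q)\leq d_{\G}(p,q)\leq t\cdot d(p,q)$.

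For the complexity bound I would argue that each redirected link saves at least the subdivided edge of $e$ separating the old $s_i$ from the endpoint it is now anchored at, while the inherited links whose shortest paths cross $e$ only become simpler because $e$ is subdivided into fewer pieces in $\G'$. These savings comfortably absorb the single unit of complexity contributed by the new link $(s_u,s_v)$.

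The step I expect to require the most care is the bookkeeping in corner cases: when $u$ or $v$ already coincides with an existing site or Steiner point of $\G$ (so we must not double count), and when the interior Steiner points of $e$ form an isolated component of $\G$. In the latter case we simply delete them and all their incident links without introducing $s_u$ or $s_v$, which still satisfies the lemma.
\end{proofsketch}
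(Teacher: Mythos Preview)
Your approach is essentially the same as the paper's: both replace the interior Steiner points by Steiner points at $u$ and $v$, redirect each link incident to an interior Steiner point to the endpoint of $e$ through which its tree path exits, add a single link $(u,v)$, and then verify the spanning ratio by rerouting paths and the complexity bound by observing that each redirected link loses at least one edge while only one new edge is introduced. The paper handles your ``isolated component'' corner case by simply assuming up front that every Steiner point lies on some spanner path between sites (otherwise it can be deleted), and it does not bother with the observation that unmodified links crossing $e$ also get cheaper---the savings from the redirected links alone already suffice.
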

\begin{proof}
    Let $\steiner$ denote the set of Steiner points of $\G$ and let $\steiner(e) \subseteq \steiner$ the subset of Steiner points that lie on $e$. We assume that each Steiner point is used by a path $\pi_\G(p,q)$ for some sites $p,q$, otherwise we can simply remove it. We define the set of Steiner points of $\G'$ as $\steiner'=(\steiner \setminus \steiner(e)) \cup \{u,v\}$. Observe that $|\steiner'| \leq |\steiner|$. To obtain $\G'$, we replace each link $(p,s)$ with $s \not\in \steiner'$ by $(p,u)$ if $(p,s)$ intersects $u$ and by $(p,v)$ if $(p,s)$ intersects $v$. Links between Steiner points on $e$ are simply removed. Finally, we add the link $(u,v)$ to $\G'$.
    
    We first argue that the spanning ratio of $\G'$ is as most the spanning ratio of $\G$. Consider a path between two sites $p,q$ in $\G$. If this path still exists in $\G'$, then $d_\G(p,q) = d_{\G'}(p,q)$. If not, then the path must visit $e$. Let $(p_1,s_1)$ and $(p_2, s_2)$ denote the first and last link in the path that connect to a Steiner point in the interior of $e$ (possibly $s_1 = s_2$). If $\pi(p,q)$ does not intersect the open edge $e$, then these links are replaced by $(p_1,u)$ and $(p_2,u)$ (or symmetrically by $(p_1,v)$ and $(p_2,v)$) in $\G'$.
    This gives a path in $\G'$ via $u$ such that $d_{\G'}(p,q) < d_\G(p,q)$. If $\pi(p,q)$ does intersect $e$, i.e. $p$ and $q$ lie on different sides of $e$, then, without loss of generality, the links $(p_1,s_1)$ and $(p_2, s_2)$ are replaced by $(p_1,u)$, $(u,v)$, and $(p_2, v)$. Again, this gives a path in $\G'$ such that $d_{\G'}(p,q) \leq d_\G(p,q)$.
    
    Finally, what remains is to argue that the complexity of the spanner does not increase. Each link that we replace intersects either $u$ or $v$, thus replacing this link by a link up to $u$ or $v$ reduces the complexity by one. Because each Steiner point on $e$ occurs on at least one path between sites in $\G$, we replace at least two links. This decreases the total complexity by at least two, while including the edge $(u,v)$ increases the complexity by only one.
\end{proof}

\begin{corollary}\label{cor:interior_steiner_points}
    Any spanner $\G$ on a tree $T$ can be modified without increasing the spanning ratio and complexity such that no edge contains more than one Steiner point in its interior.
\end{corollary}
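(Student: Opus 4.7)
The plan is a straightforward iterative application of Lemma~\ref{lem:steiner_interior_edge}. I would process the edges of $T$ one by one: while there exists some edge $e$ that contains more than one Steiner point in its interior, invoke the lemma on $e$ to produce a modified spanner $\G'$ in which $e$ has zero interior Steiner points, and then continue with the next offending edge.

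The key observation that makes this iteration safe is that Lemma~\ref{lem:steiner_interior_edge} only alters Steiner points lying on the specific edge $e = (u,v)$ that is being processed. All interior Steiner points of $e$ are removed, and the only new Steiner points introduced by the lemma are placed at the tree vertices $u$ and $v$ themselves. Since $u$ and $v$ are vertices of $T$, they do not lie in the interior of any edge, so the modification does not create fresh violations on other edges. Hence each iteration strictly decreases the number of edges that still contain multiple interior Steiner points, so the process terminates after at most $|E(T)|$ steps.

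Finally, the properties required by the corollary follow from the invariants of the lemma: each invocation guarantees that the spanning ratio does not grow (the argument in the lemma shows $d_{\G'}(p,q) \le d_{\G}(p,q)$ for all pairs of sites) and the complexity does not grow. Composing these inequalities across all iterations yields that the resulting spanner has spanning ratio and complexity no larger than the original $\G$, while satisfying the structural condition that no edge contains more than one Steiner point in its interior. The only mildly delicate point to check, and essentially the whole content of the argument, is the claim that modifications on different edges do not interfere, which is immediate from the fact that the newly added Steiner points are vertices rather than interior edge points.
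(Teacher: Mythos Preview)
Your proposal is correct and is exactly the intended derivation: the paper states the corollary immediately after Lemma~\ref{lem:steiner_interior_edge} without proof, treating it as the obvious consequence obtained by applying the lemma edge by edge. Your observation that the new Steiner points land at tree vertices $u,v$ and hence cannot reintroduce violations on other edges is precisely what makes the iteration well-defined, and the monotonicity of spanning ratio and complexity across iterations is exactly as you describe.
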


\subsection{Complexity lower bounds}
\label{sub:tree_lower_bounds}
In this section, we provide several lower bounds on the worst-case complexity of any $(t-\eps)$-spanner that uses $k$ Steiner points, where $t$ is an integer constant and $\eps \in (0,1)$. 
When Steiner points are not allowed, any $(2-\eps)$-spanner in a simple polygon requires $\Omega(n^2)$ edges~\cite{SpannerPolygonalDomain} and $\Omega(mn^2)$ complexity. If we allow a larger spanning ratio, say $(3-\eps)$ or even $(t-\eps)$, the worst-case complexity becomes $\Omega(mn)$ or $\Omega(mn^{1/(t-1)})$, respectively~\cite{complexity_spanners}. As the polygons used for these lower bounds are very tree-like, these bounds also hold in our tree setting. Next, we show how much each of these lower bounds is affected by the use of $k$ Steiner points.

    \begin{figure}
        \centering
        \includegraphics[page=11]{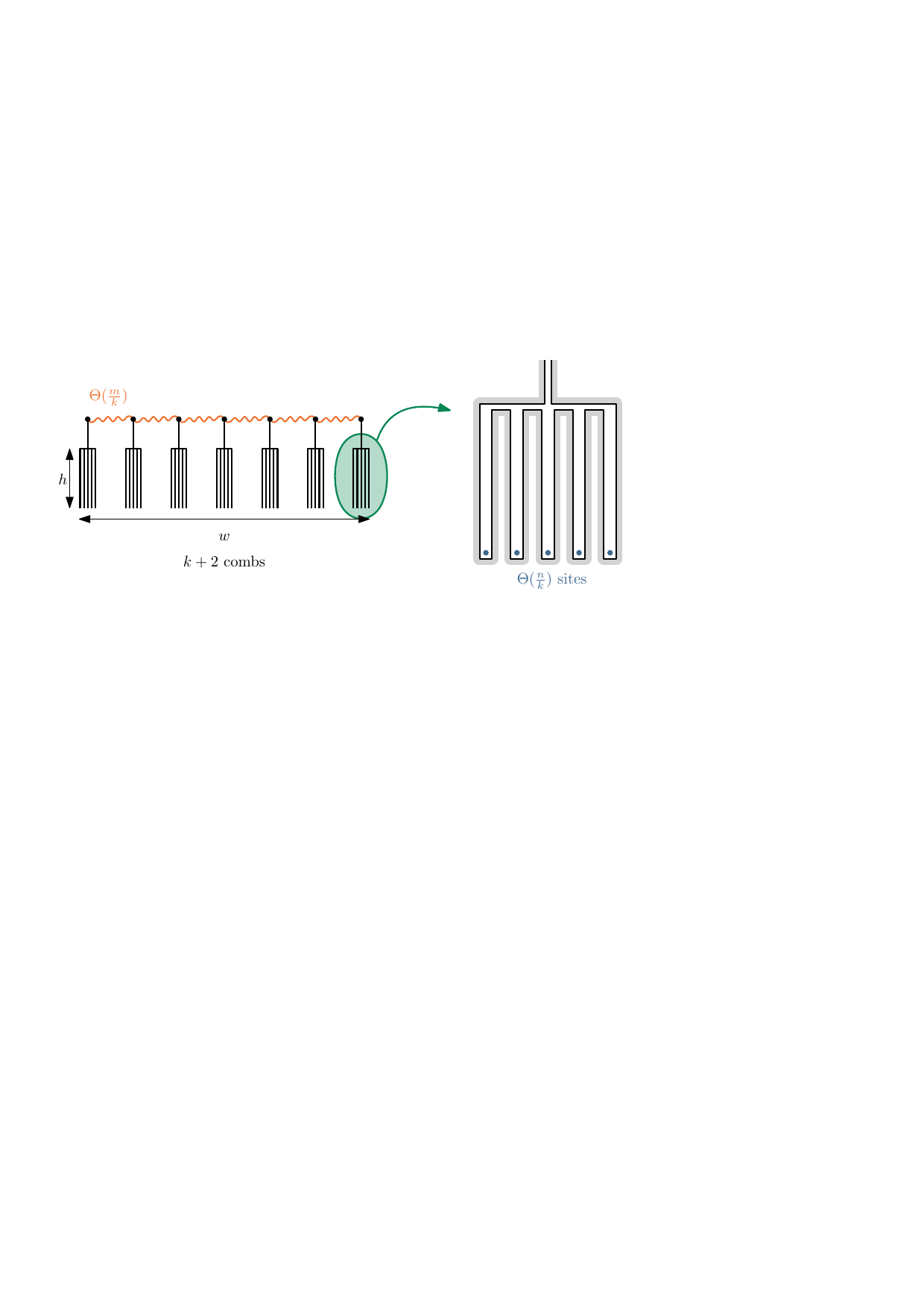}
        \caption{\textbf{\textsf{(a)}} Our construction for an $\Omega(mn^2/k^2)$ lower bound on the complexity of any $(2-\eps)$-spanner. \textbf{\textsf{(b)}} A more detailed version of the comb of a pitchfork highlighted in the orange disk, which is also used for our $\Omega(mn^{1/(t+1)}/k^{1/(t+1)})$ lower bound on the complexity of any $(t-\eps)$-spanner. 
        }
        \label{fig:lower-bound-2-spanner}
    \end{figure}

\begin{restatable}{lemma}{lowerboundtwospanner}
  \label{lem:lowerbound_2spanner}
  For any constant $\eps \in (0,1)$, there exists an edge-weighted tree $\T$ for which any $(2-\varepsilon)$-spanner using $k < n/2$ Steiner points has complexity~$\Omega(mn^2/k^2)$.
\end{restatable}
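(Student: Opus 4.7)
I plan to prove the lower bound by a pigeonhole argument that reduces the $k$-Steiner case to the known no-Steiner $(2-\eps)$-spanner lower bound applied locally on many disjoint substructures. The tree $\T$ will be built by attaching $2(k+1)$ disjoint ``pitchforks'' to a common root $r_0$. Each pitchfork consists of a long handle of $\Theta(m/k)$ edges followed by a ``comb'' gadget: a subtree on $\Theta(m/k)$ vertices hosting $\Theta(n/k)$ sites. The comb is a scaled copy of the no-Steiner lower-bound gadget from \cite{complexity_spanners}, so that any no-Steiner $(2-\eps)$-spanner on the sites of one comb alone has complexity $\Omega((m/k)(n/k)^2) = \Omega(mn^2/k^3)$. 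I will choose edge weights so that the largest within-comb pairwise distance $d_0$ satisfies $(2-\eps)\,d_0 < 2m/k$, while each handle contributes weight at least $m/k$.

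The key geometric observation is an isolation property: for two sites $p,q$ in the same comb, any path that leaves their pitchfork must traverse the handle at least twice, incurring length at least $2m/k > (2-\eps)\,d_0$, which is forbidden in a $(2-\eps)$-spanner. Therefore, Steiner points placed outside a given pitchfork cannot be used to shortcut any within-comb pair of that pitchfork. Since there are $2(k+1)$ pitchforks but only $k$ Steiner points, by pigeonhole at least $k+2$ pitchforks are entirely Steiner-free. In each Steiner-free pitchfork, the restriction of the spanner to its comb-sites is an honest no-Steiner-point $(2-\eps)$-spanner, so by the comb's design it contributes $\Omega(mn^2/k^3)$ complexity. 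Summing over the $\Omega(k)$ Steiner-free pitchforks yields the claimed $\Omega(mn^2/k^2)$ bound.

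The main obstacle will be the design of the comb gadget itself. I need an edge-weighted subtree on $\Theta(m/k)$ vertices that hosts $\Theta(n/k)$ sites whose pairwise distances are small compared to $m/k$, yet for which a no-Steiner $(2-\eps)$-spanner is forced to include $\Omega((n/k)^2)$ direct links, each of combinatorial complexity $\Theta(m/k)$. This is essentially the tree analogue of the polygonal ``thin passage'' gadget underlying the $\Omega(mn^2)$ lower bound of \cite{complexity_spanners}; the delicate combinatorial step is arranging edge weights together with edge subdivisions so that, simultaneously, (i) distances inside the comb are small enough that any detour through another site exceeds $(2-\eps)\,d_0$, (ii) each forced direct link traverses $\Theta(m/k)$ edges of the comb, and (iii) the whole comb still fits within the $\Theta(m/k)$ vertex budget needed for the pigeonhole step.
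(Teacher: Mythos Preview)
Your proposal is correct and follows essentially the same pigeonhole-on-pitchforks approach as the paper. A few remarks on where the paper's execution is leaner than yours.

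First, you give each handle $\Theta(m/k)$ edges and weight $\geq m/k$, but this conflates complexity with weight. The isolation property only needs the handle to have large \emph{weight}; the paper makes each handle a single edge of weight~$2$, teeth single edges of weight~$1$, and puts all $\Theta(m/k)$ vertices into the comb backbone with negligible total weight. Decoupling weight from edge count in this way dissolves what you call the ``main obstacle'': the comb is just $\Theta(n/k)$ unit-weight teeth hanging off a high-complexity, near-zero-weight backbone, so every within-comb distance is $\approx 2$ while links across the backbone have complexity $\Theta(m/k)$.

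Second, you invoke the no-Steiner $(2-\eps)$ lower bound on each comb as a black box, whereas the paper argues directly: since any detour through a third site costs $\approx 4 > (2-\eps)\cdot 2$, the induced spanner on a Steiner-free pitchfork must contain the complete graph on its $\Theta(n/k)$ sites, giving $\Omega((n/k)^2)$ links, a constant fraction of which traverse $\Omega(m/k)$ backbone vertices. This direct argument also makes your point~(ii) in the last paragraph immediate rather than ``delicate''.
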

\begin{proof}
    The tree~$T$ in our construction is a star of `pitchforks', with long handles and short teeth, that attach at the handles. See Figure~\ref{fig:lower-bound-2-spanner}(a) and (b) for an illustration of the construction. Precisely, $\T$ consists of $2k$ pitchforks, that have handles of weight~$2$, a comb of weight-$1$ teeth, and all other edges have much smaller weight, defined later. All pitchforks have an (almost) equal number of teeth, at the end of which are the sites, and the top of each comb has high complexity, as in Figure~\ref{fig:lower-bound-2-spanner}(b), leading to $\Theta(n/k)$ sites and $\Theta(m/k)$ vertices per pitchfork. 

    For a site~$p$ in one pitchfork, consider the distance to a site~$q$ in the same pitchfork. First, observe that for every~$\eps$, we can set the remaining weights to be small enough such that $d(p,q)$ approaches $2$. 
    It follows that the path from~$p$ to~$q$ in any $(2-\eps)$-spanner~$\G$ does not visit any other site, as this would result in $d_\G(p,q)$ violating the spanning ratio. 
    
   To analyze the complexity of a $(2-\eps)$-spanner~$\G$, consider the at least $k$ pitchforks that do not contain a Steiner point (a possible Steiner point at the center of the star is not part of any pitchfork).
   Inside each of these pitchforks, the complete graph on the sites must be a subgraph of $\G$. We thus have $\Omega(n^2/k^2)$ links of complexity $\Omega(m/k)$. 
    As there are $\geq k$ of these pitchforks, this leads to a Steiner spanner with~$\Omega(mn^2/k^2)$ complexity.
\end{proof}

\begin{restatable}{lemma}{lowerboundthreespanner}
  \label{lem:lowerbound_3spanner}
  For any constant $\eps \in (0,1)$, there exists an edge-weighted tree $\T$ for which any $(3-\varepsilon)$-spanner using $k < n/2$ Steiner points has complexity $\Omega(mn/k)$.
\end{restatable}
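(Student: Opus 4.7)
\begin{proofsketch}
The plan is to adapt the star-of-pitchforks construction of Lemma~\ref{lem:lowerbound_2spanner} so that each pitchfork realizes a per-pitchfork $(3-\eps)$-spanner complexity lower bound of $\Omega(nm/k^2)$. I would take $2k$ pitchforks joined at a central vertex, with handles of weight~$2$ and teeth of weight~$1$ carrying the sites at their leaves. The comb of each pitchfork, detailed in Figure~\ref{fig:lower-bound-2-spanner}(b), is a long path on $\Theta(m/k)$ vertices of negligible total weight, along which the $n' := n/(2k)$ teeth attach at evenly spaced positions so that consecutive teeth are separated by $\Theta(m/n)$ comb edges.

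First I would replay the distance-gap argument from Lemma~\ref{lem:lowerbound_2spanner}: by taking the remaining edge weights small enough, two sites in the same pitchfork lie at distance arbitrarily close to~$2$, while any vertex (site or Steiner) outside that pitchfork is at distance at least $1 + 2 = 3$ from every site in it. Hence any spanner-path between two sites of the same pitchfork that leaves the pitchfork has length at least $6 > (3-\eps) \cdot 2$, forcing such paths to stay inside their pitchfork. By pigeonhole, at least $k$ of the $2k$ pitchforks contain no Steiner points, and the spanner restricted to each such pitchfork is itself a $(3-\eps)$-spanner on its $n'$ sites.

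Next I would argue that this restricted spanner has complexity $\Omega(nm/k^2)$. Every link has length at least~$2$, so the $(3-\eps)$-spanning property allows at most two consecutive links per pair; equivalently, the spanner graph on the $n'$ sites has diameter at most~$2$. For a pair with comb indices $i$ and~$j$, the corresponding at-most-$2$-hop path therefore has complexity at least $|i - j|\,m/n$ by the triangle inequality on comb positions. Summing over all $\binom{n'}{2}$ pairs yields an aggregate $\Omega(n'^3\, m/n)$ of path complexities. Since each link appears as a hop in at most $2n' - 1$ such pair-paths, this aggregate is also at most $O(n')$ times the total link complexity in the pitchfork. Dividing gives complexity $\Omega(n'^2\, m/n) = \Omega(nm/k^2)$ per Steiner-free pitchfork, and summing over the $\geq k$ such pitchforks yields the claimed $\Omega(nm/k)$.

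The main obstacle is this last hop-charging argument, which crucially depends on both the diameter-$2$ bound (so that only $\leq 2$-hop paths need to be charged) and the spread-out tooth placement (so that $\sum_{i < j} |i - j|\,m/n$ indeed grows as $\Theta(n'^3\,m/n)$ rather than collapsing, as it would if the teeth were clustered near the handle).
\end{proofsketch}
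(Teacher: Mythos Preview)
Your argument is correct, and it takes a genuinely different route from the paper's. The paper does not reuse the star-of-pitchforks construction; instead it builds a \emph{linear chain} of $2k+1$ combs connected by high-complexity corridors of $\Theta(m/k)$ edges each, with all teeth of large weight $h$ and all other weights negligible. After placing $k$ Steiner points, at least $k+1$ combs are farther than half a corridor from every Steiner point, and the paper then argues directly that $\Omega(n)$ links must leave their comb (via a short case analysis on whether every site in such a comb has an outgoing link), each contributing $\Omega(m/k)$ complexity.

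Your approach localizes the complexity \emph{inside} each pitchfork rather than in connecting corridors, and replaces the link-counting case analysis by the aggregate/charging argument: sum $|i-j|\,m/n$ over all $\binom{n'}{2}$ pairs to get $\Theta(n'^3 m/n)$, then divide by the $O(n')$ multiplicity with which any link can appear in a $2$-hop path. This is clean and entirely self-contained; in particular, it yields the tight $\Omega(n'^2)$ bound for hop-diameter-$2$ spanners of $\vartheta_{n'}$ without invoking any external result. The paper's linear-chain construction, on the other hand, is what feeds directly into the $(t-\eps)$ lower bound of Lemma~\ref{lem:lowerbound_tspanner}, where the $\vartheta_n$ correspondence and the Dinitz--Elkin--Solomon bound become the workhorses; your star decomposition would need a different per-pitchfork argument to extend to general~$t$.
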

\begin{proof}
    \begin{figure}
        \centering
        \includegraphics[page=3]{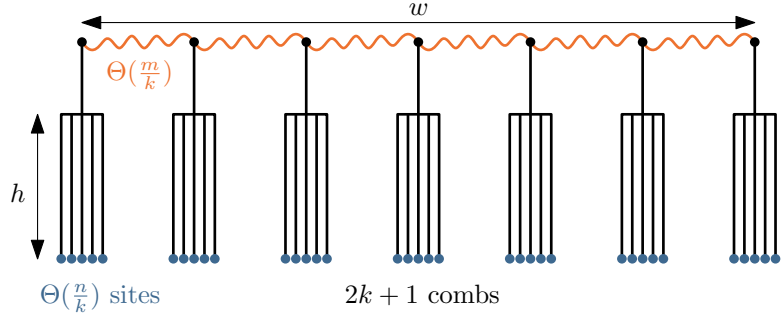}
        \caption{Our construction for a $\Omega(nm/k)$ lower bound on the complexity of any $(3-\eps)$-spanner.
        }
        \label{fig:lower-bound-3-spanner-div-k}
    \end{figure}
    We generalize the construction for the lower bound on the complexity of any $(3-\eps)$-spanner~\cite{complexity_spanners}, but instead of a polygon we work with a more restrictive setting of an edge-weighted tree. In the $(3-\eps)$ lower bound, the polygon consists of `combs' with long thin teeth, and we use a congruent structure in our trees. See Figure~\ref{fig:lower-bound-3-spanner-div-k} for an illustration of the construction.
    Our tree~$\T$ consist of a sequence of $2k+1$ of these combs,
    which are connected by their short handles in series using high complexity paths, which we will refer to as corridors. 
    Let $w$ be the total weight of the corridors.
    Remember that the spanner will be constructed on the leaves of~$\T$, which are at the bottom of each tooth of each comb. For each tooth we define the edge weight as $h \gg w$. All other edges have negligible weight.
    When $w$ approaches $0$, the distance between any two sites $p,q$ approaches $2h$. 
    So, in any $(3-\eps)$-spanner $\G$ there can be at most one other site on the path from $p$ to $q$.
    
    Let $M = \Theta(m/k)$ denote the complexity of a single corridor. 
    For any particular Steiner point, the set of sites that can be reached by a path of complexity $<M/2$ cannot lie in different combs.
    Hence, after placing $k$ Steiner points, at least $k+1$ of the $2k+1$ combs have no site that lies within complexity $<M/2$ of any Steiner point.

    Consider the $k+1$ combs that have no path of complexity $<M/2$ to any Steiner point. 
    Let $S_1,\dots,S_{k+1}$ be the sets of sites that lie in these $k+1$ combs, respectively. Observe that $|S_i| = \Theta(n/k)$.
    We say that a link with one endpoint at a site in $S_i$ \emph{leaves its comb} if its other endpoint is either a site in $S\setminus S_i$, or a Steiner point.
    Any link that leaves its comb has complexity $\Omega(M)$.

    We will show that there are $\Omega(n)$ links that leave their comb.
    There are two cases: 
    (1)~every site in $\bigcup_i S_i$ has a link that leaves its comb, 
    or (2)~for some $i$ and some site $p\in S_i$, the site $p$ has no link that leaves its comb.
    In case~(1), there are at least $\frac{1}{2}\sum_{i=1}^{k+1} |S_i|=\Omega((k+1)\frac{n}{2k+1})=\Omega(n)$ links that leave a comb.
    In case~(2), consider a site $q$ in $\bigcup_{j\neq i} S_j$. The path from $q$ to $p$ can visit at most one other site. Since $p$ has no link that leaves its comb, $q$ must have a link to a site in $S_i$, which leaves its comb.
    There are at least $\sum_{j\neq i}|S_j|=\Omega(k\frac{n}{2k+1})=\Omega(n)$ such links.
    In both cases, we have $\Omega(n)$ links of complexity $\Omega(M) = \Omega(m/k)$, so the total complexity is~$\Omega(mn/k)$.
\end{proof}

\begin{lemma}
\label{lem:lowerbound_tspanner}
For any constant $\eps \in (0,1)$ and integer constant $t \geq 2$, there exists an edge-weighted tree $\T$ for which any $(t-\varepsilon)$-spanner using $k < n$ Steiner points has complexity $\Omega(mn^{1/(t+1)}/k^{1/(t+1)})$.
\end{lemma}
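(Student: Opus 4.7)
The plan is to generalize the chain-of-combs construction and argument of Lemma~\ref{lem:lowerbound_3spanner} from the $(3-\varepsilon)$ case to general $t$. I take $\T$ to be a series of $\Theta(k)$ combs connected by corridors, each comb containing $\Theta(n/k)$ sites at the tips of weight-$h$ teeth and with total complexity $\Theta(m/k)$; the remaining edges receive negligibly small weight $w$, chosen so that the distance between any two sites in $\T$ approaches $2h$ as $w\to 0$. The first two steps then follow Lemma~\ref{lem:lowerbound_3spanner} almost verbatim: a pigeonhole step identifies $\Omega(k)$ \emph{uncovered} combs (those containing no Steiner point within complexity $M/2 = \Theta(m/k)$ of any of their sites), and since every spanner link has weight $\geq 2h$ while any $(t-\varepsilon)$-spanner path has total weight at most $2h(t-\varepsilon)$, every such path consists of at most $t-1$ links and therefore passes through at most $t-2$ intermediate vertices (sites or Steiner points).

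The heart of the proof is a combinatorial counting step that generalizes the two-case dichotomy of Lemma~\ref{lem:lowerbound_3spanner}. For each uncovered comb $C$, I bound from below the number of \emph{leaving} links -- spanner links with one endpoint in $C$ and the other endpoint in a different comb or at a Steiner point -- each of which has complexity $\Omega(m/k)$. Fix a site $p\in C$: the depth-$(t-1)$ neighborhood of $p$ in the spanner must contain a vertex from every other uncovered comb. A branching/volume argument that balances the BFS depth $t-1$ against the branching factor at each level and against the number $k$ of Steiner-point hubs forces at least $\Omega((n/k)^{1/(t+1)})$ leaving links incident to the sites of each uncovered comb. Summing over the $\Omega(k)$ uncovered combs, with complexity $\Omega(m/k)$ per leaving link, yields the claimed $\Omega(m\,n^{1/(t+1)}/k^{1/(t+1)})$ complexity bound.

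The main obstacle is precisely this branching argument, because intermediate sites and Steiner points can be reused across many pairs of combs: one must balance the depth $t-1$ against both the branching factor and the Steiner-point count in order to derive the exponent $1/(t+1)$. A naive per-pair accounting, which counts only links that directly separate two uncovered combs, recovers only the weaker $\Omega(mn/k)$ bound of Lemma~\ref{lem:lowerbound_3spanner} and does not scale in $t$; the care needed to extract the extra $t$-dependent factor is what makes the step non-trivial.
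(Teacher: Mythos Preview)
Your proposal is a plan rather than a proof, and its central step is exactly the one you yourself flag as ``the main obstacle'': you assert that a branching/volume argument yields $\Omega((n/k)^{1/(t+1)})$ leaving links per uncovered comb, but you do not carry it out, and it is not clear that it can be carried out in this construction. Two concrete issues. First, your claim that ``every spanner link has weight $\geq 2h$'' is false once Steiner points are present: a link from a site to a Steiner point near the base of a tooth has weight $\approx h$, so the bound of $t-1$ links per spanner path does not follow. Second, even granting a hop bound, your accounting only charges \emph{leaving} links; the internal links of an uncovered comb also carry complexity $\Theta(m/n)$ per corridor, and any argument that ignores this cannot recover the exponent $1/(t+1)$, which is delicate and construction-dependent. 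In short, the step you defer is precisely where all the content lies.

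The paper's proof proceeds quite differently and avoids the branching argument altogether. It takes $\T$ to be a \emph{single} comb with $n$ teeth separated by corridors of complexity $\Theta(m/n)$, and invokes the Dinitz--Elkin--Solomon lower bound (Lemma~\ref{lem:lower_bound_vartheta}) on spanning subgraphs of $\vartheta_n$ with bounded link-radius. Steiner points are handled by observing that they split the sites into $\leq k+1$ contiguous blocks $S_0,\dots,S_k$; the key claim is that the subspanner induced on each $S_i$ (together with its two bounding Steiner points) can be converted, by replacing links to those Steiner points with links to the extreme sites $p_\ell,p_r$ of $S_i$, into a $(t+2-\varepsilon)$-spanner on $S_i$ without increasing complexity. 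Now the black-box bound applies with link-diameter $t+1$, giving $\Omega((m/n)\,|S_i|^{1+1/(t+1)})$ per block; summing and using convexity yields the stated $\Omega(mn^{1/(t+1)}/k^{1/(t+1)})$. The ``$+2$'' in the spanning ratio is exactly what produces the exponent $1/(t+1)$, and it arises from a clean local transformation rather than a global BFS argument.
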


    Before we prove Lemma~\ref{lem:lowerbound_tspanner}, we first discuss a related  result in a simpler metric space.
    Let~$\vartheta_n$ be the 1-dimensional Euclidean metric space with $n$ points $v_1,\dots,v_n$ on the $x$-axis at $1,2,\dots,n$. 
    A link $(v_i,v_j)$ has complexity $|i-j|$. Dinitz, Elkin, and Solomon~\cite{shallow_low_light_trees} give a lower bound on the total complexity of any spanning subgraph of $\vartheta_n$, given that the link-radius is at most $\rho$. The link-radius (called hop-radius in~\cite{shallow_low_light_trees}) $\rho(G,r)$ of a graph $G$ with respect to a root $r$ is defined as the maximum number of links needed to reach any vertex in $G$ from $r$. The link-radius of $G$ is then $\min_{r\in V} \rho(G,r)$. The link-radius is bounded by the link-diameter, which is the minimum number of links that allow reachability between any two vertices.
    \begin{lemma}[Dinitz \etal~\cite{shallow_low_light_trees}] \label{lem:lower_bound_vartheta}
    For any sufficiently large integer $n$ and positive integer $\rho < \log n$, any spanning subgraph of $\vartheta_n$ with link-radius at most $\rho$ has complexity 
    $\Omega(\rho \cdot n^{1+1/\rho})$.
    \end{lemma}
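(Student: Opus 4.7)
\begin{proofsketch}
The plan is to generalize the construction of Lemma~\ref{lem:lowerbound_2spanner} so that Lemma~\ref{lem:lower_bound_vartheta} can be invoked pitchfork by pitchfork. Let $\T$ be a star of $2k$ pitchforks sharing a common centre. Each pitchfork consists of a handle edge of weight $W=2t$ from the centre to one endpoint of a comb-top path of $\Theta(m/k)$ negligible-weight edges, plus $n/(2k)$ teeth of weight~$1$ attached at equidistantly-spaced internal vertices of the comb-top; a site sits at the free end of each tooth. The spacing ensures that the number of tree edges between the tops of the $i$-th and $j$-th teeth is $\Theta(|i-j|\cdot m/n)$ while their total weight remains negligible, so the tree distance between any two sites of a common pitchfork approaches~$2$.

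By pigeonhole, at least $k$ of the $2k$ pitchforks contain no Steiner point; call these \emph{clean}. First I would argue that in any $(t-\varepsilon)$-spanner $\G$ the path $\pi_\G(p,q)$ between two sites $p,q$ of a clean pitchfork uses only links whose endpoints both lie in that pitchfork: we have $d_\G(p,q)\leq 2(t-\varepsilon)<2t$, whereas any link with an endpoint outside the pitchfork has length at least $1+W=2t+1>2t$ and hence cannot appear on $\pi_\G(p,q)$. Since every link inside the pitchfork has length at least~$2$ (up one tooth, across the weight-negligible comb-top, and down another tooth), the path uses at most $t-1$ links, so the sub-spanner induced on the sites of a clean pitchfork has link-diameter, and hence link-radius, at most $t-1\leq t+1$.

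Via the bijection tooth~$i\leftrightarrow v_i$, this sub-spanner is a spanning subgraph of a scaled copy of $\vartheta_{n/(2k)}$ in which each $\vartheta$-unit corresponds to $\Theta(m/n)$ actual tree edges. Applying Lemma~\ref{lem:lower_bound_vartheta} with $\rho=t+1$ yields a $\vartheta$-complexity lower bound of $\Omega\bigl((t+1)(n/(2k))^{1+1/(t+1)}\bigr)$ per clean pitchfork, which translates to true tree-complexity $\Omega\bigl(m(n/k)^{1/(t+1)}/k\bigr)$. Summing over the $k$ clean pitchforks yields the claimed $\Omega\bigl(mn^{1/(t+1)}/k^{1/(t+1)}\bigr)$.

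The main obstacle is arranging the comb-top to be a faithful complexity-preserving copy of $\vartheta_{n/(2k)}$: subdividing each segment between consecutive tooth-tops into $\Theta(m/n)$ negligible-weight edges accomplishes this, and the applicability condition $\rho<\log(n/(2k))$ of Lemma~\ref{lem:lower_bound_vartheta} is met in the asymptotic regime $n\gg 2^t k$, which is the interesting range for constant~$t$.
\end{proofsketch}
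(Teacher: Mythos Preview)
You are proving the wrong statement. Lemma~\ref{lem:lower_bound_vartheta} is a result about the metric space $\vartheta_n$ and makes no reference to trees, Steiner points, or pitchforks; the paper does not prove it at all but simply quotes it from Dinitz, Elkin, and Solomon~\cite{shallow_low_light_trees}. Your sketch instead \emph{uses} Lemma~\ref{lem:lower_bound_vartheta} as a black box to derive the bound $\Omega\bigl(mn^{1/(t+1)}/k^{1/(t+1)}\bigr)$, which is the content of Lemma~\ref{lem:lowerbound_tspanner}. So as a proof of the stated lemma, the proposal is circular: it invokes the very result it is supposed to establish.

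If your intent was actually to prove Lemma~\ref{lem:lowerbound_tspanner}, then your route is genuinely different from the paper's. The paper uses a \emph{single} comb; the $k$ Steiner points partition its sites into consecutive blocks $S_0,\dots,S_k$, and a short replacement argument shows that the induced subspanner on each $S_i$ can be turned into a $(t+2-\eps)$-spanner of no greater complexity, after which Lemma~\ref{lem:lower_bound_vartheta} is applied per block with $\rho=t+1$. You instead build $2k$ disjoint combs (pitchforks) with long handles and use pigeonhole to find $k$ Steiner-free ones; on each such comb you argue directly that the link-diameter is at most $t-1$ and then invoke Lemma~\ref{lem:lower_bound_vartheta}. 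Your construction avoids the replacement argument entirely, at the cost of a larger tree. Note also that your own bound on the link-diameter is $t-1$, not $t+1$; applying Lemma~\ref{lem:lower_bound_vartheta} with $\rho=t-1$ would give $\Omega\bigl(mn^{1/(t-1)}/k^{1/(t-1)}\bigr)$, which is strictly stronger than what either you or the paper states, so you should double-check that step rather than discard it by weakening $t-1$ to $t+1$.
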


\begin{proof}[Proof of Lemma~\ref{lem:lowerbound_tspanner}]
    Consider the tree construction illustrated in Figure~\ref{fig:lower-bound-2-spanner}(b). 
    This edge-weighted tree~$T$ has the shape of a comb of width $w$ and height $h$ with $n$ teeth separated by corridors of complexity $M=\Theta(m/n)$ each.
    Each leaf at the bottom of a comb tooth is a site. 

    Any spanning subgraph of $\vartheta_n$ of complexity $C$ and link-diameter $\rho$ is in one-to-one correspondence with a $(\rho + 1 - \eps)$-spanner of complexity $C \cdot m/n$ in $T$~\cite{complexity_spanners}. 
    Lemma~\ref{lem:lower_bound_vartheta} then implies that any $(t-\eps)$-spanner has worst-case complexity $\Omega(mn^{1/(t-1)})$.

    \begin{figure}
        \centering
        \includegraphics[page=10]{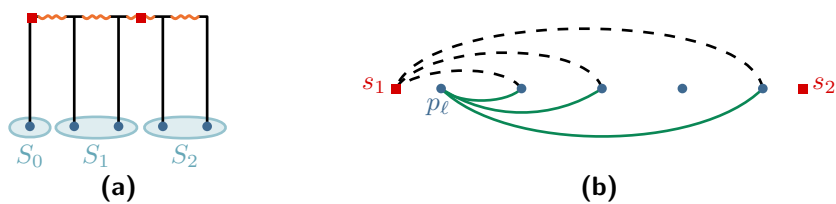} 
        \caption{\textbf{\textsf{(a)}} The sets $S_i$ defined by the two (red square) Steiner points. \textbf{\textsf{(b)}} For any spanner on~$S_1$, every link to a Steiner
          point can be replaced by a link of smaller complexity, while increasing the spanning ratio by at most one. Here, the dashed links can be replaced by the green links.}
        \label{fig:replacing_edges}
    \end{figure}

    When a set $\steiner$ of $k$ Steiner points is introduced, we consider the at most $k+1$ sets $S_0,\dots,S_{k}$ of consecutive sites that have no Steiner point between them; see Figure~\ref{fig:replacing_edges}(a). 
    We can replace any link $(p,q)$ where $p,q \in S \cup \steiner$ and $\pi(p,q)$ intersects a Steiner point $s$ by the links $(p,s)$ and $(s,q)$. Corollary~\ref{cor:interior_steiner_points} implies that this increases the complexity by only a constant factor. From now on, we thus assume there are no such links.
    We claim the following for any $(t-\eps)$-spanner $\G$ on $S = S_0 \cup \dots \cup S_k$.

    \begin{claim}
    Let $C_i$ be the complexity of the subgraph of $\G$ induced by $S_i$
    and the at most two Steiner points $s_{\ell}$ and $s_r$ bounding $S_i$ from the left and right, respectively.
    Then, we can construct a $(t+2-\eps)$-spanner $\G'_i$ on $S_i$ that has complexity at most $C_i$.  
    \end{claim}
    \begin{proof}[Proof of Claim]
    Let $p_\ell$ and $p_r$ denote the leftmost and rightmost site in $S_i$.
    We replace each link $(p,s_{\ell})$ (or $(p,s_r)$), $p \in S_i$, by the link $(p,p_\ell)$ (resp. $(p,p_r)$). 
    If there is a link
    $(s_{\ell},s_r)$, it is replaced by $(p_\ell,p_r)$. 
    Any path in $\G$ between $p,q \in S_i$ that visits
    either $s_{\ell}$ and/or $s_r$ corresponds to a path via $p_r$ and/or
    $p_\ell$ in $\G_i'$. 
    The length of the path increases by
    $2h$
    when visiting $p_r$ or $p_\ell$, so by at most 4$h$ when
    visiting both. 
    As $d(p,q) \geq 2h$, the spanning ratio increases by at most two. 
    \end{proof}

    These changes in the spanner only decrease the complexity of
    the subspanner on $S_i$. 
    Notice also that if we apply them to each of the sets $S_i$, each link of \G is changed by only one of the subspanners $\G'_i$.
    Thus, we consider the minimum complexity of any
    $(t+2-\eps)$-spanner on these sites. By applying
    Lemma~\ref{lem:lower_bound_vartheta}, we find that the worst-case
    complexity of any $(t+2-\eps)$-spanner on these $|S_i|$ sites is
    $\Omega(m/n \cdot |S_i|^{1 + 1/(t+1)})$. 
    The complexity of \G is at least the sum of the complexities of these $\G'_i$ spanners over all $S_i$, so 
    $\frac{m}{n}\sum_{i=0}^{k} \Omega\left(|S_i|^{1 + 1/(t+1)}\right)$, where $\sum_{i=0}^k |S_i| = \Theta(n)$.
    Using a logarithmic transformation and induction, we see that this sum is minimized when $|S_i| = \Theta(n/k)$ for all $i \in
    {0,\dots,k}$.
    So,
    \begin{align*}\label{eq:lower_bound_t+1}
        \frac{m}{n}\sum_{i=0}^{k} \Omega\left(|S_i|^{1 + 1/(t+1)}\right) &\geq \frac{m}{n}\sum_{i=0}^{k} \Omega\left((n/k)^{1 + 1/(t+1)}\right) 
        = \Omega\left(mn^{1/(t+1)}/k^{1/(t+1)}\right). \qedhere
    \end{align*}
\end{proof}

\subsection{A low complexity Steiner spanner}
\label{sub:tree_spanner}
In this section, we describe how to construct low complexity spanners for edge-weighted trees. The goal is to construct a $2t$-spanner of complexity $O(mn^{1/t}/k^{1/t} + n\log n)$ that uses at most $k$ Steiner points. We first show that the spanner construction for a simple polygon of~\cite{complexity_spanners} can be used to obtain a low complexity spanner for a tree (without Steiner points).

\begin{restatable}[de Berg \etal~\cite{complexity_spanners}]{lemma}{tSpannerTree}\label{lem:t-spanner-tree}
    For any integer $t\geq 1$, we can build a $2t$-spanner for a tree $\T$ of size $O(n\log n)$ and complexity $O(mn^{1/t} + n\log n)$ in $O(n\log n + m)$ time.
\end{restatable}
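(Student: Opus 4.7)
The plan is to apply the construction of de Berg, van Kreveld, and Staals~\cite{complexity_spanners}, which was developed for simple polygons based on a balanced geodesic separator, with a centroid decomposition of $\T$ playing the role of that separator. The tree setting is strictly simpler than the polygon setting, which is what allows the $\log^2 n$ factor in the polygon complexity bound to be replaced by $\log n$ here.

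Concretely, I would first compute a centroid decomposition of $\T$ in $O(m + n\log n)$ time, yielding a recursion of depth $O(\log n)$ in which each centroid $c$ splits its subtree $T_c$ into components each containing at most half of the sites. At each centroid $c$, I handle all pairs of sites whose shortest path traverses $c$. For such a pair $(p,q)$, the distance decomposes as $d(p,q) = d(p,c) + d(c,q)$, which reduces the spanning problem at this level to approximating distances in an additive one-dimensional metric. I would then plug in the one-dimensional spanner construction of~\cite{complexity_spanners}: sort the sites in each component by distance to $c$, bucket them into $O(n_c^{1/t})$ classes with geometrically increasing distance ranges of ratio $\Theta(n_c^{1/t})$, pick one representative per class, and connect every site to the representative of every other class on the opposite side of $c$. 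Any cross-$c$ pair in different classes is then spanned via two links with stretch at most $2t$, and pairs in the same class are handled by recursing inside that class and by the recursive centroid decomposition of $T_c$.

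The size of the resulting spanner is $O(n \log n)$, since each site contributes a telescoping number of links across the $O(\log n)$ levels of the centroid hierarchy. For the complexity, a link through centroid $c$ has complexity at most the number of edges in $T_c$, and the bucketing bounds the number of such long links so that the top-level contribution is $O(m n^{1/t})$; contributions at deeper levels fold into the additive $O(n \log n)$ term because the subtree sizes shrink geometrically. The main obstacle is exactly this charging: showing that $\sum_c m_c \cdot n_c^{1/t}$ does not accumulate an additional $\log n$ factor on the $mn^{1/t}$ term. This is the same accounting argument carried out in~\cite{complexity_spanners} and transfers to trees essentially unchanged, since the ``wide'' part of the construction (the one-dimensional bucketed spanner) already appears there. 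The time bound $O(n \log n + m)$ then follows from the cost of computing the centroid decomposition, sorting sites by distance to $c$ at each centroid (by walking a DFS of $T_c$), and emitting the bucketed links.
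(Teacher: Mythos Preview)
Your outer recursion (split the tree on an edge that halves the sites, handle cross-separator pairs, recurse) matches the paper. The gap is in the one-dimensional step. What you describe---bucketing sites by \emph{distance to $c$} into $O(n_c^{1/t})$ geometrically spaced classes and linking every site to every other class's representative---is not the construction of~\cite{complexity_spanners}, and as stated it does not give size $O(n\log n)$: each site receives $\Theta(n_c^{1/t})$ links at each separator, so summing $n\cdot(n/2^i)^{1/t}$ over the $O(\log n)$ separator levels yields $\Theta(n^{1+1/t})$, not $O(n\log n)$. Your ``telescoping'' sentence does not rescue this. There is also no argument for why two links across distance-buckets give stretch $2t$; and, more structurally, distance-based buckets need not form contiguous subtrees of $\T$, so you lose exactly the property that lets you charge link complexity to disjoint pieces of the tree.

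The construction the paper actually invokes partitions the sites by their \emph{in-order traversal position} (not by distance) into a depth-$t$ hierarchy with fan-out $\Theta(n^{1/t})$; in each group the \emph{center} is the site closest to the separator edge, and the only links added go from each group's center to its parent group's center. This yields $O(n)$ links per separator (hence $O(n\log n)$ total), and a path between any cross-separator pair climbs at most $t$ levels up and $t$ levels down, giving stretch $2t$. Crucially, groups that are consecutive in the in-order traversal occupy contiguous subtrees of $\T$ whose total size is $O(m)$ at every hierarchy level, so the complexity contributed at one separator is $O(t\cdot m n^{1/t})=O(mn^{1/t})$; the geometric decay of $n_c$ across separator levels then sums to $O(mn^{1/t})$ without the extra $\log n$, which is the charging you correctly flagged as the delicate point.
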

\begin{proof}
    De Berg, van Kreveld, and Staals consider constructing a spanner for some weighted 1-dimensional space, which is related to a spanner in a polygon. They essentially show how to construct a $2t$-spanner of complexity $O(mn^{1/t} + n \log n)$ on a shortest path tree. Their approach also applies to any edge-weighted rooted tree as we defined here, and we briefly sketch the construction.
    First, an ordering on the sites is defined based on an in-order traversal of the tree. Then, we find an edge $e$ such that roughly half of the sites lie in each subtree after removing $e$. Fix $N = n^{1/t}$. We partition the sites into $\Theta(N)$ groups based on the ordering. Each group is again partitioned into $\Theta(N)$ groups, etc. At each level, we select the site that is closest to $e$ as the group center $c$ for each group, and add a link to the spanner from $c$ to the center of its parent group. This ensures that the spanning ratio holds for any pair of sites $p,q$ with $p$ in the one subtree and $q$ in the other. To also obtain good paths between sites in the same subtree, we recurse on both subtrees.
\end{proof}

\subparagraph{Spanner construction.} Given an edge-weighted tree $T$, to construct a Steiner spanner $\G$ for $T$, we start by partitioning the sites in $k$ sets $S_1,\dots S_{k}$ by an in-order traversal of the tree. The first $\lceil n/k\rceil$ sites encountered are in $S_1$, the second $\lceil n/k\rceil$ in $S_2$, etc. 
After this, the sites are reassigned into $k$ new disjoint sets $S_1',\dots S_k'$. For each of these sets, we consider a subtree $\T_i'\subseteq \T$ whose leaves are the set~$S_i'$. There are four properties that we desire of these sets and their subtrees.
\begin{enumerate}
    \item The size of $S_i'$ is $O(n/k)$.
    \item The trees $\T_i'$ cover $\T$, i.e. $\bigcup_i \T_i' = \T$.
    \item The trees $\T_i'$ are disjoint apart from Steiner points.
    \item Each tree $\T_i'$ contains only $O(1)$ Steiner points.
\end{enumerate}
As we prove later, these properties ensure that we can construct a spanner on each subtree $\T_i'$ to obtain a spanner for $\T$. We obtain such sets $S_i'$ and the corresponding trees $\T_i'$ as follows.

We color the vertices and edges of the tree $\T$ using $k$ colors $\{1,\dots,k\}$ in two steps. In this coloring, an edge or vertex is allowed to have more than one color. First, for each set $S_i$, we color the smallest subtree that contains all sites in $S_i$ with color $i$. 
After this step, all uncolored vertices have only uncolored incident descendant edges.
Second, we color the remaining uncolored edges and vertices. These edges and their (possibly already colored) upper endpoints are colored in a bottom-up fashion. We assign each uncolored edge and its upper endpoint the color with the lowest index~$i$ that is assigned also to its lower endpoint. 
After coloring $\T$, we for $i \in \{1,\dots, k\}$ place a Steiner point $s_i$ at the root of tree $\T_i$ formed by all edges and vertices of color $i$.
This may place multiple Steiner points at the same vertex.
We may abuse notation, and denote by $s_i$ the vertex occupied by Steiner point~$s_i$.

For each Steiner point $s_i$, we define a subtree $\T_i' \subseteq \T$. The sites in $\T_i'$ will be the set~$S_i'$. The tree $\T_i'$ is a subtree of $\T(s_i)$. When $s_i$ is the only Steiner point at the vertex, then $\T_i'= \T(s_i) \setminus \bigcup_j (\T(s_j) \setminus \{s_j\})$ for $s_j$ a descendant of $s_i$. In other words, we look at the tree rooted at $s_i$ up to and including the next Steiner points, see Figure~\ref{fig:tree-spanner}(a). When $s_i$ is not the only Steiner point at the vertex, we include only subtrees $\T(e)$ of $s_i$ (up to the next Steiner points) that start with an edge $e$ that has color $i$ and no color $j > i$. See Figure~\ref{fig:tree-spanner}(b). Whenever $s_i$ has the lowest or highest index of the Steiner points at $s_i$, we also include all $\T(e')$ that start with an edge $e'$ of color $j < i$ or $j > i$, respectively. This generalizes the scheme for when $s_i$ is the only Steiner point at the vertex. 

By creating $\T_i'$ in this way, $s_i$ is not a leaf of $\T_i'$. 
We therefore adapt $\T_i'$ by adding an edge of weight zero between the vertex at $s_i$ and a new leaf corresponding to $s_i$. 
On each subtree $\T_i'$, we construct a $2t$-spanner using the algorithm of Lemma~\ref{lem:t-spanner-tree}. These $k$ spanners connect at the Steiner points, which we formally prove in the spanner analysis.

\begin{figure}
    \centering
    \includegraphics[page=3]{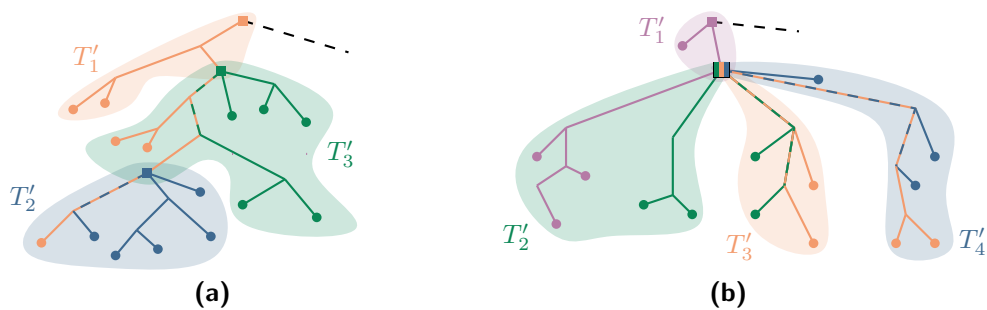}
    \caption{The tree $\T_i$ is the subtree whose edges and vertices have color $i$. A Steiner point (square) is placed at the root of $\T_i$. The shaded areas show the trees $\T_i'$. The examples show the case when the Steiner points are \textbf{\textsf{(a)}} at different vertices or \textbf{\textsf{(b)}} share a vertex.}
    \label{fig:tree-spanner}
\end{figure}

\subparagraph{Analysis.} To prove that $\G$ is indeed a low complexity $2t$-spanner for $T$, we first show that the four properties stated before hold for $S_i'$ and $\T_i'$. We often apply the following lemma, that limits the number of colors an edge can be assigned by our coloring scheme.
\begin{lemma}\label{lem:two_colors}
    An edge can have at most two colors.
\end{lemma}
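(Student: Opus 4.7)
The plan is to handle the two coloring steps separately. In step~2, an edge is only touched if it is still uncolored after step~1, and it receives a single color (the lowest index present on its lower endpoint); so any edge colored in step~2 gets exactly one color in total. The entire argument therefore reduces to showing that step~1 assigns at most two colors to any edge.

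For step~1, the key observation I would rely on is that the sites contained in $\T(v_2)$, where $v_2$ is the lower endpoint of $e$, form a contiguous block in the in-order traversal used to define the partition $S_1,\dots,S_k$. Call this range $[a,b]$. Edge $e$ receives color $i$ in step~1 exactly when the smallest subtree containing $S_i$ uses $e$, which in a tree happens iff $S_i$ has at least one site in $[a,b]$ and at least one site outside $[a,b]$.

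Next I would combine this with the fact that the sets $S_1,\dots,S_k$ are themselves consecutive intervals of in-order indices, since each is a block of $\lceil n/k\rceil$ consecutive sites in the traversal. Let $p$ be the index with $a\in S_p$ and $q$ the index with $b\in S_q$; then $p\le q$, every $S_i$ with $i<p$ or $i>q$ is disjoint from $[a,b]$, and every $S_i$ with $p<i<q$ is entirely contained in $[a,b]$. Consequently only $S_p$ and $S_q$ can straddle the boundary of $[a,b]$, giving at most two step-1 colors on~$e$ and hence at most two colors in total.

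The main obstacle is formalizing the contiguity claim for the sites of $\T(v_2)$ under the in-order traversal, but this follows routinely from the recursive definition of in-order on a rooted tree (the sites of any subtree appear as a single run of the traversal). Once that is in hand, the straddling count is immediate, and no further case analysis is needed.
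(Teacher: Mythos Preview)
Your proposal is correct and follows essentially the same approach as the paper: both arguments hinge on the observation that step~2 adds at most one color to an uncolored edge, and that in step~1 only the ``boundary'' sets $S_p$ and $S_q$ of the contiguous in-order block below $e$ can straddle it. The paper phrases this as a short contradiction on a hypothetical middle color $j$, whereas you give the direct interval-counting version, but the underlying idea is identical.
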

\begin{proof}
    First, observe that an edge can receive more than one color only in the first step of the coloring. Suppose for contradiction that there is an edge $e$ in $\T$ that has three colors $i<j<\ell$. Let $v$ be the lower endpoint of $e$. Then there must be three sites $p_i \in S_i$, $p_j \in S_j$, $p_\ell \in S_\ell$ in $\T(v)$. Because these sets are defined by an in-order traversal, $p_i$ must appear before $p_j$ in the traversal. Similarly, $p_j$ appears before $p_\ell$. Additionally, there must be a site $p_j' \in S_j$ in $\T \setminus \T(v)$, otherwise the color $j$ would not be assigned to $e$. 
    The site $p_j'$ must appear before $p_i$ or after $p_\ell$ in the traversal. In the first case, $p_i$ must be in $S_j$ as it appears between two sites in $S_j$. In the second case, we find $p_\ell \in S_j$, also giving a contradiction.
\end{proof}

\begin{restatable}{lemma}{sizeSi}\label{lem:size_S_i}
    The size of $S_i'$ is $O(n/k)$.
\end{restatable}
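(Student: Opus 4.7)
The plan is to bound $|S_i'|$ by $O(n/k)$ through the intermediate claim that only $O(1)$ of the original groups $S_1,\dots,S_k$ contribute sites to $S_i'$; combined with $|S_j|\leq\lceil n/k\rceil$, this yields the desired bound. Set $J(i):=\{j:S_j\cap S_i'\neq\emptyset\}$, so the goal reduces to showing $|J(i)|=O(1)$.

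I would start by observing that if $p\in S_j\cap S_i'$, then the leaf edge incident to $p$ is colored $j$ in step~1 of the coloring and lies in $\T_i'$, so $\T_j$ must intersect $\T_i'$. I would then split into cases based on where the root $s_j$ of the color-$j$ tree sits relative to the vertex $v$ hosting $s_i$. If $s_j$ lies strictly below $v$, then $s_j$ is a descendant Steiner point and the construction removes $\T(s_j)\setminus\{s_j\}$ from $\T_i'$, contradicting $j\in J(i)$. If $s_j$ lies strictly above $v$, then color $j$ must appear on the parent edge of $v$; by \cref{lem:two_colors} this edge carries at most two colors, contributing at most two values of $j$. The case $s_j=s_i$ gives the single value $j=i$.

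The remaining subcase is that $s_j$ sits at $v$ but is a distinct Steiner point. Here I would invoke the shared-vertex assignment rule: the subtree $\T(e)$ below $v$ containing $p$ must be attached to $\T_i'$, which pins down the colors of the starting edge $e$ (maximum color $=i$ under the middle rule, or a color strictly below $i_1$ or strictly above $i_\ell$ under the lowest/highest-index extension). Because $\T_j$ passes through $e$, color $j$ lies on $e$; combined with \cref{lem:two_colors}, this confines $j$ to a short list of possibilities (in particular, $j<i$ when the middle rule applies). Running through the few subcases of the shared-vertex rule then shows that at most $O(1)$ additional values of $j$ are added.

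The main obstacle is precisely this shared-vertex subcase, where several Steiner points may coincide at $v$ and the subtree-assignment rule interacts with the lowest/highest-index extensions. The crux is to verify that, despite the flexibility of the rule, only $O(1)$ colors can appear as the maximum color of a starting edge attached to $\T_i'$, and that this in turn controls the number of original groups whose sites can sit in those subtrees. I expect \cref{lem:two_colors} to be the workhorse at each of these points, applied both at the starting edges emanating from $v$ and, if needed, along the paths from individual leaves back up to $v$.
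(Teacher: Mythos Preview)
Your overall plan matches the paper's proof exactly: bound $|J(i)|$ by a constant via the same three-way case split on the position of $s_j$ relative to the vertex $v$ hosting $s_i$, and use \cref{lem:two_colors} on the parent edge of $v$ for the ``strictly above'' case. The ``strictly below'' and ``$j=i$'' cases are handled correctly.

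The gap is in the shared-vertex case. You argue that if $p\in S_j\cap S_i'$ lies in some subtree $\T(e)$ with $e$ a child edge of $v$ in $\T_i'$, then $e$ carries color $j$, and then appeal to \cref{lem:two_colors} to say $j$ is pinned down. But this only bounds the colors on a \emph{single} edge $e$; there may be many distinct child edges $e_1,e_2,\dots$ of $v$ that belong to $\T_i'$, and a priori each could carry a different second color $j_1,j_2,\dots$ with $s_{j_r}$ at $v$. So \cref{lem:two_colors} alone, applied edge by edge, does not give an $O(1)$ bound here. The paper closes this with an extra contiguity argument: any such edge $e$ also carries color $i$ (since $e\in\T_i'$ under the middle rule), and both colors on $e$ must come from the first coloring step, so $\T(e)$ contains a site of $S_i$. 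If two distinct edges $e\neq e'$ each carried a second color $j,x<i$ with $s_j=s_x=v$, then both $\T(e)$ and $\T(e')$ contain sites of $S_i$; but one of these subtrees precedes the other in the in-order traversal, forcing a site of $S_x$ (or $S_j$) to appear between two sites of $S_i$, contradicting that $S_i$ is a contiguous block. This yields at most one such $j$. Your sketch never invokes this contiguity of the $S_i$'s in the traversal, and without it the shared-vertex bound does not go through.
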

\begin{proof}
     Each set $S_i'$ can contain sites from an original set $S_j$, $j \neq i$, only if there is a site $p \in S_j$ in $\T(s_i)$ and the Steiner point $s_j$ lies at or above $s_i$. Lemma~\ref{lem:two_colors} implies that there are at most two such sets that have a Steiner point above $s_i$. We claim that there can be at most one such set $S_j$ that has a site $p \in S_i'$ for which $s_i = s_j$. Note that $j < i$ by definition of $\T_i'$.   
     Suppose there is another set $S_x$ for which there is also a site $q \in S_i'$ and $s_x = s_i$. Consider the first edge $e$ and $e'$ on the paths $\pi(s_i, p)$ and $\pi(s_i, q)$, respectively. Both $e$ and $e'$ have color $i$, otherwise $p$ or $q$ would not be in $S_i'$. As $s_i = s_j = s_x$, $e$ must also have color $j$ and $e'$ must also have color $x$. Lemma~\ref{lem:two_colors} now implies that $e \neq e'$. However, there must be a site of $S_i$ in both $\T(e)$ and $\T(e')$, which is impossible as $j < i$ and $x < i$. There are thus at most three sets $S_j$, $j \neq i$, from which a site is in $\T_i'$.
     As each set $S_1,\dots,S_k$ contains at most $\lceil n/k \rceil$ sites, it follows that $|S_i' \cap S| \leq 4\lceil n/k \rceil = O(n/k)$. 
\end{proof}

\begin{restatable}{lemma}{treesCoverT}\label{lem:trees_cover_T}
    The trees $\T_i'$ cover $\T$, i.e. $\bigcup_i \T_i' = \T$.
\end{restatable}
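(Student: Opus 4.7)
The plan is to show that every edge $e \in \T$ lies in some subtree $\T_i'$. First I would verify that every edge of $\T$ receives at least one color: step~1 colors the Steiner tree of each $S_i$ (and every leaf, being a site, is in some $S_i$), and step~2 then iterates bottom-up, coloring every remaining edge with the lowest color already present on its lower endpoint. Since $s_c$ is placed at $\mathrm{LCA}(S_c)$, the root of $\T_c$, every edge carrying color $c$ has $s_c$ as a weak ancestor. Let $u$ be the upper endpoint of $e$, let $w$ be the deepest vertex on the path from $u$ to the root of $\T$ that hosts a Steiner point, and let $e_0$ be the first edge on the descending path from $w$ toward $e$ (so $e_0 = e$ when $w = u$).

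Next I would argue that as soon as some Steiner point $s_j$ at $w$ adopts $\T(e_0)$ as a starting subtree, the edge $e$ lies in $\T_j'$. Indeed, by the choice of $w$ as the deepest Steiner ancestor of $e$, no Steiner point sits strictly between $w$ and $u$, and $u$ itself hosts a Steiner point only in the degenerate case $w = u$; hence descending along the branch of $\T(e_0)$ toward $e$ encounters $e$ before any ``next Steiner point'' used for truncation, so $e$ is retained in $\T_j'$. It then remains to exhibit such an adopting Steiner point. When $w$ hosts a unique Steiner point this is immediate. Otherwise, let $s_{i_1}, \ldots, s_{i_\ell}$ with $i_1 < \cdots < i_\ell$ be the Steiner points at $w$, and let $c$ be the largest color of $e_0$. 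The base rule adopts $\T(e_0)$ for $s_c$ whenever $c \in \{i_1, \ldots, i_\ell\}$, the lowest-index rule adopts it for $s_{i_1}$ whenever $c < i_1$, and the highest-index rule adopts it for $s_{i_\ell}$ whenever $c > i_\ell$.

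The hard part will be ruling out the remaining \emph{middle} case $i_1 < c < i_\ell$ with $c \notin \{i_1, \ldots, i_\ell\}$. I would dispatch it with two complementary observations. First, because the outgoing edge $e_0$ of $w$ carries color $c$, it lies on some path of the Steiner tree of $S_c$ joining a site strictly below $e_0$ to a site not below $e_0$, which forces $s_c = \mathrm{LCA}(S_c)$ to be a weak ancestor of $w$. Second, since $s_{i_1} = s_{i_\ell} = w$, both $S_{i_1}$ and $S_{i_\ell}$ lie entirely inside $\T(w)$; because $S_1, \ldots, S_k$ are consecutive blocks of an in-order traversal and the sites inside $\T(w)$ form a contiguous interval of that traversal, any intermediate $S_c$ with $i_1 < c < i_\ell$ is also contained in $\T(w)$, so $s_c$ lies weakly below $w$. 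Combining the two forces $s_c = w$, putting $c$ in $\{i_1, \ldots, i_\ell\}$, contradicting the middle case and finishing the proof.
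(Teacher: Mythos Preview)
Your overall strategy is correct and takes a genuinely different route from the paper. The paper argues by induction on Steiner vertices: for each vertex $v$ hosting a Steiner point it shows that $\T(v)$ is covered by the trees $\T_i'$ rooted inside $\T(v)$, and then observes that the root of $\T$ hosts a Steiner point. You instead fix an edge $e$, locate its deepest Steiner ancestor $w$, and argue that one of the Steiner points at $w$ adopts the child subtree $\T(e_0)$ containing $e$. Both proofs must deal with several Steiner points sharing a vertex; the paper case-splits on the \emph{smallest} color of the incident edge, you on the \emph{largest}. Your elimination of the ``middle'' case is the analogue of the paper's claim that if $e$ has smallest color $x$ with $s_x\neq v$ and a second color $y$, then necessarily $s_y=v$. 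Your direct argument is arguably cleaner than the induction, at the cost of having to isolate and dispatch the middle case explicitly.

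There is, however, one real slip. You assert that $s_c$ is placed at $\mathrm{LCA}(S_c)$ and use this identity in your second observation to conclude from $S_c\subseteq\T(w)$ that $s_c$ lies weakly below $w$. But $s_c$ is the root of $\T_c$, not of the step-1 Steiner tree of $S_c$: step~2 can push color $c$ strictly above $\mathrm{LCA}(S_c)$ (already with $k=1$ the unique Steiner point ends up at the root of $\T$, which need not be $\mathrm{LCA}(S_1)$). Your first observation survives anyway, since $e_0$ carrying color $c$ forces its upper endpoint $w$ to carry color $c$ as well, so $w\in\T_c$ and $s_c$ is a weak ancestor of $w$ regardless of where $\mathrm{LCA}(S_c)$ sits. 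For the second observation you need one extra sentence: step~2 cannot propagate color $c$ through the edge above $w$, because step~2 always selects the \emph{lowest} color present at the lower endpoint, and $w$ already carries color $i_1<c$. With that patch, $s_c$ is both weakly above and weakly below $w$, hence $s_c=w$, contradicting $c\notin\{i_1,\ldots,i_\ell\}$ and completing your proof.
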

\begin{proof}
    We prove the lemma by induction. Let $v$ be a vertex of $\T$ where a Steiner point is placed, then the induction hypothesis is that $\T(v)$ is covered by all subtrees $\T_i'$ corresponding to Steiner points in $\T(v)$. 
    
    As the base case, consider a vertex $v$ at which there is at least one Steiner point and for which there are no Steiner points in $\T(v)\setminus \{v\}$. If there is a single Steiner point $s_i$ at $v$, then $\T_i'= \T(v)$. Next, we consider the case that there is more than one Steiner point at $v$. Let $e$ be an edge incident to $v$ in $\T(v)$ and $x$ be the smallest color of $e$. If $s_x \neq v$ and $e$ has no additional color, then $\T(e)$ is included in $\T_j'$, where $s_j$ is the lowest or highest numbered Steiner point at $v$. If $e$ does have an additional color $y$, then $s_y = v$, as no site of $S_y$ can appear in $\T \setminus \T(v)$. It follows that $\T(e)$ is included in $\T_y'$. Finally, consider the case that $s_x = v$. Then $\T(e)$ is either included in $\T_x'$, or in a tree $\T_y'$, where $y$ is the potential other color of $e$. We conclude that all subtrees of $v$ are included in some $\T_i'$, and thus that in the base case the hypothesis holds.

    Let $v$ be a vertex of $\T$ that has a Steiner point and for which there is at least one Steiner point in $\T(v) \setminus \{v\}$. We assume the induction hypothesis holds for all Steiner points in $\T(v) \setminus \{v\}$. The exact same argument as for the base case tells us that each edge $e$ incident to $v$ in $\T(v)$ is included in some $\T_i'$ for a Steiner point $s_i$ at $v$. By definition, these subtrees include all of $\T(v)$ up to the next Steiner points. As the induction hypothesis holds for all Steiner points in $\T(v)$ below $v$, we conclude that $\T(v)$ is covered by the subtrees.
    
    Finally, we need to show there is a Steiner point at the root~$r$ of $\T$, as this would imply $\T(v)$ is covered. Because there are no uncolored edges or vertices after the coloring is complete, root~$r$ has at least one color $i$. This implies that there is a Steiner point $s_i$ at~$r$.
\end{proof}

\begin{restatable}{lemma}{treesDisjoint}\label{lem:trees_disjoint}
    The trees $\T_i'$ are disjoint apart from Steiner points.
\end{restatable}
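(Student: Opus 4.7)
The plan is to case on whether $s_i$ and $s_j$ sit at the same vertex of $\T$. Both cases start from the observation that $\T_i' \subseteq \T(s_i)$ and $\T_j' \subseteq \T(s_j)$, so any shared point lies in $\T(s_i) \cap \T(s_j)$ and $s_i, s_j$ must therefore be comparable in the ancestor order on $\T$.

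In the case where $s_i$ and $s_j$ sit at distinct vertices, without loss of generality $s_j$ is a strict descendant of $s_i$. By construction $\T_i'$ extends down from $s_i$ only as far as the next Steiner points along each root-to-leaf path, so it reaches $s_j$ but nothing strictly below $s_j$. Hence $\T_i' \cap \T(s_j) = \{s_j\}$, and the only element shared with $\T_j' \subseteq \T(s_j)$ is the Steiner point $s_j$ itself.

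The main work is when $s_i$ and $s_j$ share a vertex $v$. Both $\T_i'$ and $\T_j'$ are rooted at $v$, itself a Steiner point, and I need to show that no further point is shared. I would argue that for every edge $e'$ incident to $v$ going downward, the truncated subtree $\T(e')$ is placed into $\T_{i'}'$ for at most one index $i'$ among the Steiner points $s_{i_1},\dots,s_{i_m}$ at $v$. Using Lemma~\ref{lem:two_colors} to bound the number of colors of $e'$ by two, let $c_1 \le c_2$ be these colors. Mirroring the case split in the proof of Lemma~\ref{lem:trees_cover_T}, the main rule assigns $\T(e')$ to $\T_{c_2}'$ exactly when $c_2 \in \{i_1,\dots,i_m\}$, while otherwise exactly one of the two extra rules (for $s_{i_1}$ or for $s_{i_m}$) fires.

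The hardest part is ruling out configurations in which the main and an extra rule, or the two extras, would place $\T(e')$ into different subtrees. The key combinatorial ingredient is that because the sets $S_1,\dots,S_k$ are consecutive ranges in the in-order traversal and both $S_{i_1}$ and $S_{i_m}$ lie in $\T(v)$, every set $S_c$ with $i_1 < c < i_m$ is also contained in $\T(v)$, so its Steiner point $s_c$ cannot lie strictly above $v$. Using this, I would show that whenever $e'$ carries a color $c_1 < i_1$ together with a color $c_2 \in \{i_1,\dots,i_m\}$ one is forced to have $c_2 = i_1$ (and symmetrically on the high side), while the case $c_1 < i_1$ and $c_2 > i_m$ is excluded outright by an in-order contiguity argument applied to the sets sandwiched between $S_{c_1}$ and $S_{c_2}$. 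This yields the required uniqueness, and hence the claimed disjointness.
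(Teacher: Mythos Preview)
Your approach is correct and shares the same top-level case split as the paper: both argue separately according to whether $s_i$ and $s_j$ sit at the same vertex. Your distinct-vertex case is essentially the paper's argument (with the roles of $i$ and $j$ swapped): since $\T_j'$ stops at the first Steiner point encountered on the way down, the path from $v$ to the higher of $s_i,s_j$ necessarily passes through the lower one, which already gives the contradiction.

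Where the two proofs differ is the same-vertex case. The paper dispatches it in two lines: pick a non-Steiner vertex $v\in\T_i'\cap\T_j'$, look at the first edge $e$ on $\pi(s_i,v)$, observe that $e$ carries both colors $i$ and $j$, and conclude that the main inclusion rule (``color $i$ and no color $>i$'') already excludes $\T(e)$ from $\T_i'$. This tacitly assumes the main rule is the one in play and does not explicitly treat the two extremal rules for the lowest- and highest-indexed Steiner points at the shared vertex. Your edge-by-edge uniqueness argument does handle those rules: the in-order contiguity reasoning you sketch (forcing $c_2=i_1$ when $c_1<i_1$, and excluding $c_1<i_1<i_m<c_2$ because all of $S_{i_1}$ would then lie strictly below the shared vertex, contradicting $s_{i_1}=v$) is exactly what is needed to close the remaining cases. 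So your proof is longer but makes the interaction with the extremal rules fully explicit, while the paper's is terser and leaves that interaction to the reader.
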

\begin{proof}
    Suppose there is a non-Steiner vertex $v \in \T$ for which $v \in \T_i'$ and $v \in \T_j'$, $i < j$. The Steiner points $s_i$ and $s_j$ are both above $v$ in $\T$ by definition. If $s_i = s_j$, then the first edge $e$ of $\pi(s_i,v)$ has colors $i$ and $j$. However, as $i < j$, $\T(e)$ would not be included in $\T_i'$, which is a contradiction. If $s_i \neq s_j$, assume that $s_i$ is below $s_j$ in $\T$, the proof for $s_j$ below $s_i$ is similar. The path $\pi(v, s_j)$ must visit $s_i$, which contradicts that $v \in \T_j'$. The same argument proves that an edge $e \in \T$ can also be contained in at most one subtree $\T_i'$.
\end{proof}

\begin{restatable}{lemma}{fiveSteinerPoints}\label{lem:five_Steiner}
    There are at most five Steiner points in $\T_i'$.
\end{restatable}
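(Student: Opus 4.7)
My plan is to combine Corollary~\ref{cor:interior_steiner_points}, which lets me assume every Steiner point sits at a vertex, with the coloring-based construction of $\T_i'$. Let $V$ be the vertex at which $s_i$ is placed. The Steiner points inside $\T_i'$ consist of $s_i$ itself, inserted as a new leaf via the zero-weight edge, together with the boundary Steiner points met along paths descending from $V$ through the subtrees that the construction includes. Hence the task reduces to bounding the number of these boundary Steiner points by four.

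I would split into two cases depending on whether $V$ carries only $s_i$ or multiple Steiner points. In the single case, $\T_i' = \T(s_i) \setminus \bigcup_j (\T(s_j) \setminus \{s_j\})$ over descendants $s_j$, and any boundary Steiner point $s_j$ corresponds to a set $S_j$ whose entire contiguous in-order range lies in the in-order range $[a,b]$ of $\T(V)$ and whose LCA $s_j$ sits strictly below $V$. Since $V$ is the LCA of $S_i$ and carries only $s_i$, every such $S_j$ must fit inside a single child subtree of $V$; contiguity of the in-order ranges then cleanly separates the candidates into those whose range is left of $S_i$ and those whose range is right of $S_i$.

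The heart of the argument is to cap the number of first descendants on each side by two. Here Lemma~\ref{lem:two_colors} does the main work: each edge carries at most two step-$1$ colors, corresponding to at most two sets $S_j$ that straddle the in-order range of its subtree, and any purported third first descendant on one side of $S_i$ would force a straddling $S_j$ whose LCA lies strictly between $s_i$ and that candidate, contradicting primacy. The multi-Steiner case follows the same skeleton, and in fact the color-based restriction on which subtrees enter $\T_i'$ (edges with color $i$ and no color $>i$, plus the extremal boundary rule for the lowest- or highest-indexed Steiner point at $V$) only tightens the count.

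The main obstacle, I anticipate, is making the color/in-order coupling precise: I need to carefully track how many straddling $S_j$'s with LCAs placed between $s_i$ and a candidate actually forbid that candidate, especially when $[a,b]$ extends far beyond $S_i$ and several $S_j$'s fit entirely inside a single child subtree of $V$. Once the bound of four boundary Steiner points is established, adding $s_i$ yields the claimed total of five.
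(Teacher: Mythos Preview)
Your proposal has a genuine gap rooted in a misreading of the construction. You repeatedly identify the Steiner point $s_j$ with the LCA of $S_j$ (and likewise $V$ with the LCA of $S_i$). This is false: $s_j$ is placed at the root of the color-$j$ subgraph $\T_j$, and the second step of the coloring can push color $j$ strictly above the LCA of $S_j$ along previously uncolored edges. A simple example is a root $r$ with two children $u,v$, where $S_1$ consists of the leaves below $u$ and $S_2$ of the leaves below $v$; then the LCA of $S_1$ is $u$, but the edge $(r,u)$ receives color $1$ in step~2, so $s_1$ sits at $r$. Your argument ``Since $V$ is the LCA of $S_i$ and carries only $s_i$, every such $S_j$ must fit inside a single child subtree of $V$'' therefore does not go through, and the subsequent left/right split and the ``first descendant'' bound both rest on this mistaken identification. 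Also, Corollary~\ref{cor:interior_steiner_points} is irrelevant here: that corollary concerns Steiner points of an arbitrary spanner, whereas in this construction the $s_j$ are placed at vertices by definition.

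The paper's proof uses a different decomposition that does not need any LCA identification. It classifies the edges of $\T_i'$ incident to $s_i$ by whether they carry color $i$. For edges \emph{not} of color $i$, the parent edge of $s_i$ must carry that color, so Lemma~\ref{lem:two_colors} caps the number of such colors at two; a separate contradiction argument then shows each such piece $\T_i'[j]$ contains at most one further Steiner point. For edges that \emph{do} carry color $i$, a similar but more intricate in-order contradiction shows at most two further Steiner points in total. Summing gives four boundary Steiner points plus $s_i$ itself. If you want to salvage your approach, you would need to replace the LCA reasoning with a direct analysis of how step-2 coloring propagates; otherwise the argument as sketched does not bound anything.
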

\begin{proof}
    By definition $s_i$ is in $\T_i'$, so we want to show that there are at most four other Steiner points in $\T_i'$.  Note that a Steiner point can occur in $\T_i'$ only if its path to $s_i$ does not encounter any other Steiner point. We first consider the number of Steiner points in subtrees $\T(e)$ for which the edge $e$ is an edge incident to $s_i$ in $\T_i'$ that does not have color $i$, and then consider the number of Steiner points in subtrees for which $e$ does have color $i$.

    Let $\T_i'[j]$ be the subtree
    of $\T_i'$ rooted at $s_i$ that is the union of $\T(e) \cap \T_i'$ for all edges $e$ incident to $s_i$ of color $j \neq i$ and not of color $i$ as well, see Figure~\ref{fig:five_steiner}(a). 
    We argue that this subtree is non-empty for at most two colors $j$. 
    Consider such an edge $e$. 
    Because $e$ does not have color~$i$ and~$e \in \T_i'$, it must be that $s_j$ is above $s_i$ in $\T$. 
    Thus, the parent edge of $s_i$ in $\T$ must also be colored $j$. 
    By Lemma~\ref{lem:two_colors}, the parent edge of $s_i$ in $\T$ can be assigned at most two colors, so $\T_i'[j]$ is non-empty for at most two colors. 

         \begin{figure}
        \centering
        \includegraphics[page=4]{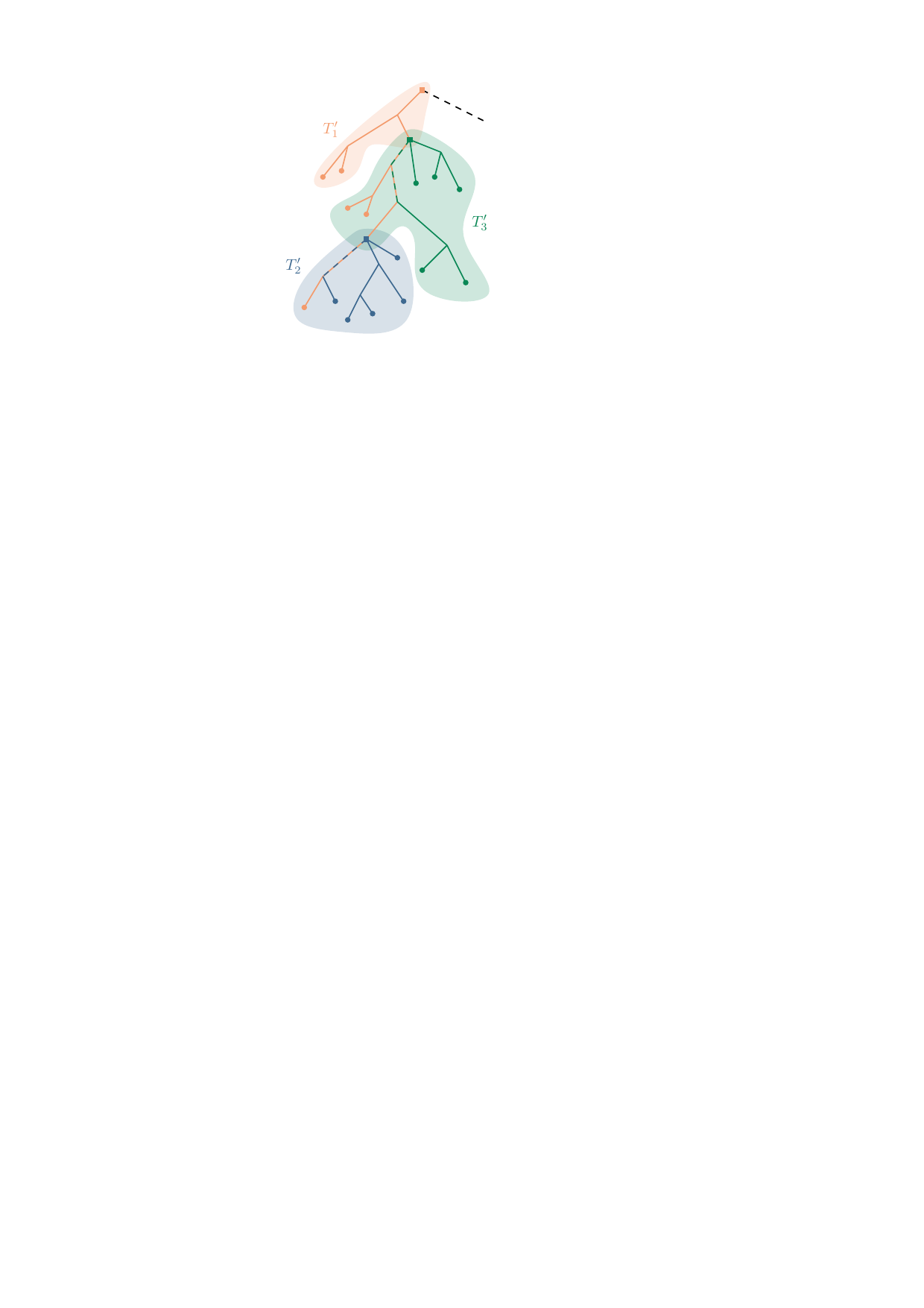}
        \caption{Notation used in Lemma~\ref{lem:five_Steiner}. In $\T_i'[j]$ are all subtrees that start with an edge of color~$j$.}
        \label{fig:five_steiner}
    \end{figure}
    
    Next, we prove that $\T_i'[j]$ contains at most one Steiner point other than $s_i$. We assume that $i < j$, the proof for $i > j$ is symmetric. Assume for contradiction that $\T_i'[j]$ contains two Steiner points $s_x$ and $s_y$, $x < y$; see Figure~\ref{fig:five_steiner}(b). As shown before, there is a site of $S_j$ in $\T \setminus \T(s_i)$.  As $i < j$, this implies that $i < x < y < j$. Let $e'$ be the first edge on $\pi(s_i,s_x)$ that is not on $\pi(s_i,s_y)$, i.e. the first edge after the paths diverge. Let $c$ be a color of $e'$ and let $v$ and $w$ be the upper and lower endpoint of $e'$. The tree $\T(w)$ does not contain any sites of $S_j$, as these appear in the traversal after the sites of $S_x$. It follows that $S_c$ is before $S_j$ in the in-order traversal, in other words $i < c < j$. The parent edge of $s_i$ cannot be colored $c$, as a site of $S_c$ would then appear either before a site in $S_i$ or after a site in $S_j$ in the in-order traversal. It follows that $s_c$ is on the path $\pi(v,s_i)$. If $s_c \neq s_i$, this contradicts the assumption that this path does not contain a Steiner point. If $s_c = s_i$, then $i < c$ implies that the subtree starting with an edge of color $j$ is in not $\T_i'$, which is a contradiction. We conclude that there are at most two Steiner points in the subtrees $\T_i'[j]$ in total for all $j \neq i$.
        
    What remains is to show that the subtrees whose edge incident to $s_i$ has color $i$ contain at most two Steiner points. The proof is similar to the proof for $\T_i'[j]$. We assume for contradiction that there are three such Steiner points $s_x$, $s_y$, $s_z$, with $x<y<z$. Assume that $\pi(s_i,s_y)$ diverges later from $\pi(s_i,s_x)$ than from  $\pi(s_i,s_z)$. The other case follows by symmetry. Consider the first edge $e'$ on $\pi(s_i,s_y)$ after which the path to $s_x$ diverges. Let $c$ be a color of $e'$. The in-order traversal implies that $x < c < z$. Because there is no Steiner point on the path from $s_y$ to $s_i$, the Steiner point $s_c$ must lie at or above $s_i$. However, $s_c$ cannot be above $s_i$ as this would imply that a site of $S_c$ appears before a site of $S_x$ or after a site of $S_z$ in the traversal, which is a contradiction. It follows that $s_i = s_c$.
    Consider the first edge $e$ on $\pi(s_i,s_y)$ and let $v$ be its lower endpoint. This edge has colors $i$ and $c$.
    So, there must be a site $p \in S_c$ in $\T \setminus \T(v)$. 
    When both $s_x$ and $s_z$ are in $\T(v)$, then $p$ must appear before a site in $S_x$ or after a site in $S_z$ in the traversal, which is a contradiction. If $s_z \notin \T(v)$ and $s_x \in \T(v)$, then $x < c$ again implies that $p$ must appear after the sites in $\T(v)$ in the traversal. Also, there must be a site $q \in S_i$ in the subtree that contains $s_z$, as by definition the first edge of this subtree has color $i$. This leads to a contradiction, as $p$ appearing after $q$ in the traversal would imply $p$ appears between two sites in $S_c$, and $p$ appearing before $q$ would imply $q$ appears between two sites in $S_i$. Finally, consider the case where $s_x$ and $s_z$ are both not in $T(v)$. As both of the subtrees that contain $s_x$ and $s_z$ start with an edge of color $i$, there must be a site of $S_i$ in both subtrees. However, this implies that one of these sites appears before the sites in $S_y$ and the other after, which is again a contradiction. So, there are at most two Steiner points in all subtrees whose first edge has color $i$.

    We conclude that there are at most five Steiner points in $\T_i'$.
\end{proof}

We are now ready to prove that our algorithm computes a spanner with low complexity.

\begin{lemma}\label{lem:steiner_spanner_tree}
    The spanner $\G$ is a $2t$-spanner for $\T$ of size $O(n\log (n/k))$ and complexity $O(mn^{1/t}/k^{1/t} + n\log (n/k))$.
\end{lemma}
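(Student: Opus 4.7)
The plan is to establish the spanning ratio by stitching together the $2t$-spanners built on each $\T_i'$, and then separately tally size and complexity. Fix two sites $p,q\in S$. By Lemmas~\ref{lem:trees_cover_T} and~\ref{lem:trees_disjoint}, the subtrees $\T_i'$ cover $\T$ and overlap only at Steiner points, so the unique path $\pi(p,q)$ in $\T$ is split by the Steiner points it traverses into consecutive subpaths $\pi_0,\ldots,\pi_\ell$, each contained in a single $\T_{a_j}'$ and each having both endpoints in $\{p,q\}\cup\steiner$.

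First I would verify that the per-subtree spanner actually provides $2t$-approximate distances between these endpoints. Every Steiner point lying in $\T_i'$ is a leaf of $\T_i'$: the root $s_i$ was explicitly turned into one by the appended zero-weight edge, and by the construction of $\T_i'$ every other Steiner point inside it sits at the boundary with no descendants in $\T_i'$. Since the original sites of $\T$ are leaves of $\T$ and therefore also leaves of $\T_i'$, the spanner built on $\T_i'$ via Lemma~\ref{lem:t-spanner-tree} spans all of $S_i'$ together with every Steiner point of $\T_i'$. Because $\T_i'$ is a connected subtree of $\T$, distances between its vertices coincide with their distances in $\T$, so the spanner contains, for each $\pi_j$, a path of length at most $2t\cdot|\pi_j|$ between its endpoints. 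Concatenating these paths produces a path in $\G$ from $p$ to $q$ of total length at most $2t\sum_j|\pi_j|=2t\cdot d(p,q)$.

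For the size and complexity I would apply Lemma~\ref{lem:t-spanner-tree} to each $\T_i'$ individually. By Lemma~\ref{lem:size_S_i} we have $|S_i'|=O(n/k)$, and by Lemma~\ref{lem:five_Steiner} there are only $O(1)$ additional Steiner leaves acting as sites, so the per-subtree spanner has size $O((n/k)\log(n/k))$ and complexity $O(m_i(n/k)^{1/t}+(n/k)\log(n/k))$, where $m_i$ denotes the number of vertices of $\T_i'$. By Lemma~\ref{lem:trees_disjoint} the $\T_i'$ overlap only at Steiner points, of which there are $O(k)$, so $\sum_i m_i=O(m+k)=O(m)$ (we may assume $k\le m$, as otherwise extra Steiner points beyond $m$ add nothing). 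Summing over $i\in\{1,\ldots,k\}$ yields the claimed bounds $O(n\log(n/k))$ on size and $O(mn^{1/t}/k^{1/t}+n\log(n/k))$ on complexity.

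The main obstacle I see is ensuring that Lemma~\ref{lem:t-spanner-tree} yields $2t$-approximate distances not only between pairs of original sites but also between any Steiner point and any other leaf of $\T_i'$; without this, the per-subtree spanners cannot be glued together. The crucial property that rescues us is the one verified above, namely that \emph{every} Steiner point residing in some $\T_i'$ is a leaf of $\T_i'$, which is a direct consequence of how $\T_i'$ was defined together with the extra zero-weight edge attached at $s_i$.
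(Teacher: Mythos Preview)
Your proposal is correct and follows essentially the same approach as the paper: decompose $\pi(p,q)$ along Steiner points into pieces lying in individual subtrees $\T_i'$, invoke the $2t$-spanner of Lemma~\ref{lem:t-spanner-tree} on each piece, and sum the size and complexity bounds using Lemmata~\ref{lem:size_S_i}, \ref{lem:trees_disjoint}, and~\ref{lem:five_Steiner}. If anything, you are more explicit than the paper in verifying that every Steiner point occurring in $\T_i'$ is actually a leaf of $\T_i'$ (the root via the appended zero-weight edge, the others by being bottom-boundary vertices), which is exactly what is needed for Lemma~\ref{lem:t-spanner-tree} to apply to them; the paper leaves this implicit.
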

\begin{proof}
    To bound the size and complexity of the spanner, we first consider the number of leaves~$n_i$ and vertices~$m_i$ in each subtree $\T_i'$. As $n_i$ is equal to $|S_i'|$ plus the number of Steiner points in $\T_i'$, Lemmata~\ref{lem:size_S_i} and~\ref{lem:five_Steiner} imply that $n_i = O(n/k) + 5 = O(n/k)$. 
    Lemma~\ref{lem:trees_disjoint} states the subtrees $\T_i'$ are disjoint apart from their shared Steiner points, so $\sum m_i = O(m)$. 
    By Lemma~\ref{lem:t-spanner-tree}, $\G$ has size $\sum_{i=1}^k O\left(\frac{n}{k} \log \left(\frac{n}{k}\right)\right)  = O\left(n \log \left(\frac{n}{k}\right)\right)$ and complexity $\sum_{i=1}^k O\left(m_i \left(\frac{n}{k}\right)^{1/t} + \frac{n}{k} \log\left(\frac{n}{k}\right)\right)
        = O\left(\frac{mn^{1/t}}{k^{1/t}} + n\log \left(\frac{n}{k}\right)\right)$.
    
    What remains is to show that $\G$ is a $2t$-spanner. Let~$p,q \in S$ be two leaves in~$\T$. If~$p,q \in S_i'$ for some $i \in \{1,\dots, k\}$ then the shortest path $\pi(p,q)$ is contained within~$\T_i'$. The $2t$-spanner on $\T_i'$ implies that $d_\G(p,q)\leq 2td(p,q)$. 
    If there is no such set $S_i'$ that contains both sites, 
    consider the sequence of vertices $v_1,\dots,v_\ell$ where $\pi(p,q)$ exits some subtree $\T_i'$. Let $v,w$ be two consecutive vertices in this sequence. Without loss of generality, assume that $w \in \T(v)$, and let $s_x$ be the Steiner point at $v$ for which $w \in \T_x'$ (Lemmata~\ref{lem:trees_cover_T} and~\ref{lem:trees_disjoint} imply $s_x$ exists). 
    Then the $2t$-spanner on $\T_x'$ ensures that $d_\G(v,w) \leq 2t d(v,w)$. It follows that $d_\G(p,q) \leq d_\G(p,v_1) + d_\G(v_1,v_2) + \dots + d_\G(v_\ell, q) \leq 2t(d(p,v_1) + d(v_1,v_2) + \dots + d(v_\ell,q)) = 2td(p,q)$.    
\end{proof}

\begin{restatable}{theorem}{steinerSpannerTree}\label{thm:steiner_spanner_tree}
    Let $\T$ be a tree with $n$ leaves and $m$ vertices, and $t \leq 1$ be any integer constant. For any $1 \leq k\leq n$, we can build a $2t$-spanner~$\G$ for~$\T$ using at most $k$ Steiner points of size $O(n\log (n/k))$ and complexity $O(mn^{1/t}/k^{1/t} + n\log (n/k))$ in $O(n\log(n/k) + m + K)$ time, where $K$ is the output size.
\end{restatable}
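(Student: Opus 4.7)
The plan is to appeal to Lemma~\ref{lem:steiner_spanner_tree} for correctness, namely the spanning ratio $2t$, the size bound $O(n\log(n/k))$, and the complexity bound $O(mn^{1/t}/k^{1/t} + n\log(n/k))$. Everything else in the statement is already proven there, so the only remaining task is to show that the construction of Section~\ref{sub:tree_spanner} can be carried out in $O(n\log(n/k)+m+K)$ time.

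I would break the construction into its time-critical steps. First, an in-order traversal of $\T$ combined with bucketing leaves into the groups $S_1,\dots,S_k$ of size $\lceil n/k \rceil$ takes $O(n+m)$ time. Next comes the two-step coloring. For the first step, I would in a single bottom-up pass compute, at each vertex $v$, the smallest and largest $S_i$-index appearing among the leaves of $\T(v)$; together with the analogous information for $\T \setminus \T(v)$ (computed in a second pass), this lets us decide in $O(1)$ per edge which color indices to assign. By Lemma~\ref{lem:two_colors} each edge receives at most two colors, so the total work (and storage) for this step is $O(m)$. The second step (bottom-up propagation of the minimum-index color to still-uncolored edges and their upper endpoints) is another pass in $O(m)$ time.

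Third, the Steiner points $s_i$ are identified as the roots of the monochromatic subtrees $\T_i$, and the subtrees $\T_i'$ are assembled by a single additional traversal that routes each edge to its owning $\T_i'$ according to the color rules in the construction. By Lemma~\ref{lem:trees_cover_T} and Lemma~\ref{lem:trees_disjoint}, $\sum_i m_i = O(m)$. We then invoke Lemma~\ref{lem:t-spanner-tree} on each $\T_i'$; because $n_i = O(n/k)$ by Lemma~\ref{lem:size_S_i} and Lemma~\ref{lem:five_Steiner}, the cost for subtree $i$ is $O(n_i \log n_i + m_i) = O(n_i \log(n/k) + m_i)$, which summed over $i$ gives $O(n\log(n/k)+m)$. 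Reporting the output adds the $O(K)$ term, yielding the claimed total.

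The main technical obstacle is the coloring phase: one must simultaneously color for all $k$ groups in linear time, rather than once per group. This is exactly where Lemma~\ref{lem:two_colors} is indispensable, since it caps the per-edge color load at two and lets us represent and manipulate the color labels in $O(1)$ per edge. All other steps are straightforward bottom-up passes, and the subtree-spanner construction of Lemma~\ref{lem:t-spanner-tree} is invoked as a black box on disjoint pieces whose sizes telescope into the global bound.
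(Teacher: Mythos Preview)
Your proposal is correct and follows essentially the same approach as the paper: both appeal to Lemma~\ref{lem:steiner_spanner_tree} for the spanning ratio, size, and complexity, and then argue the $O(n\log(n/k)+m+K)$ running time by decomposing the construction into an $O(m)$-time in-order traversal, an $O(m)$-time coloring phase justified via Lemma~\ref{lem:two_colors}, an $O(m)$-time assembly of the subtrees $\T_i'$, and a telescoping application of Lemma~\ref{lem:t-spanner-tree} over the disjoint $\T_i'$. The only difference is cosmetic: you compute per-vertex min/max group indices in a bottom-up pass to assign colors, whereas the paper traces leaf-to-root paths from the first and last site of each $S_i$; both exploit that at most two colors land on any edge.
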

\begin{proof}
    Lemma~\ref{lem:steiner_spanner_tree} proves the spanning ratio, size, and complexity bound of $\G$. What remains is to show that $\G$ can be constructed in $O(n\log(n/k) + m + K)$ time.
    
    The construction of~$\G$ consists of three parts, partitioning the sites into sets~$S_1,\ldots,S_k$ to place the Steiner points, obtaining the trees~$\T_1',\ldots,\T_k'$, and computing a $2t$-spanner for each~$\T_i'$ using Lemma~\ref{lem:t-spanner-tree}. The first part of the construction starts with an in-order traversal of~$\T$ to obtain the sets~$S_1,\ldots,S_k$ in $O(m)$ time. 
    Since the sites are leaves of~$\T$, we can use leaf-to-root paths, to first find the smallest subtree that contains all sites of~$S_i$ for each color~$i$: Find the root~$r$ of the smallest subtree for~$S_i$ as the lowest vertex shared by the leaf-to-root paths from the first and last site encountered in the in-order treewalk for each~$S_i$, and color every vertex and edge in the subtree rooted at~$r$. Since every edge can get at most two colors, this takes~$O(m)$ time.
    Then proceed in bottom-up fashion to color the remaining uncolored edges and vertices. Once a color~$j$ is no longer advanced upwards from a vertex~$v$, that is, when another color is smaller and is used to color the edge upwards from~$v$ in the bottom-up approach, we can set the Steiner point~$s_j$ at vertex~$v$. Every edge of~$\T$ is considered only once in this phase of the process, and hence it takes~$O(m)$ time. 
    
    We find the assignment of vertices and edges to the trees~$\T_1',\ldots,\T_k'$ with another in-order treewalk of~$\T$, which makes use of a stack to remember the index of the tree we are currently assigning to. At the root of $T$ there must be a Steiner point~$s_i$ so start the in-order traversal with current index~$I=i$. Assign every vertex and edge we encounter on the walk to tree $\T_I'$, and whenever a vertex with Steiner points is encountered, push the index~$I$ on the stack and set~$I=j$ to be the index of the encountered Steiner point~$s_j$. Now proceed the treewalk, assigning edges and vertices to~$\T_I'$. When the in-order traversal returns to~$s_j$ after having walked over all descendant edges, we have completed~$\T_j'$ and returned to the vertex that is a leaf of~$T_i'$. We can pop the topmost index off the stack, which will be~$i$, and continue assigning to~$T_i'$.
    
    In case we encounter a vertex~$v$ with multiple Steiner points for the first time, push the current index~$I$ on the stack and set $I=g$ corresponding to the index of the Steiner point at this vertex with the lowest index. Once the in-order treewalk returns to~$v$ and has to continue walking on a descendant edge with color~$h>g$, we have completed~$T_g'$ and can safely set~$I$ to the color of the next lowest indexed Steiner point at~$v$. Once assigned to~$I$, the highest index of the Steiner points at such a vertex~$v$ is used until the treewalk has walked over all descendant edges: All trees $T_x'$ rooted at~$v$ have been completed, and the index of the tree which had~$v$ as a leaf can be popped off the stack. 
    Once the in-order traversal is finished, every vertex and edge has been assigned to a tree~$\T_x'$ according to the index~$I=x$ that was maintained at that point in the traversal. Since an index is pushed to the stack only when the root of another tree~$\T_y'$ is encountered, and every push is matched to a single pop operation, when going into and coming out of a subtree at said root, we use $O(k)$ constant time interactions with the stack. This procedure hence takes~$O(m+k)=O(m)$
    time.

    Finally, we use Lemma~\ref{lem:t-spanner-tree} to assess the running time for the computation of the $2t$-spanners on~$\T_1',\ldots,\T_k'$. Recall that Lemmata~\ref{lem:size_S_i},~\ref{lem:trees_disjoint}, and~\ref{lem:five_Steiner} imply for each $\T_i'$ that $n_i = O(n/k)$ and that $\sum(m_i) = m$. The total running time to compute the $2t$-spanners is then
    \begin{equation*}
        \sum_{i=1}^k O\left(\frac{n}{k} \log \frac{n}{k} + m_i\right) = k \cdot O\left(\frac{n}{k} \log \left(\frac{n}{k}\right)\right) + O(m) = O\left(n \log \left(\frac{n}{k}\right) + m\right). 
    \end{equation*}

    Thus, computing the spanners asymptotically dominates the total running time.
\end{proof}

\subsection{Extending the tree spanner to a forest}\label{sub:forest_spanner}
In this section, we extend our tree spanner to a spanner for a forest $\F$. We denote the trees in $\F$ by $\for_1,\dots,\for_\ell$ and by $S(\for_i)$ the sites in $\for_i$. As $\F$ is disconnected, we cannot require all sites in $S$ to have a path between them in the spanner. Instead, we say that $\G$ is a $t$-spanner for $\F$ if $\G$ is a $t$-spanner for every $\T_i$.

Let $k'= \lfloor k/2 \rfloor$. To construct a spanner on $\F$, we partition the sites into $k'$ sets $S_1,\dots S_{k'}$ using an in-order traversal of $\F$. In particular, we perform an in-order traversal of $\for_1$, then continue with an in-order traversal of $\for_2$, etc. As before, we assign the first $\lceil n/k' \rceil$ sites we encounter to $S_1$, the second $\lceil n/k' \rceil$ sites to $S_2$, and so on. A set $S_i$ can be distributed over many trees, however, only the first and last tree that has sites in $S_i$ can also contain sites from another set $S_j$. For a tree $\for$ whose sites are all assigned to the same set, i.e. $S(\for) \subseteq S_i$ for some $i \in \{1,\dots,k'\}$, we construct a spanner on $\for$ using the algorithm of Lemma~\ref{lem:t-spanner-tree} and add these edges to $\G$. These trees thus do not contain any Steiner points. For a tree $\for$ whose sites are not all assigned to the same set, i.e. $S(\for) \nsubseteq S_i$ for all $i \in \{1,\dots,k'\}$, let $K$ be the number of sets $S_i$ for which $S(\for) \cap S_i \neq \emptyset$.
We construct the spanner of Theorem~\ref{thm:steiner_spanner_tree} on $\for$ using $K$ Steiner points
and add the edges to $\G$. Because the intersection $S_i \cap \for(S)$ is non-empty for a set $S_i$ in at most two trees, the total number of Steiner points we place is at most $2k' \leq k$.

\begin{restatable}{theorem}{spannerForest}\label{thm:steiner_spanner_forest}
    Let $\F$ be a forest with $n$ leaves and $m$ vertices, and $t \leq 1$ be any integer constant. For any $1 \leq k \leq n$, we can build a $2t$-spanner $\G$ for $\F$ using at most $k$ Steiner points of size $O(n\log (n/k))$ and complexity $O(mn^{1/t}/k^{1/t} + n\log (n/k))$ in $O(n\log(n/k) + m + K)$ time, where $K$ is the output size.
\end{restatable}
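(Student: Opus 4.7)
The plan is to prove each property of $\G$ by applying Lemma~\ref{lem:t-spanner-tree} to each \emph{uniform} tree (one whose sites lie entirely in a single set $S_i$) and Theorem~\ref{thm:steiner_spanner_tree} to every \emph{non-uniform} tree, then summing the per-tree contributions. The spanning ratio will be immediate: both building blocks produce $2t$-spanners on the individual trees, and the definition of a $2t$-spanner for $\F$ only requires spanning between pairs of sites that lie in the same tree.

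The key step is bounding the total number of Steiner points. Because the sets $S_1,\dots,S_{k'}$ are defined by contiguous segments of the in-order traversal of $\F$, whenever a set $S_i$ has sites in several trees, every tree strictly between the first and the last occurrence of $S_i$ must be uniform (all its sites belong to $S_i$) and therefore receives no Steiner points by construction. Hence each $S_i$ contributes to at most two non-uniform trees, and summing the $K_j$ Steiner points that Theorem~\ref{thm:steiner_spanner_tree} places in each non-uniform tree $\for_j$ yields at most $2k'\leq k$ Steiner points in total.

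For the size, complexity, and running-time bounds, the crucial inequality is that $n_j/\max(K_j,1)=O(n/k)$ for every tree $\for_j$, where $n_j$ is its number of leaves and $K_j$ its number of Steiner points: for non-uniform trees this follows from $n_j\leq K_j\lceil n/k'\rceil$, and for uniform trees from $n_j\leq \lceil n/k'\rceil$. Substituting this inequality into the per-tree bounds supplied by Lemma~\ref{lem:t-spanner-tree} and Theorem~\ref{thm:steiner_spanner_tree}, and using $\sum_j n_j=n$ and $\sum_j m_j=m$ (the trees of $\F$ being vertex-disjoint), yields the claimed totals $O(n\log(n/k))$ for size, $O(mn^{1/t}/k^{1/t}+n\log(n/k))$ for complexity, and $O(n\log(n/k)+m+K)$ for construction time.

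The main subtlety is the Steiner-point count described above, together with the accompanying observation that $n_j/\max(K_j,1)$ is uniformly $O(n/k)$; once these are established, the remaining arguments are essentially mechanical summations of the per-tree bounds already proven in Theorem~\ref{thm:steiner_spanner_tree}.
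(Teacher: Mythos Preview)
Your proposal is correct and follows essentially the same approach as the paper: split trees into uniform and non-uniform ones, apply Lemma~\ref{lem:t-spanner-tree} and Theorem~\ref{thm:steiner_spanner_tree} respectively, bound the Steiner points via the observation that each $S_i$ touches at most two non-uniform trees, and sum the per-tree bounds using $n_j/\max(K_j,1)=O(n/k)$ together with $\sum_j n_j=n$ and $\sum_j m_j=m$. The paper carries out the summations slightly differently (separating $M_{\mathit{in}}$ and $M_{\mathit{out}}$ and using $\sum_j K_j\leq k$ rather than $\sum_j n_j=n$ for the non-uniform size bound), but the arguments are equivalent.
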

\begin{proof}
    Clearly, the spanner~$\G$ constructed by the above procedure is a $2t$-spanner for $\for_1, \dots ,\for_\ell$ and thus a $2t$-spanner for $\F$. To bound the size and complexity of $\G$, and the running time of the algorithm constructing~$\G$, we first consider the trees whose sites are all in a single set $S_i$, and then consider the other trees. Let $\Min$ and $\Mout$ denote the total number of vertices in all trees whose sites are in a single set, and the total number of vertices in the other trees, respectively. Thus $m = \Min + \Mout$.

    Let $a_i$ denote the number of trees whose sites are fully contained within the set $S_i$. Let $n_{i,j}$ and $m_{i,j}$ denote the number of sites and vertices in the $j$-th subtree whose sites are contained in $S_i$. Note that $\sum_j n_{i,j} \leq  \lceil n/k'\rceil$ and $\sum_{i,j} m_{i,j} = \Min$. We obtain the following bound on the number of links of the spanner restricted to these trees by Lemma~\ref{lem:t-spanner-tree}
    \begin{equation}\label{eq:size_in}
        \sum_{i =1}^{k'} \sum_{j=1}^{a_i} n_{i,j} \log n_{i,j} \leq \sum_{i =1}^{k'}\sum_{j=1}^{a_i} n_{i,j} \log \left\lceil \frac{n}{k'}\right\rceil \leq \log \left\lceil \frac{n}{k'}\right\rceil  \sum_{i =1}^{k'}\left\lceil \frac{n}{k'}\right\rceil = O\left(n \log \left(\frac{n}{k}\right)\right).
    \end{equation}
    Similarly, Lemma~\ref{lem:t-spanner-tree} bounds the complexity of $\G$ restricted to these trees as follows
    \begin{equation}\label{eq:complexity_in}
    \begin{split}
        \sum_{i =1}^{k'} \sum_{j=1}^{a_i} (m_{i,j} n_{i,j}^{1/t} + n_{i,j} \log n_{i,j}) &\leq \left\lceil \frac{n}{k'} \right\rceil^{1/t} \sum_{i =1}^{k'} \sum_{j=1}^{a_i} m_{i,j} + O\left(n \log \left(\frac{n}{k}\right)\right) \\
        &= O\left(\frac{\Min n^{1/t}}{k^{1/t}} + n \log \left(\frac{n}{k}\right)\right).
    \end{split}
    \end{equation}
    Lastly, the algorithm of Lemma~\ref{lem:t-spanner-tree} is used to construct the $2t$-spanners on these trees, which results in a total running time of
    \begin{equation}\label{eq:runtime_in}
        \sum_{i =1}^{k'} \sum_{j=1}^{a_i} n_{i,j} \log n_{i,j} + m_{i,j}\leq \Min + \sum_{i =1}^{k'}\sum_{j=1}^{a_i} n_{i,j} \log \left\lceil \frac{n}{k'}\right\rceil \leq O\left(n \log \left(\frac{n}{k}\right) + \Min\right).
    \end{equation}
    
    Next, we consider the trees whose sites are not contained within a single set. Let $\for_j$ be such a tree. Furthermore, let $n_j$, $m_j$, and $k_j$ denote the number of sites, vertices, and Steiner points in $\for_j$. 
    As $k_j$ is equal to the number of sets for which $S_i \cap \for_j \neq \emptyset$, we have $n_i \leq k_j \lceil n/k'\rceil$.
    Theorem~\ref{thm:steiner_spanner_tree} states that the size of $\G$ restricted to $\for_j$ is $O\left(\frac{k_jn}{k'}\log \left( \frac{n}{k'}\right)\right)$, the complexity is $O\left(\frac{m_jn^{1/t}}{k'^{1/t}} + \frac{k_jn}{k'}\log \left( \frac{n}{k'}\right)\right)$, and the time to compute this part of~$\G$ is $O\left(\frac{k_jn}{k'}\log \left( \frac{n}{k'}\right) + m_j\right)$.
    
    Recall that $\sum_j k_j \leq k$ and $\sum_j m_j = \Mout$. Summing over all trees $\for_j$ whose sites are not contained within a single set, we obtain the following expressions for the total size and complexity of their spanners, and the running time for constructing these $2t$-spanners.
    \begin{equation}\label{eq:size_out}
        \sum_j O\left(\frac{k_jn}{k'}\log \left( \frac{n}{k'}\right)\right) = O\left(\frac{k}{k'} \cdot n \log \left(\frac{n}{k'} \right)\right) = O\left(n \log \left(\frac{n}{k}\right)\right).
    \end{equation}
    \begin{equation}\label{eq:complexity_out}
        \sum_j O\left(\frac{m_jn^{1/t}}{k'^{1/t}} + \frac{k_jn}{k'}\log \left( \frac{n}{k'}\right)\right) = O\left(\frac{\Mout n^{1/t}}{k^{1/t}} +  n \log \left(\frac{n}{k}\right) \right).
    \end{equation}
    \begin{equation}\label{eq:runtime_out}
        \sum_j O\left(\frac{k_jn}{k'}\log \left( \frac{n}{k'}\right) + m_j\right) = O\left(\frac{k}{k'} \cdot n \log \left(\frac{n}{k'} \right) + \Mout\right) = O\left(n \log \left(\frac{n}{k}\right) + \Mout\right).
    \end{equation}

    Summing Equations~(\ref{eq:size_in}) and~(\ref{eq:size_out}) we obtain the stated bound on the size of $\G$. By summing Equations~(\ref{eq:complexity_in}) and~(\ref{eq:complexity_out}), and noting that $m = \Min + \Mout$, we obtain the complexity of $\G$. To obtain total running time for constructing~$\G$, recall that the algorithm consists of two steps: First, an in-order traversal of~$\F$, which takes~$O(m)$ time. Second, constructing the $2t$-spanners for the trees of~$\F$, which takes time equal to the sum of Equations~(\ref{eq:runtime_in}) and~(\ref{eq:runtime_out}).
\end{proof}

\section{Steiner spanners in simple polygons} 
\label{sec:spanners_in_a_simple_polygon}
We consider the problem of computing a $t$-spanner using $k$ Steiner points for a set of $n$ point sites in a simple polygon $P$ with $m$ vertices. 
We measure the distance between two points in $p,q$ in $P$ by their geodesic distance, i.e. the length of the shortest path $\pi(p,q)$ fully contained within~$P$. A link $(p,q)$ in the spanner is the shortest path $\pi(p,q)$, 
and its complexity is the number of segments in this path.
Lower bounds for trees straightforwardly extend to polygonal instances.
Again, we aim to obtain a spanner of complexity close~to~the~lower~bound.

\begin{restatable}{lemma}{lowerBoundsPolygon}
    The lower bounds of Lemmata~\ref{lem:lowerbound_2spanner}, \ref{lem:lowerbound_3spanner}, and~\ref{lem:lowerbound_tspanner} also hold for simple polygons.
\end{restatable}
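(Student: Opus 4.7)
\begin{proofsketch}
The plan is to show that each of the three tree constructions used in Lemmata~\ref{lem:lowerbound_2spanner}, \ref{lem:lowerbound_3spanner}, and~\ref{lem:lowerbound_tspanner} can be realized as a simple polygon whose geodesic metric approximates the underlying tree metric arbitrarily well, while preserving the vertex count up to a constant factor. Concretely, for a given edge-weighted tree $\T$ in one of the constructions, I will build a simple polygon $P_\T$ by \emph{thickening} $\T$: replace every edge of weight $w$ by a long thin rectangular corridor of length $w$ and negligible width $\mu$, and glue these corridors at each internal vertex with a small $O(1)$-complexity junction region. The sites, currently at leaves of $\T$, become points placed at the far end of the corresponding leaf corridor. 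Since each tree edge of complexity $1$ becomes a rectangle of $O(1)$ polygon vertices, and the long corridors used in the constructions (the comb teeth and the corridors between combs/pitchforks) were already explicitly built from $\Theta(m/n)$ or $\Theta(m/k)$ complexity paths, the polygon $P_\T$ has $\Theta(m)$ vertices.

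Next, I would argue that the geodesic distance in $P_\T$ between any two sites $p,q$ differs from the tree distance $d_\T(p,q)$ by at most $O(\mu \cdot m)$, which can be made arbitrarily small by choosing $\mu$ small enough. Thus, for any constant $\varepsilon \in (0,1)$, we can pick $\mu$ so that all the distance arguments used in the tree proofs (for instance, that $d(p,q)$ approaches $2h$ or $2$ in the $(2-\varepsilon)$ and $(3-\varepsilon)$ proofs, so a $(t-\varepsilon)$-spanner path cannot visit too many intermediate sites) carry over verbatim to $P_\T$. The definition of spanner complexity also transfers cleanly: a link in the polygon spanner corresponds to a geodesic, and the number of segments of this geodesic is $\Theta$ of the number of tree edges it traverses, because each corridor contributes $O(1)$ segments and the high-complexity comb tops / corridors were designed exactly so that their geodesic traversal has $\Theta(m/n)$ or $\Theta(m/k)$ segments. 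Hence the counting argument in each tree proof gives the same asymptotic bound on the polygon-spanner complexity.

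The main obstacle is ensuring that, after thickening, the geodesic shortcuts enabled by the polygon (a geodesic may cut corners at a junction rather than strictly following tree edges) do not create spanner paths that the tree analysis forbids. I would address this by making the junction regions convex and of diameter $O(\mu)$, so any shortcut saves at most $O(\mu)$ distance per junction and $O(\mu m)$ in total, again absorbed into $\varepsilon$. A second concern is that a link between two sites whose geodesic traverses a common corridor might share segments with another link; however, by the definition of spanner complexity on page~1 (each segment in each link is counted separately), shared segments are still counted multiple times, so the lower bound counting is unaffected. Once these two points are handled, each of the three lemmas follows by literally re-running its proof on $P_\T$ with the tree distance replaced by the geodesic distance.
\end{proofsketch}
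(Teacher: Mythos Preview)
Your proposal is correct and follows essentially the same approach as the paper: thicken the planar tree into a simple polygon so that geodesic distances and path complexities match the tree up to negligible error, then rerun the tree arguments verbatim. The paper's own proof is a two-sentence version of exactly this idea, so your more careful treatment of the corridor width $\mu$, junction shortcuts, and shared-segment counting simply fills in details the paper leaves implicit.
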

\begin{proof}
    As the trees used for the lower bounds are planar, we can construct a simple polygon~$P$ that has the tree as free space and place the sites at the locations of the leaves of the tree. The complexity of paths and distances between points in $P$ are the same as in the tree.
\end{proof}

\subparagraph{Spanner construction.} Next, we describe how to obtain a
low-complexity spanner in a simple polygon using at most~$k$ Steiner
points. In our approach, we combine ideas
from~\cite{SpannerPolyhedralTerrain} and~\cite{complexity_spanners}
with the forest spanner of Theorem~\ref{thm:steiner_spanner_forest}. We first
give a short overview of the approach to obtain a low complexity $2\sqrt{2}t$-spanner~\cite{complexity_spanners},
and then discuss how to combine these ideas with the forest spanner to
obtain a low complexity Steiner spanner.

We partition the polygon $P$ into two subpolygons $P_\ell$ and $P_r$ by a vertical line segment~$\lambda$ such that roughly half of the sites lie in either subpolygon. For the line segment $\lambda$, we then consider the following weighted 1-dimensional space. For each site $p \in S$, let $p_\lambda$ be the \emph{projection} of $p$: the closest point on $\lambda$ to $p$. The (weighted 1-dimensional distance) between two sites~$p_\lambda,q_\lambda$ is defined as $d_w(p_\lambda,q_\lambda) := d(p,p_\lambda) + d (p_\lambda,q_\lambda) + d(q,q_\lambda)$. In other words, the sites in the 1-dimensional space are weighted by the distance to their original site in~$P$. For this 1-dimensional space we construct a $t$-spanner $\G_\lambda$, and for each link $(p_\lambda,q_\lambda)$ in $\G_\lambda$ we add the link $(p,q)$ to the spanner~$\G$. Finally, we process the subpolygons $P_\ell$ and $P_r$ recursively. De Berg, van Kreveld, and Staals~\cite{complexity_spanners} show that this gives a $\sqrt{2}t$-spanner in a simple polygon. To obtain a spanner of complexity~$O(mn^{1/t} + n\log^2n)$, they construct a 1-dimensional $2t$-spanner $\G_\lambda$ using the approach of Lemma~\ref{lem:t-spanner-tree}, resulting in a $2\sqrt{2}t$-spanner.

In our case, we require information on the paths from the sites to their projection instead of only their distance to decide where to place the Steiner points. This information is captured in the shortest path tree $\SPT_\lambda$ of the segment~$\lambda$, which is the union of all shortest paths from the vertices of $P$ to their closest point on $\lambda$. Additionally, we include all sites in $S$ in the tree $\SPT_\lambda$. The segment $\lambda$ is split into multiple edges at the projections of the sites, see Figure~\ref{fig:shortest_path_tree}. 
The tree $\SPT_\lambda$ is rooted at the lower endpoint of $\lambda$ and has $O(m + n)$ vertices. 

\begin{figure}
    \centering
    \includegraphics{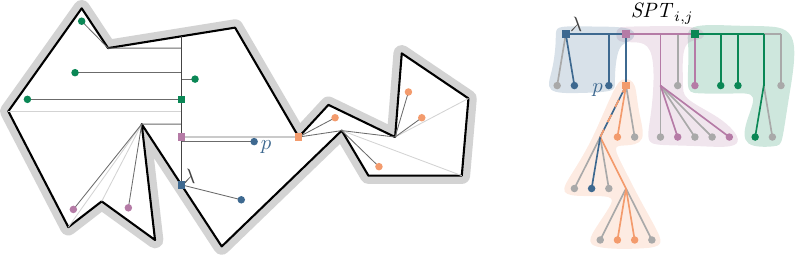}
    \caption{The shortest path tree of $\lambda$ in $P'$ and its $\SPT_{i,j}$. The grey nodes and edges are not included in $\SPT_{i,j}$, but can be assigned to a $\T_i'$ as indicated by the colored backgrounds. The squares show the Steiner points in $\SPT_{i,j}$ and $P'$. The sites in $P'$ are colored as the trees $\T_i'$. }
    \label{fig:shortest_path_tree}
\end{figure}

We adapt the algorithm to build a spanner in $P$ as follows. Instead of computing a 1-dimensional spanner directly in each subproblem in the recursion, we first collect the shortest path trees of all subproblems. Let $\SPT_{i,j}$ denote the shortest path tree of the $j$-th subproblem at the $i$-th level of the recursion. We exclude all vertices from $\SPT_{i,j}$ that have no site as a descendant. This ensures that all leaves of the tree are sites. Let $\F = \cup_{i,j} \SPT_{i,j}$ be the forest consisting of all trees. A site in $S$ or vertex of $P$ can occur in multiple trees $\SPT_{i,j}$, but they are seen as distinct sites and vertices in the forest $\F$. We call a tree $\SPT_{i,j}$ \emph{large} if $0 \leq i \leq \log k$ and \emph{small} otherwise. In other words, the trees created in the recursion up to level $\log k$ are large.
We then partition $\F$ into two forests $\F_s$ and $\F_\ell$ containing the small and large trees. For each tree in $\F_s$ we directly apply the $2t$-spanner of Lemma~\ref{lem:t-spanner-tree} that uses no Steiner points to obtain a spanner $\G_s$. For the forest $\F_\ell$ we apply Theorem~\ref{thm:steiner_spanner_forest} to obtain a $2t$-spanner $\G_\ell$ for $\F_\ell$. Let $\G_\F = \G_s \cup \G_\ell$. A Steiner point in $\G_\F$ corresponds to either a vertex of $P$ or a point on $\lambda$. Let $\steiner$ denote the set of Steiner points. To obtain a spanner~$\G$ in the simple polygon, we add a link $(p,q)$, $p,q \in S \cup \steiner$, to~$\G$ 
whenever there is a link in $\G_\F$ between (a copy of) $p$ and $q$.

\begin{restatable}{lemma}{simplePolygonSize}\label{lem:spanning_ratio_simple_polygon}
    The graph~$\G$ is a $2\sqrt{2}t$-spanner for the sites $S$ in $P$ of size $O(n\log ^2n)$.
\end{restatable}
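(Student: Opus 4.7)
The plan is to prove the two claims—spanning ratio and size—separately, reusing the geometric machinery from de Berg, van Kreveld, and Staals~\cite{complexity_spanners} to translate between polygon distances and the weighted $1$-dimensional distances encoded by the shortest path trees, and then invoking Theorems~\ref{thm:steiner_spanner_tree} and~\ref{thm:steiner_spanner_forest} (plus Lemma~\ref{lem:t-spanner-tree}) as black boxes on the forest $\F$.

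For the spanning ratio, fix two sites $p,q\in S$ and follow the recursion to the unique subproblem where $p$ and $q$ first lie on opposite sides of the splitting segment $\lambda$. Let $\SPT_\lambda$ be the associated shortest path tree (which is either in $\F_\ell$ or in $\F_s$). By construction, the leaf-to-leaf distance between $p$ and $q$ in $\SPT_\lambda$ equals the weighted $1$-dimensional distance $d_w(p_\lambda,q_\lambda)=d(p,p_\lambda)+d(p_\lambda,q_\lambda)+d(q,q_\lambda)$. Theorem~\ref{thm:steiner_spanner_forest} (or Lemma~\ref{lem:t-spanner-tree}, depending on whether $\SPT_\lambda$ is large or small) gives a path $p=p_1,p_2,\dots,p_r=q$ in $\G_\F$ of total weight at most $2t\cdot d_w(p_\lambda,q_\lambda)$. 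The corresponding links in $\G$ use geodesic distances, so the triangle inequality yields $d_\G(p,q)\le\sum_{i}d(p_i,p_{i+1})\le 2t\cdot d_w(p_\lambda,q_\lambda)$. The geometric argument from~\cite{SpannerPolyhedralTerrain,complexity_spanners}, which bounds $d_w(p_\lambda,q_\lambda)\le\sqrt{2}\,d(p,q)$ whenever $\pi(p,q)$ crosses $\lambda$, then completes the inequality $d_\G(p,q)\le 2\sqrt{2}t\cdot d(p,q)$.

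For the size bound, I would count contributions from $\F_s$ and $\F_\ell$ separately. At each recursion level the subproblems partition $S$, so the number of leaves summed over all $\SPT_{i,j}$ at that level is $O(n)$. The small forest $\F_s$ contains trees from $O(\log n)$ levels, hence $\sum_{\SPT_{i,j}\in\F_s}n_{i,j}=O(n\log n)$, and Lemma~\ref{lem:t-spanner-tree} contributes $O(n_{i,j}\log n_{i,j})=O(n_{i,j}\log n)$ links per tree, summing to $O(n\log^2 n)$. The large forest $\F_\ell$ spans $O(\log k)$ levels, so its total number of leaves is $n'=O(n\log k)$; applying Theorem~\ref{thm:steiner_spanner_forest} in one shot with $k$ Steiner points yields $O(n'\log(n'/k))=O(n\log k\cdot\log n)=O(n\log^2 n)$ links. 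Summing the two bounds gives $O(n\log^2 n)$.

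The main obstacle I anticipate is a bookkeeping one rather than a conceptual one: making sure that the forest spanner of Theorem~\ref{thm:steiner_spanner_forest}, which was proved for a standalone forest, legitimately applies to $\F_\ell$, whose trees come from geometrically nested subproblems and share copies of polygon vertices and sites. The way around this is to emphasise that in $\F$ each copy is treated as a distinct vertex, so $\F_\ell$ is simply an abstract edge-weighted forest and Theorem~\ref{thm:steiner_spanner_forest} applies verbatim; the total number of Steiner points introduced is still $k$ because we apply the theorem only once to $\F_\ell$ (and not separately per level). A secondary subtlety is that a link in $\G_\F$ between leaves can have endpoints that are two distinct copies of the same site in $P$, but then the induced link in $\G$ is a zero-length self-loop that can be discarded without affecting either the ratio or the size bound.
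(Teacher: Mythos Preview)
Your proposal is correct and follows essentially the same approach as the paper's own proof. The paper compresses the spanning-ratio argument into a single citation of Lemma~2 of~\cite{complexity_spanners} (which encapsulates exactly the $d_w(p_\lambda,q_\lambda)\le\sqrt{2}\,d(p,q)$ step and the translation from the $2t$-spanner on the tree to the polygon), and performs the size calculation identically to yours, splitting into $\F_s$ and $\F_\ell$ and plugging in $n'=O(n\log k)$ for the large forest; your bookkeeping concerns about copies and self-loops are real but, as you note, harmless.
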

\begin{proof}
    Because the spanner $\G_\F$ is also a spanner for the weighted 1-dimensional space, Lemma 2 of~\cite{complexity_spanners} directly implies that $\G$ is a $2\sqrt{2}t$-spanner. The size of $\G$ is the same as the size of $\G_\F$. Each site in $S$ occurs in one $\SPT_{i,j}$ for each level $i$. There are thus $O(n\log k)$ sites in $\F_\ell$ and $O(n(\log n-\log k))$ sites in $\F_s$. 
    Theorem~\ref{thm:steiner_spanner_forest} then implies that the size of $\G_{\ell}$ is $O(n\log k \log(n\log k/k)) = O(n\log^2 n)$.
    Lemma~\ref{lem:t-spanner-tree} implies that the size of $\G_{s}$ is $
        \sum_{i= \log k}^{O(\log n)} 2^i \cdot O\left(n/2^i \log (n/2^i)\right) = O(n\log^2n)$.
     It follows that the size of $\G$ is $O(n\log^2n)$.
\end{proof}

\subparagraph{Complexity analysis.}
To bound the complexity of the links in $\G$, we have to account for the complexity of links generated by both $\G_s$ and $\G_\ell$. Bounding the complexity of $\G_s$ is relatively straightforward, but to bound the complexity of $\G_\ell$ we first prove a lemma on the structure of a shortest path in $P$ between sites in $\F_\ell$.

Let $\T$ be a tree in $\F_\ell$ and let $P'$ be the corresponding subpolygon of the subproblem. We consider the shortest path between two sites that are assigned to the same subtree $\T_i'$ of~$\T$ by the forest algorithm. It can be that this shortest path uses vertices of $P'$ that were excluded from $\T$, as they had no site as a descendant. For the analysis, we do include these vertices in~$\T$ and assign them to subtrees $\T_j'$ as in Section~\ref{sub:tree_spanner}; see Figure~\ref{fig:shortest_path_tree}. The following lemma states that the complexity of a shortest path between two sites in the same subtree~$\T_i'$ is bounded by the number of vertices in~$\T_i'$. We use this to bound the complexity in Lemma~\ref{lem:complexity_spanner_simple_polygon}.

\begin{lemma}\label{lem:shortest_path_in_tree}
    A shortest path $\pi(p,q)$ in $P'$ between sites $p,q \in \T_i'$ uses vertices in~$\T_i'$ only.
\end{lemma}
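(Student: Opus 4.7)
My plan is to combine the connectedness of the subtree $\T_i'$ with the funnel/shortcut property of shortest paths in simple polygons, using the tree $\T = \SPT_\lambda$ of $P'$ as scaffolding.

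First, I would let $a$ denote the least common ancestor of $p$ and $q$ in the rooted tree $\T$. Because $\T_i'$ is a connected subtree rooted at the Steiner point $s_i$ that contains both $p$ and $q$, the tree paths $\pi_\T(p, s_i)$ and $\pi_\T(q, s_i)$ lie entirely inside $\T_i'$; as the deepest common ancestor of $p$ and $q$ in $\T$, the vertex $a$ lies on the union of these two paths, so $a \in \T_i'$, and the subpaths $\pi_\T(p, a)$ and $\pi_\T(q, a)$ are contained in $\T_i'$ as well. Their concatenation is therefore a valid path from $p$ to $q$ inside the simply-connected polygon $P'$, all of whose vertices already lie in $\T_i'$.

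The key step is then to apply the standard shortcutting property of shortest paths in simple polygons. The unique geodesic $\pi_{P'}(p, q)$ can be obtained from the above concatenated path by iteratively shortcutting locally taut portions, where a shortcut replaces three consecutive vertices $v_{j-1}, v_j, v_{j+1}$ of the chain by the pair $v_{j-1}, v_{j+1}$ whenever the segment $v_{j-1}v_{j+1}$ lies inside $P'$. Each such operation strictly removes a vertex of the chain and never introduces a new vertex of $P'$, and in a simply connected region this process is guaranteed to converge to the (unique) shortest path. Hence every vertex that appears on $\pi_{P'}(p, q)$ must already appear on $\pi_\T(p, a) \cup \pi_\T(q, a) \subseteq \T_i'$, as required.

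The main obstacle I foresee is justifying this last step cleanly, since the ``apex'' $a$ need not be an ordinary polygon vertex: it might be the projection of a site on $\lambda$, or even a subdivision point placed on $\lambda$ when a Steiner point lies above both $p$ and $q$. This is however only a notational issue: the shortcut argument is purely topological and relies solely on $P'$ being simply connected together with the initial path from $p$ to $q$ lying in $P'$ with all vertices in $\T_i'$. Both of these properties hold regardless of the nature of $a$, so the proof goes through uniformly in all cases.
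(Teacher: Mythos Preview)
Your argument has a genuine gap: the claim that iterative vertex-removal shortcutting converges to the geodesic $\pi_{P'}(p,q)$ is false. In a simple polygon, pulling a path taut does yield the unique shortest homotopic path, but pulling taut is \emph{not} the same as your shortcutting operation. The taut path may bend around reflex vertices of $P'$ that appear nowhere on the original chain; conversely, your process can terminate at a locally irreducible chain that is strictly longer than the geodesic and omits those reflex vertices entirely.

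Here is a concrete obstruction in exactly the $\SPT_\lambda$ setting. Let $P'$ be a rectangle with a triangular tooth jutting in from the left wall, and let $\lambda$ be a vertical chord well to the right of the tooth. Place the site $p$ just above the tooth and the site $q$ just below it, both near the left wall. Then $\pi(p,p_\lambda)$ and $\pi(q,q_\lambda)$ are horizontal segments that never come near the tooth, so the tree path $p\to a\to q$ (with $a$ on $\lambda$) does not contain the tooth apex. Yet $\pi_{P'}(p,q)$ must bend around that apex. Your shortcutting of the tree path gets stuck at a chain that detours out towards $\lambda$ and back; it never discovers the apex. Hence the assertion ``every vertex of $\pi_{P'}(p,q)$ already lies on $\pi_\T(p,a)\cup\pi_\T(q,a)$'' is simply false: you are attempting to prove a statement strictly stronger than the lemma, and that stronger statement does not hold.

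The paper's proof is accordingly more delicate and genuinely uses the structure of $\T_i'$ and of $\SPT_\lambda$, not merely that $\T_i'$ is some connected subtree containing the $p$--$q$ tree path. It proceeds by contradiction: take a highest vertex $r\in\pi(p,q)\setminus\T_i'$, extend the geodesic $\pi(r,r_\lambda)$ to a curve $\pi_r$ that splits $P'$ into two pieces, and then argue case by case (depending on which piece contains $p$ and $q$, and on whether the parent $r'$ of $r$ in $\T$ is a polygon vertex or a point on $\lambda$) that either $r$ cannot be a reflex turn of $\pi(p,q)$ or $p$ and $q$ cannot both lie in $\T_i'$.
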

\begin{proof}
    Assume for contradiction that $r$ is a highest vertex in $\T$ used by $\pi(p,q)$ that is not in~$\T_i'$. First, consider the case that $r$ is the root of $\T$. Recall that this means $r$ is the bottom endpoint of~$\lambda$, and thus lies on the boundary of $P'$. As $r$ is on $\pi(p,q)$, it must be that $p$ and~$q$ lie in different subpolygons, and at least one of them lies below the horizontal line through $r$. This implies that $s_i = r$, which is a contradiction.
    
    Next, consider the case that $r$ is not the root of $\T$. Let $r'$ be the parent of $r$. If $r'$ is in~$\T_i'$, then it must be a leaf. We consider the following partition of $P'$. Recall that $r_\lambda$ denotes the closest point on $\lambda$ to $r$. We extend the shortest path $\pi(r,r_\lambda)$ to the boundary of $P'$ by extending the first and last line segments of the path to obtain a path $\pi_r$, see Figure~\ref{fig:definition_Pwithr}(a). Let $\partial P'$ denote the boundary of $P'$. We define $\Pwithr$ to be the closed polygon bounded by $\partial P'$ and $\pi_r$ that contains the polygon edges incident to $r$, and $\Pnotr := P' \setminus \Pwithr$. Because $r$ is a reflex vertex of $P'$,  $\Pwithr$ is well-defined. Without loss of generality, we assume that $\Pwithr$ contains the part of $\lambda$ above $r_\lambda$, as in Figure~\ref{fig:definition_Pwithr}(a). If both $p$ and $q$ are in $\Pnotr$, then $r \notin \pi(p,q)$. It follows that $p$ and/or $q$ are in $\Pwithr$. Without loss of generality, assume that~$p \in \Pwithr$. 
    
    We distinguish two cases based on the location of $p$, see Figure~\ref{fig:definition_Pwithr}(a). Either $p \in A$, where $A \subset \Pwithr$ is bounded by the extension segment starting at $r$ and $\partial P'$, or $p \in B$, where $B \subset \Pwithr$ is bounded by $\pi(r,r_\lambda)$, the extension segment starting at~$r_\lambda$, and~$\partial P'$.

    \begin{figure}
        \centering
        \includegraphics[page=3]{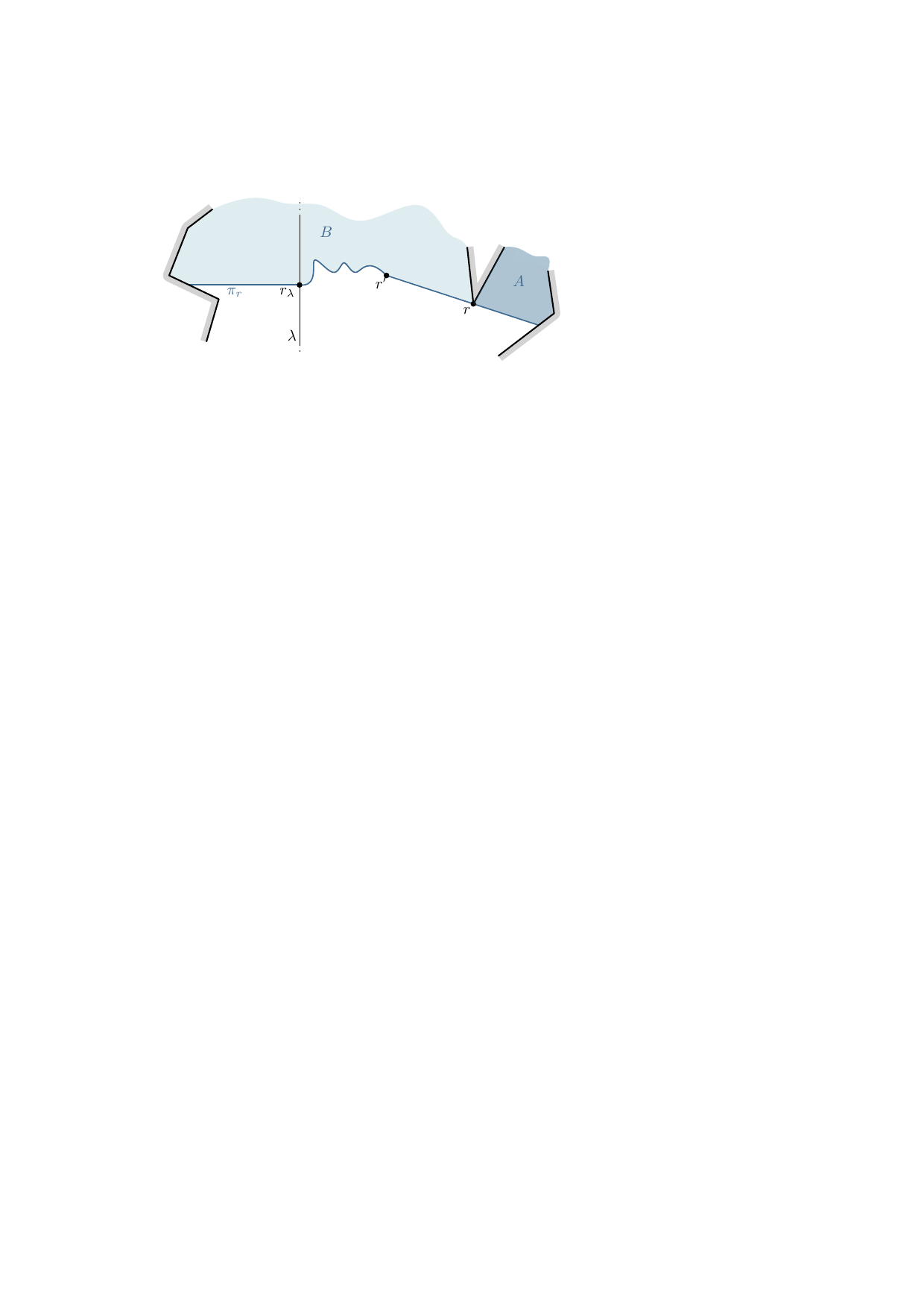}
        \caption{\textbf{\textsf{(a)}} The extended path $\pi_r$ separates the polygon into $\Pwithr = A \cup B$ and $\Pnotr$. \textbf{\textsf{(b)}}~Sites $p_{1\dots 3}$ correspond to the respective subcases (i--iii) based on the structure of the polygon around~$r'$.}
        \label{fig:definition_Pwithr}
    \end{figure}
    
    If $p \in A$, then $p$ is a descendant of $r$ in $\T$. As $p$ and $q$ are in $\T'_i$ and $r$ is not, it must be that $q$ is also a descendant of $r$. It follows that $q \in A$, but this means that $r$ is not a reflex vertex on $\pi(p,q)$, which contradicts it being a shortest path.

    If $p \in B$, the previous paragraph implies that $q \notin A$. Additionally, $q\notin B$ as well, as~$r$ would then not be a reflex vertex in $\pi(p,q)$. It follows that $q \in \Pnotr$. Next, we make a distinction on whether $r'$ is a vertex of $P'$ or not. First, assume that $r'$ is not a vertex of $P'$, and thus $r' \in \lambda$. Because $p \in B$, $p_\lambda$ must be at or above $r'$. Because $q \in \Pnotr$, $q_\lambda$ must be below $r'$. This implies that the path in $\T$ from $p$ to $q$ visits $r'$, which contradicts $p,q \in \T'_i$.

    Next, we assume that $r'$ is a vertex of $P'$. We distinguish three different subcases based on the shape of the polygon around $r'$, see Figure~\ref{fig:definition_Pwithr}(b), and find a contradiction in each case:

    \begin{description}[topsep=0pt]
        \item[(i) The edges of $P'$ incident to $r'$ are in $\Pnotr$.] As $r$ is on $\pi(p,q)$, $q$ must be a descendant of~$r'$. It follows that the Steiner point $s_i$ is located on the path in $\T$ from $q$ to $r'$, so $p$ is also a descendant of $r'$. It follows that $p$ is on the segment $rr'$. However, for $r$ to be on $\pi(p,q)$, $q$ must then be in $A$, which is a contradiction.
        \item[(ii) The edges of $P'$ incident to $r'$ are in $\Pwithr$, and $r'$ is on $\pi(p,p_\lambda)$.] In this case, $p$ is a descendant of~$r'$. This again implies that $q$ is a descendent of~$r'$, which contradicts~$q \in \Pnotr$.
        \item[(iii) The edges of $P'$ incident to $r'$ are in $\Pwithr$, and $r'$ is not on $\pi(p,p_\lambda)$.] The path $\pi(p,q)$ either intersects the boundary of $B$ twice, which is not allowed as both are shortest paths, or visits $r'$ as well. However, this implies that $q \in A$, which is a contradiction.\qedhere
    \end{description}
\end{proof}

\begin{lemma}\label{lem:complexity_spanner_simple_polygon}
    The spanner $\G$ has complexity $O(mn^{1/t}(\log k)^{1+1/t}/k^{1/t} + n\log^2 n)$.
\end{lemma}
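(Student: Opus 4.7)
The plan is to bound the polygon-complexity contributions of $\G_s$ and $\G_\ell$ separately. For both parts I would first derive a tree-complexity bound and then transfer it to a polygon-complexity bound via Lemma~\ref{lem:shortest_path_in_tree}. Each link $(p,q)$ of $\G_\F$ lies either inside an entire small tree $\SPT_{i,j}$ or inside a subtree $\T_i'$ of some large tree; Lemma~\ref{lem:shortest_path_in_tree} guarantees that the polygon geodesic $\pi(p,q)$ uses only polygon vertices of that (sub)tree. Since every tree edge of $\SPT_\lambda$ is itself a line segment in $P$, the geodesic can only shortcut the corresponding tree path, so per-link polygon complexity is dominated by per-link tree complexity.

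For $\G_s$, I would exploit the balanced recursion: at level $i$ each subproblem has $n_{i,j}=O(n/2^i)$ sites, with $\sum_j n_{i,j}=O(n)$ and $\sum_j m_{i,j}=O(m)$. Applying Lemma~\ref{lem:t-spanner-tree} per small tree and summing over $j$ yields level complexity $O(m(n/2^i)^{1/t}+n\log n)$; summing further over $i = \log k + 1,\dots,O(\log n)$ gives contribution $O(mn^{1/t}/k^{1/t}+n\log^2 n)$.

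For $\G_\ell$, note that each site and each polygon vertex appears in at most $O(\log k)$ large trees, so $\F_\ell$ contains $O(n\log k)$ sites and $O((m+n)\log k)$ tree vertices. Applying Theorem~\ref{thm:steiner_spanner_forest} to $\F_\ell$ with $k$ Steiner points produces tree complexity
\[
  O\bigl((m+n)\log k\cdot (n\log k)^{1/t}/k^{1/t} + n\log k\cdot \log(n\log k/k)\bigr),
\]
which simplifies to $O(mn^{1/t}(\log k)^{1+1/t}/k^{1/t} + n\log^2 n)$ after absorbing the $n$-contribution of the first term and bounding $\log(n\log k/k) = O(\log n)$. Combining the two contributions then gives the claimed total bound.

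The main obstacle is the per-link tree-to-polygon transfer: although Lemma~\ref{lem:shortest_path_in_tree} confines each geodesic to polygon vertices of its subtree $\T_i'$, I still need to argue that the number of segments of $\pi(p,q)$ is dominated by the number of tree edges on the link's tree path in the spanner. I would adapt the charging argument from~\cite{complexity_spanners}: each polygon segment of the geodesic corresponds to a polygon vertex in $\T_i'$, which gets charged to an adjacent tree edge on the link's tree path, with no vertex being charged more than $O(1)$ times per link.
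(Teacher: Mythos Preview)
Your high-level decomposition into $\G_s$ and $\G_\ell$, the vertex and site counts for $\F_\ell$, and the level-by-level summation for $\G_s$ all match the paper. The difference, and the gap, is in how you do the tree-to-polygon transfer for~$\G_\ell$.

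You first bound the \emph{tree} complexity of $\G_\ell$ via Theorem~\ref{thm:steiner_spanner_forest} with $M=O((m+n)\log k)$ tree vertices, then argue per link that polygon complexity is dominated by tree complexity. This route loses the lemma: the $n$-contribution you want to absorb is $n^{1+1/t}(\log k)^{1+1/t}/k^{1/t}$, and that is \emph{not} $O(n\log^2 n)$ in general---take $t=1$, $m=O(1)$, $k=\sqrt{n}$, and you get $\Theta(n^{3/2}\log^2 n)$. It is absorbed by the first target term only when $m=\Omega(n)$, which is not assumed.

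The paper sidesteps this by never going through tree complexity. It observes that the algorithm of Lemma~\ref{lem:t-spanner-tree} \emph{is} the simple-polygon spanner algorithm of~\cite{complexity_spanners}, whose $O(mn^{1/t}+n\log n)$ bound is already a \emph{polygon}-complexity bound with $m$ counting only polygon vertices. Since Theorem~\ref{thm:steiner_spanner_forest} merely applies Lemma~\ref{lem:t-spanner-tree} to each subtree $\T_i'$, Lemma~\ref{lem:shortest_path_in_tree} is invoked once, at the subtree level: every geodesic between sites of $\T_i'$ stays within the polygon vertices of $\T_i'$, so the polygon analysis of~\cite{complexity_spanners} applies verbatim to $\T_i'$ with $m$ replaced by the number of polygon vertices in~$\T_i'$. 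Summing over subtrees then yields $m\log k$ rather than $(m+n)\log k$ in the first term, and the stated bound follows directly. Your per-link charging argument is therefore unnecessary---the obstacle you flagged is handled by reusing the existing polygon-level analysis, not by a new per-link inequality.
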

\begin{proof}
    To bound the complexity of the links in $\G$ generated by 
    $\G_s$ we apply Lemma~\ref{lem:t-spanner-tree} directly. As Lemma~\ref{lem:t-spanner-tree} corresponds to the algorithm to construct a low complexity spanner in a polygon using the shortest path tree, the complexity bound also holds in the simple polygon setting. Using $\sum_{j=0}^{2^i} m_{i,j}= O(m)$, where $m_{i,j}$ is the number of vertices in $\SPT_{i,j}$, the complexity is
    \begin{align*}
        \sum_{i= \log k}^{O(\log n)} \sum_{j=0}^{2^i} O\left(m_{i,j}\left(\frac{n}{2^i}\right)^{1/t} + \frac{n}{2^i} \log \left(\frac{n}{2^i}\right)\right) 
        &= O\left(\frac{mn^{1/t}}{k^{1/t}} + n\log^2n\right).
    \end{align*}
    
    For $\F_\ell$, the algorithm of Lemma~\ref{lem:t-spanner-tree} is used as a subroutine on every subtree $\T_i'$. Lemma~\ref{lem:shortest_path_in_tree} implies that the complexity bound of Theorem~\ref{thm:steiner_spanner_forest} also holds for links in~$P$. Recall that the number of sites in $\F_\ell$ is $O(n\log k)$. A vertex of $P$ can occur in at most two subproblems at each level of the recursion that partitions $P$, thus the number of vertices in $\F_\ell$ is $O((m+n)\log k)$. As the $n$ sites are equally divided over all subproblems at level~$i$, the complexity of the links in $\G$ generated by $\G_\ell$ given by Theorem~\ref{thm:steiner_spanner_forest} is improved to 
    \begin{equation*}
        O\left(\frac{m\log k (n\log k)^{1/t}}{k^{1/t}} + n\log k \log \left(\frac{n\log k}{k}\right)\right)= O\left(\frac{mn^{1/t}(\log k)^{1+1/t}}{k^{1/t}} + n\log^2 n\right).\qedhere
    \end{equation*}
\end{proof}

\subparagraph{Running time.} 
We use a similar algorithm to de Berg, van Kreveld, and Staals~\cite{complexity_spanners} to efficiently compute our spanner. We briefly sketch the algorithm here. We first preprocess the polygon: We build a shortest path data structure~\cite{Chazelle_triangulate,2PSP_simple_polygon}, and the horizontal and vertical decomposition~\cite{Chazelle_triangulate,Kirkpatrick_point_location} in $O(m)$ time. Next, for each site we find the trapezoid (in the horizontal and vertical decomposition) that contains it. For each trapezoid, we sort the sites within on $x$-coordinate. This preprocessing takes $O(n(\log n + \log m))$ time.

Then, we recursively construct all shortest path trees $\SPT_{i,j}$. We find a vertical separator~$\lambda$ such that at most $2n/3$ sites lie in either subpolygon in~$O(n+m)$ time using the vertical decomposition (\cite[Lemma 10]{complexity_spanners}), and construct the shortest path tree $\SPT_\lambda$ in~$O(m + n \log m)$ time using the horizontal decomposition (\cite[Lemma 11]{complexity_spanners}). The time to construct the shortest path trees $\SPT_{i,j}$ for all $O(\log n)$ levels is thus $O(m\log n + n \log m \log n)$.

Finally, we construct the spanners on $\F_s$ and $\F_\ell$. Theorem~\ref{thm:steiner_spanner_tree} and Theorem~\ref{thm:steiner_spanner_forest} state that the running time to construct these spanners is asymptotic in their size plus the number of vertices in the forest. As there are $O((m+n) \log n)$ vertices in $\F$, the total running time is $O(n\log^2 n + m \log n)$, which dominates the shortest path tree construction time.

\begin{restatable}{theorem}{simplePolygonTheorem}\label{thm:simplepolygon}
    Let $S$ be a set of $n$ point sites in a simple polygon $P$ with $m$ vertices, and $t \geq 1$ be any integer constant. For any $1 \leq k\leq n$, we can build a geodesic $2\sqrt{2}t$-spanner with at most $k$ Steiner points, of size $O(n \log^2 n)$ and complexity $O(mn^{1/t}(\log k)^{1+1/t}/k^{1/t} + n\log^2 n)$ in $O(n\log^2 n + m \log n + K)$ time, where $K$ is the output size.
\end{restatable}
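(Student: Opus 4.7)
The plan is to assemble the theorem by combining the earlier lemmas that already establish most of the required bounds, and then carefully analyze the running time of the algorithm sketched in the ``Running time'' paragraph preceding the statement. Specifically, the spanning ratio $2\sqrt{2}t$ and the size bound $O(n\log^2 n)$ follow immediately from Lemma~\ref{lem:spanning_ratio_simple_polygon}, while the complexity bound $O(mn^{1/t}(\log k)^{1+1/t}/k^{1/t} + n\log^2 n)$ is exactly the statement of Lemma~\ref{lem:complexity_spanner_simple_polygon}. So the only thing that still needs to be argued carefully is the time complexity.

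For the running time I would split the algorithm into three phases. First, a preprocessing phase: build a shortest-path data structure~\cite{Chazelle_triangulate,2PSP_simple_polygon} together with the horizontal and vertical decompositions~\cite{Chazelle_triangulate,Kirkpatrick_point_location} in $O(m)$ time, then locate each site in its trapezoid and sort the sites in each trapezoid by $x$-coordinate, which takes $O(n(\log n + \log m))$ time. Second, recursively compute all shortest path trees $\SPT_{i,j}$: at each subproblem a vertical separator $\lambda$ with at most $2n/3$ sites on either side can be found in $O(n+m)$ time using the vertical decomposition, and $\SPT_\lambda$ can then be constructed in $O(m + n\log m)$ time using the horizontal decomposition, as shown in~\cite{complexity_spanners}. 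Summing the per-level work $O(m + n\log m)$ over the $O(\log n)$ recursion levels yields $O(m\log n + n\log m\log n)$ for this phase.

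Third, apply Lemma~\ref{lem:t-spanner-tree} to every tree in $\F_s$ to obtain $\G_s$, and Theorem~\ref{thm:steiner_spanner_forest} to the forest $\F_\ell$ to obtain $\G_\ell$; then copy the resulting links into $P$. Each site of $S$ appears in one tree per recursion level, and each polygon vertex appears in at most two subproblems per level, so $\F$ has $O((m+n)\log n)$ vertices in total. Since both Lemma~\ref{lem:t-spanner-tree} and Theorem~\ref{thm:steiner_spanner_forest} run in time linear in the number of vertices of the input forest plus the output size, this phase takes $O(n\log^2 n + m\log n + K)$ time, where $K$ is the total output size. This bound dominates the preprocessing and the shortest-path-tree construction, giving the claimed overall running time.

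I do not expect any serious obstacle: the spanner properties and the complexity bound are already proved, and the running-time argument is a straightforward bookkeeping exercise that sums the per-level costs of the recursion and invokes the running-time guarantees of Lemma~\ref{lem:t-spanner-tree} and Theorem~\ref{thm:steiner_spanner_forest}. The one place where some care is needed is in noting that a polygon vertex can occur in (copies within) multiple subproblems, so that the total vertex count in $\F$ grows by only a $\log n$ factor rather than anything worse; this is what ensures that the $m\log n$ term, rather than an $m\log^2 n$ term, appears in the final time bound.
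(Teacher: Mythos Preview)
Your proposal is correct and mirrors the paper's own argument: the spanning ratio and size come from Lemma~\ref{lem:spanning_ratio_simple_polygon}, the complexity from Lemma~\ref{lem:complexity_spanner_simple_polygon}, and the running time is obtained exactly as in the ``Running time'' paragraph by summing the $O(m+n\log m)$ per-level cost over $O(\log n)$ levels and then invoking the linear-plus-output running times of the tree/forest spanner constructions on a forest with $O((m+n)\log n)$ vertices. Your extra remark about why polygon vertices contribute only an $O(\log n)$ blow-up is a nice clarification that the paper leaves implicit here (it appears inside the proof of Lemma~\ref{lem:complexity_spanner_simple_polygon}).
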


\subparagraph{A relaxed geodesic $(2k + \varepsilon)$-spanner.} In a more recent version of the paper by de Berg, van Kreveld, and Staals~\cite{full_version,complexity_spanners} they show how to apply the refinement proposed by Abam, de Berg, and Seraji~\cite{SpannerPolyhedralTerrain} to improve the spanning ratio to $(2k+\varepsilon)$ for any constant $\eps \in (0,2k)$. They make two changes in their approach. First, instead of using the shortest path between two sites as a link they allow a link to be any path between two sites. They call such a spanner a \emph{relaxed} geodesic spanner. Second, for each split of the polygon they construct spanners on several 
sets of sites in the 1-dimensional weighted space. Using the same adaptations, we obtain a relaxed $(2k+\eps)$-spanner of complexity $O(mn^{1/t}(\log k)^{1+1/t}/k^{1/t} + n\log^2 n)$.

\section{Steiner spanners in polygonal domains}
\label{sec:polygonal_domain}

If the polygon contains holes, the spanner construction in the previous section no longer suffices. In particular, we may need a different type of separator, and shortest paths in $P$ are no longer restricted to vertices in some subtree (Lemma~\ref{lem:shortest_path_in_tree} does not hold). De Berg, van Kreveld, and Staals~\cite{complexity_spanners} run into similar problems when generalizing their low complexity spanner, and solve them as follows. There are two main changes in their construction. First, the separator is no longer a line segment, but a \emph{balanced separator} that consists of at most three shortest paths that partition the domain into two subdomains $P_r$ and $P_\ell$. They then construct a spanner $\G_\lambda$ on the 1-dimensional space containing the projections of the sites for each shortest path in the separator. Second, the links that are included in the spanner are no longer shortest paths, but consist of at most three shortest paths, resulting in a relaxed geodesic spanner. In contrast to the simple polygon, using a 1-dimensional spanner with spanning ratio $t$ results in a spanning ratio in~$P$ of~$3t$~\cite{complexity_spanners}.

To construct a low complexity spanner using $k$ Steiner points, we use our simple polygon approach with the adaptions of~\cite{complexity_spanners}. The number of trees, and thus the number of sites and vertices in the trees, increases by a constant factor, as we create at most three shortest path trees at each level. To bound the complexity, we can no longer apply Lemma~\ref{lem:shortest_path_in_tree}. However, the links that are added to $\G$ are shortest paths in the shortest path tree. Therefore, the bound on the complexity of $\G_\F$ directly translates to a bound on the complexity of $\G$. As in the simple polygon case, we obtain a spanner of complexity $O(mn^{1/t}(\log k)^{1+1/t}/k^{1/t} + n\log^2 n)$.

\begin{restatable}{theorem}{polygonaldomain}
    Let $S$ be a set of $n$ point sites in a polygonal domain $P$ with $m$ vertices, and $t \geq 1$ be any integer constant. 
    For any $k \leq n$, we can build a relaxed geodesic $6t$-spanner with at most $k$ Steiner points, of size $O(n \log n \log (n/k))$ and complexity $O(mn^{1/t}(\log k)^{1+1/t}/k^{1/t} + n\log^2 n)$ in $O(n \log^2 n + m \log n \log m +  K)$ time, where $K$ is the output size.
\end{restatable}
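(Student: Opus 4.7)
The plan is to follow the simple polygon construction of Theorem~\ref{thm:simplepolygon} and swap in the polygonal-domain machinery from~\cite{complexity_spanners}. At each recursive split I would use a balanced separator consisting of up to three shortest paths in $P$ (instead of a single vertical segment), which partitions the domain into two subdomains each containing at most $2n/3$ sites. For every shortest path $\lambda$ appearing in a separator, I would build the shortest path tree $\SPT_\lambda$ and prune it to the vertices that have a site as descendant, so that its leaves are exactly the sites of the current subproblem. This produces up to three shortest path trees per recursion level; their union over all $O(\log n)$ levels forms the forest $\F$.

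I would then split $\F$ into $\F_s$ (trees from levels $i>\log k$) and $\F_\ell$ (trees from levels $i\leq\log k$), exactly as in Section~\ref{sec:spanners_in_a_simple_polygon}. On $\F_s$ I would apply Lemma~\ref{lem:t-spanner-tree} per tree to obtain $\G_s$ without Steiner points, and on $\F_\ell$ I would apply Theorem~\ref{thm:steiner_spanner_forest} with at most $k$ Steiner points to obtain $\G_\ell$. Each link $(p_\lambda,q_\lambda)$ in $\G_\F=\G_s\cup\G_\ell$ would be realized in $P$ as the concatenation of $\pi(p,p_\lambda)$, $\pi(p_\lambda,q_\lambda)$, and $\pi(q_\lambda,q)$, i.e.\ a relaxed link consisting of at most three shortest paths; Steiner points of $\G_\F$ that lie on separator paths are kept as Steiner points of $\G$ in $P$.

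For the spanning ratio, I would invoke the argument of~\cite{complexity_spanners} for polygonal domains: a $2t$-spanner on the weighted one-dimensional projection space associated with a separator yields a $3\cdot 2t = 6t$ relaxed geodesic spanner in $P$, where the factor $3$ arises from the three-path realization. For the complexity, the key observation is that every edge added to $\G$ is an edge of some $\SPT_\lambda$, hence an edge of $\F$, so the forest-complexity bound of Lemma~\ref{lem:complexity_spanner_simple_polygon} propagates to $\G$ verbatim, without needing an analogue of Lemma~\ref{lem:shortest_path_in_tree} (which fails in the presence of holes). The size and running-time bounds follow the same accounting as in the simple polygon case, taking into account the constant-factor blow-up from having up to three separators per level and the extra $O(m\log m)$ per level to compute a balanced separator and its shortest path trees, which contributes the $m\log n\log m$ term.

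The main obstacle is precisely the failure of Lemma~\ref{lem:shortest_path_in_tree}: with holes, a shortest path in $P$ between two sites of the same subtree may leave that subtree, so one cannot bound complexity by the number of vertices of the subtree as in the simple polygon case. I would resolve this by defining each link of $\G$ strictly through the $\SPT_\lambda$ edges that Theorem~\ref{thm:steiner_spanner_forest} outputs, so that the complexity accounting runs entirely inside the forest rather than inside $P$; this is exactly what forces the spanner to be relaxed rather than geodesic.
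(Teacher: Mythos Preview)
Your proposal is correct and matches the paper's approach essentially point for point: the paper likewise swaps in the balanced separator of at most three shortest paths from~\cite{complexity_spanners}, builds up to three shortest path trees per level, applies the same $\F_s/\F_\ell$ split with Lemma~\ref{lem:t-spanner-tree} and Theorem~\ref{thm:steiner_spanner_forest}, and handles the failure of Lemma~\ref{lem:shortest_path_in_tree} exactly as you do, by noting that the relaxed links are paths in the shortest path tree so the forest complexity bound transfers directly to $\G$. Your running-time accounting also agrees with the paper's, including the $m\log n\log m$ term from computing the separators and their shortest path trees.
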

\begin{proof}
The stated spanning ratio, size, and complexity bounds follow from Theorem~\ref{thm:simplepolygon} together with the adaptation described in Section~\ref{sec:polygonal_domain}. What remains is to prove the stated running time.
    De Berg, van Kreveld, and Staals~\cite{full_version,complexity_spanners} show that a balanced separator can be computed in $O(m \log m + n \log n)$ time.
After that, the shortest path tree $\SPT_\lambda$ of a shortest path~$\lambda$ in the separator can be constructed in $O((n+m)\log m)$ time~\cite{SPMHershbergerSuri}. The time to construct all balanced separators and the shortest path trees is thus $O(n \log^2 n + m \log n \log m)$. As in the simple polygon case, we can build the spanner on $\F$ in $O(n\log^2n + m\log n)$ time. So, the total running time is $O(n\log^2n + m \log n\log m + K)$.
\end{proof}

\section{NP-hardness}
\label{sub:np-hardness}

    Given a polygonal domain $P$ with $m$ vertices and a set $S$ of $n$ sites in $P$ as input, the \steinerspanner problem asks whether there exist a spanner with spanning ratio $t$ that uses at most~$k$ Steiner points and has complexity at most $C$. 
    In this section, we show that this problem is NP-hard already for $t=3$ using a reduction from \vertexcover in penny graphs, which is an NP-complete problem~\cite{vertexcover_penny}. 
    Recall that a \emph{vertex cover} of a graph $G=(V,E)$ is a set of vertices $W\subseteq V$ such that each edge in $E$ has at least one incident vertex in $W$. 
    The \vertexcover problem cover asks whether a given graph has a vertex cover of size at most~$k$.
    \emph{Penny graphs} are contact graphs of unit disks. 
    Equivalently, they are the graphs that admit a \emph{penny graph drawing}: a straight-line drawing in the plane in which no two edges cross, all edges are unit-length, and the angle between any two edges emanating from the same vertex is at least $\pi/3$. 
   
    To show that the \steinerspanner problem is NP-hard, we first show in Lemma~\ref{lem:without_degree_one} that \vertexcover remains hard on penny graphs without vertices of degree one.
    
    \begin{lemma}\label{lem:without_degree_one}
        \vertexcover restricted to penny graphs without degree-one vertices is NP-complete, even if we are provided with a corresponding penny graph drawing.
    \end{lemma}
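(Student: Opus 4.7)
The plan is to reduce from \vertexcover on penny graphs with drawing, which is NP-complete~\cite{vertexcover_penny}. Membership in NP is immediate, since any candidate cover can be verified in polynomial time. Given an instance $(G,k)$ together with a penny drawing of $G$, I would construct a penny graph $G'$ without degree-one vertices, together with a penny drawing and parameter $k'$, such that $(G,k)$ is a yes-instance if and only if $(G',k')$ is.

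The combinatorial gadget is a \emph{paw}: for each degree-one vertex $v$ of $G$, with unique neighbor $u$, introduce three new vertices $w_v,a_v,b_v$ and the four new edges $vw_v$, $w_va_v$, $w_vb_v$, $a_vb_v$. In the resulting graph $G'$, vertex $v$ has degree at least $2$, $w_v$ has degree $3$, and $a_v,b_v$ have degree $2$, so $G'$ contains no degree-one vertex. Let $d$ be the number of degree-one vertices of $G$ and set $k' = k + 2d$.

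For the correspondence of vertex covers, in the forward direction any cover $C$ of $G$ of size at most $k$ extends to the cover $C' = C\cup\bigcup_v\{w_v,a_v\}$ of $G'$ of size $|C|+2d$. In the backward direction, given any cover $C'$ of $G'$ of size at most $k'$, a case analysis on each paw shows that $|C'\cap\{w_v,a_v,b_v\}|\geq 2$: if $v\notin C'$ then $w_v\in C'$ must cover $vw_v$ and at least one of $a_v,b_v\in C'$ must cover $a_vb_v$; if $v\in C'$ then covering the triangle $w_va_vb_v$ still requires two of its vertices. Summing over the $d$ paws gives $|C'\cap(V(G')\setminus V(G))|\geq 2d$, so $C'\cap V(G)$ is a vertex cover of $G$ of size at most $k'-2d=k$.

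The main obstacle is producing a valid penny drawing of $G'$. The canonical placement puts $w_v$ at the antipode of $u$ on the unit circle around $v$, and $a_v,b_v$ symmetrically tangent to $w_v$ and to each other on the far side. Since $v$ is degree-one, only one direction at $v$ is already occupied by a tangent disk, and a packing argument bounds by $O(1)$ the number of other disks within distance $2$ of $v$; each such disk blocks only a bounded arc on the unit circle around $v$, so some angular direction for the paw is always available, possibly after rotating the paw around $v$. In pathological cases where the canonical sector is blocked, the reduction can first locally perturb the drawing by swinging $v$ along the unit circle around its neighbor $u$ to a less crowded angular position before attaching the paw; since $uv$ is the only edge incident to $v$, this perturbation does not disturb any other adjacency of $G$. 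The whole construction is computable in polynomial time.
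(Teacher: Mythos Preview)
Your combinatorial reduction is sound: the paw gadget correctly adds exactly~$2$ to the optimum per degree-one vertex, and the case analysis you give is clean. The difficulty is entirely in the geometric step, and there the argument has a genuine gap.

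You assert that a packing argument leaves some direction on the unit circle around $v$ free for~$w_v$, but this is not true in general. Take $v$ at the origin with its unique neighbour $u$ at angle~$0$, and place five further disks at angles $\pi/3,2\pi/3,\pi,4\pi/3,5\pi/3$, each at distance $1+\epsilon$ from~$v$. One checks that all pairwise distances among these seven disks exceed~$1$, so this is a valid penny configuration in which $v$ has degree one. A disk at distance $d\in(1,2)$ from $v$ forbids an arc of half-angle $\arccos(d/2)$ on the unit circle around~$v$; for $d=1+\epsilon$ this half-angle is $\pi/3-O(\epsilon)$. Together with the arc of half-angle $\pi/3$ blocked by~$u$, the six forbidden arcs overlap pairwise and cover the entire circle, so there is no admissible position for~$w_v$ at all. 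Your fallback of swinging $v$ along the unit circle around $u$ is equally non-rigorous: nothing prevents $u$ from being similarly hemmed in, and you also never argue that the paws attached to \emph{different} degree-one vertices do not collide with one another.

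The paper sidesteps all of this by going in the opposite direction: rather than adding gadgets, it iteratively deletes each degree-one vertex together with its unique neighbour (the standard degree-one kernelization rule for \vertexcover), setting $k'=k-|R|/2$ where $R$ is the set of deleted vertices. The resulting graph is an induced subgraph of the input, so it is trivially still a penny graph with the inherited drawing, and no geometric work is needed. If you want to keep an additive construction you would need a substantially more careful embedding argument; the subtractive route is both shorter and avoids the obstruction above entirely.
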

    \begin{proof}
        Clearly the problem is in NP.
        We reduce from \vertexcover restricted to penny graphs, which has been shown to be NP-complete~\cite{vertexcover_penny}.
        Let $G = (V,E)$ be a penny graph.
        We construct a penny graph $G' = (V',E')$ as follows. 
        Repeatedly remove a degree one vertex and its neighbor until there are no degree-one vertices left.
        Let $R$ be the set of vertices removed from $G$, i.e. $R = V \setminus V'$. 
        We show that $G$ has a vertex cover $\VC$ of size at most $k$ if and only if $G'$ has a vertex cover $\VC'$ of size at most $k'$, where $k' = k - |R|/2$. 
        Consider a degree-1 vertex $u$ of $G$ and let $v$ be its only neighbor. 
        If there is a vertex cover of $G$ of size $k$ that includes $u$, we can produce a vertex cover of size at most $k$ that includes $v$ and not $u$. 
        Thus, there is always a solution for \vertexcover that includes $v$ and not $u$, and any vertex cover of $G-\{u,v\}$ can be extended to a vertex cover of $G$ by adding a single vertex $u$. 
        Iterating this argument yields the desired result. 
        We remark that the fact that we can iteratively remove degree-1 vertices in this way is a standard reduction rule for finding a kernel for \vertexcover~\cite{CyganFKLMPPS15parameterized}.
    \end{proof}
    
    To obtain a reduction from \vertexcover on penny graphs to \steinerspanner, we require that there are no small cycles in the graph. 
    Define an \emph{$s$-subdivided penny graph} to be a graph obtained from a penny graph by replacing each edge by a path of $s$ edges. 
    Observe that an $s$-subdivided penny graph has degree-one vertices if and only if the original penny graph has degree-one vertices.
    In Lemma~\ref{lem:split_edges} we show that \vertexcover remains hard on $3$-subdivided penny graphs without vertices of degree one.
    
    \begin{lemma}\label{lem:split_edges}
        \vertexcover restricted to $3$-subdivided penny graphs without degree-one vertices is NP-complete, even if we are provided with a corresponding penny graph drawing.
    \end{lemma}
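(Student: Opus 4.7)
The plan is to reduce from \vertexcover on penny graphs without degree-one vertices, which is NP-complete by Lemma~\ref{lem:without_degree_one}; membership of the restricted problem in NP is clear. Given an input instance $(G,D,k)$ where $G=(V,E)$ and $D$ is a penny graph drawing of $G$, I will construct a $3$-subdivision $G'=(V',E')$ of $G$ together with an explicit penny graph drawing $D'$, and prove $\tau(G')=\tau(G)+|E|$. Then the answer to the target instance $(G',D',k+|E|)$ equals the answer to $(G,D,k)$.

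To obtain $D'$, scale $D$ by a factor of $3$ so that every edge has length $3$, and for each edge $uv \in E$ insert two new vertices $a_{uv}$ and $b_{uv}$ at the points on the scaled segment from $u$ to $v$ at distance $1$ and $2$ from $u$, respectively. Every resulting edge of $G'$ then has unit length. The angles between edges meeting at an original vertex are unchanged by scaling and by inserting collinear subdivision points, so they remain at least $\pi/3$; the two edges incident to any new vertex are collinear and form angle $\pi$. No crossings are introduced since the new edges are subsegments of original edges, which were crossing-free. The subtle check is that all vertices of $D'$ are distinct: after scaling, any two original vertices of $D$ lie at pairwise distance at least $3$, so no subdivision point (at distance $1$ or $2$ from its defining endpoint) can coincide with an original vertex; and if two subdivision points on distinct edges coincided, then those edges of $D$ would either cross (forbidden) or share the endpoint from which both subdivision points are measured, in which case the two points lie on distinct rays from that shared endpoint. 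Finally, each new vertex has degree exactly $2$, while each original vertex retains its degree from $G$, which is at least $2$ by hypothesis, so $G'$ has no degree-one vertices.

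Next I would prove $\tau(G')=\tau(G)+|E|$. For the upper bound, given a vertex cover $W$ of $G$, for each edge $uv \in E$ select an endpoint in $W$, say $u$, and add $b_{uv}$ to $W'$; the three subdivision edges $u a_{uv}$, $a_{uv} b_{uv}$, $b_{uv} v$ are then covered by $u$ and $b_{uv}$, yielding $|W'|=|W|+|E|$. For the lower bound, given a cover $W'$ of $G'$, set $W_0 = W' \cap V$. For every edge $uv \in E$ with $u,v \notin W'$, both $a_{uv}$ and $b_{uv}$ must belong to $W'$ in order to cover $u a_{uv}$ and $b_{uv} v$; for every other edge of $G$, at least one of $a_{uv}, b_{uv}$ must belong to $W'$ to cover the middle edge $a_{uv} b_{uv}$. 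Letting $x$ be the number of $G$-edges uncovered by $W_0$, we obtain $|W' \setminus V| \geq |E| + x$, so $|W_0| \leq |W'| - |E| - x$. Augmenting $W_0$ by one endpoint per uncovered edge produces a vertex cover of $G$ of size at most $|W'|-|E|$, establishing the identity.

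The main obstacle is the geometric verification that the scaled, subdivided drawing $D'$ is actually a valid penny graph drawing; the scale factor $3$ is chosen precisely so that unit edge lengths, the $\pi/3$ angle condition, crossing-freeness, and pairwise distinctness of the vertices all hold simultaneously. The remainder of the argument is a standard counting comparison between the two vertex covers.
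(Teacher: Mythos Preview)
Your proposal is correct and follows essentially the same approach as the paper: reduce from \vertexcover on penny graphs without degree-one vertices (Lemma~\ref{lem:without_degree_one}), form the $3$-subdivision, and prove $\tau(G')=\tau(G)+|E|$. You are in fact more explicit than the paper, which leaves the construction of the subdivided drawing implicit; your lower-bound counting via $|W'\setminus V|\ge |E|+x$ is a minor rephrasing of the paper's normalization argument (replace $v'$ by $v$ when both $u',v'$ are in the cover) and yields the same bound.
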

    \begin{proof}
        Again, NP membership is trivial. We reduce from \vertexcover restricted to penny graphs without degree-one vertices, which is NP-complete by Lemma~\ref{lem:without_degree_one}. Let $G = (V,E)$ be a penny graph without degree-one vertices, and let $G' = (V', E')$ be the corresponding $3$-subdivided penny graph.
        Let $k$ be the size of a minimal vertex cover of $G$, and $k'$ be the size of a minimal vertex cover of $G'$.
        It suffices to show that $k'=k+|E|$.
        
        We can extend any vertex cover $\VC$ of $G$ to a vertex cover of $G'$ by adding a single vertex to the cover for each subdivided edge $e\in E$: $\VC$ contains at least one endpoint of $e$, so $\VC$ already covers the first or last edge of the path of three edges that replace $e$ in $G'$. The remaining two edges that replace $e$ can be covered by adding a single vertex to the cover.
        The resulting vertex cover of $G'$ has size $|\VC|+|E|$, so $k'\leq k+|E|$.
        
        For the reverse direction, we show that there exists a vertex cover $\VC'$ of $G'$, in which for each path $(u,u')-(u',v')-(v',v)$ that replaces an edge $(u,v)$ of $G$, exactly one of $u'$ and $v'$ lie in $\VC'$.
        Indeed, at least one must lie in $\VC'$ to cover the edge $(u',v')$, and if both lie in $\VC'$, we maintain a vertex cover by replacing $v'$ by $v$.
        It follows that $\VC'$ contains at least one of $u$ and $v$ for each edge $(u,v)$ of $G$, and hence $\VC'\cap V$ is a vertex cover for $G$ of size $|\VC'|-|E|$, so $k'\geq k+|E|$. 
        Therefore $k'=k+|E|$, which completes the proof.
    \end{proof}
    
    Next, we show that \steinerspanner is NP-hard by a reduction from \vertexcover on $3$-subdivided penny graphs without degree-one vertices. We construct a polygonal domain and a set of sites as follows, see Figure~\ref{fig:np_hard_construction} for an example. 

\subparagraph{Construction.} 
    Let $G=(V,E)$ be a $3$-subdivided penny graph without degree-one vertices.
    Based on a penny graph drawing of the corresponding `unsubdivided' graph, we obtain a penny graph drawing of $G$ in which all `subdivided edges' are straight, see Figure~\ref{fig:np_hard_construction} (left).
    We construct a polygonal domain whose free space consists of a small neighborhood around the vertices and edges of such a drawing.
    Near each vertex $v$ of $G$, our polygonal domain contains a \emph{corridor} that splits into two corridors of length $\eps<1/8$ 
    and place a site at the end of each corridor, see Figure~\ref{fig:np_hard_construction} (right). 
    Let $p_h(v)$ and $p_\ell(v)$ denote these sites.
    These corridors are `carved' into a hole adjacent to the vertex. 
    The corridor to $p_h(v)$ has a high complexity of $3M$, while the corridor of $p_\ell(v)$ has a lower complexity of~$M$, with $M > 26n$. 
    Additionally, we include a convex chain of $M$ vertices between the two corridors, such that the shortest path $\pi(p_h(v),p_\ell(v))$ has complexity $5M$, while the shortest path from $p_\ell(v)$ to the entrance has a complexity of just $M$. 
    We denote by $\delta<\eps/3$ the difference between the length of the shortest path $\pi(p_h(v),p_\ell(v))$ and the path via $v$, i.e. $d(p_h(v),p_\ell(v)) = d(p_h(v),v) + d(p_\ell(v),v) -\delta = 2\eps-\delta$. 
    The set of sites $S$ has size $n := 2|V|$.
    
    Suppose that we want to decide whether $G$ has a vertex cover of size $k$.
    Then the constructed polygonal domain and sites form an instance of \steinerspanner in which we allow $k$ Steiner points, and require a spanning ratio of $t = 3$ and a maximum complexity of $C = M \cdot (3n - 2k + 1/2)$.
    We show that this is a Yes-instance if and only if $G$ has a vertex cover of size at most $k$. As all coordinates of $P$ and the value of $C$ can be chosen such that they are polynomial in $n$, this implies that the \steinerspanner problem is NP-hard.
    
    \begin{figure}
        \centering
        \includegraphics{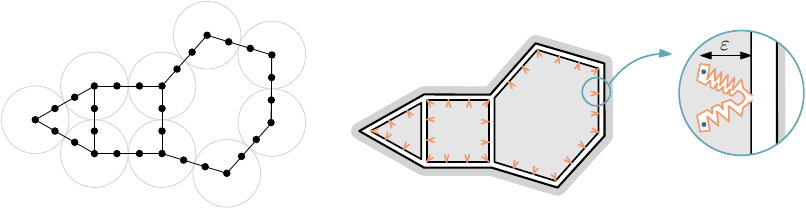}
        \caption{Constructing a \steinerspanner instance from a penny graph with subdivided edges.}
        \label{fig:np_hard_construction}
    \end{figure}

\subparagraph{Vertex cover implies spanner.} 
    Suppose there is a vertex cover $\VC$ of size at most $k$ for $G$. 
    For each vertex $v\in\VC$, we place a Steiner point $s(v)$ at the entrance of the corridors of $v$.
    Sites that belong to a vertex $v\in\VC$ lie at distance $\leq\eps$ from the Steiner point $s(v)$.
    The remaining sites in $S$ belong to a vertex adjacent to a vertex in $\VC$, and hence lie at distance $\leq 1+\eps$ from a Steiner point.
    
    To obtain a spanner $\G$ on the set $S$, we create, for each vertex $v\in V$, a link in $\G$ from its site $p_\ell(v)$ in its low complexity corridor, to its closest Steiner point, breaking ties arbitrarily.
    For the site $p_h(v)$ in the high complexity corridor of $v$, we add a link to $s(v)$ if $v \in \VC$, and to $p_\ell(v)$ otherwise. 
    We also add a link between every pair of Steiner points if that link does not visit another Steiner point.
    Let $p,q \in S$ be two sites. 
    To prove that~$\G$ has a spanning ratio of $3$, we consider three cases: (i) $p$ and $q$ belong to the same vertex, (ii) $p$ and $q$ belong to two adjacent vertices, or (iii) $p$ and $q$ belong to two non-adjacent vertices. 
    
    \begin{description}
        \item[(i) $p$ and $q$ belong to the same vertex $v$.] 
        We have $d(p,q) = 2\eps-\delta > \eps$.
        If there is no Steiner point at $v$, i.e. $v \notin \VC$, then the link $(p,q)$ is in $\G$. Otherwise, both links $(p,s(v))$ and $(q,s(v))$ of length $\eps$ are in $\G$. 
        Hence, $d_\G(p,q) \leq 2\eps \leq 3\eps < 3d(p,q)$.
        
        \item[(ii) $p$ and $q$ belong to adjacent vertices.]
        Then $d(p,q) = 1 + 2\eps$, and there must be a Steiner point at the entrance of the corridor of $p$ or $q$ or both. 
        Without loss of generality, assume there is a Steiner point $s(v)$ at the entrance of $p$. 
        From $q$ there is a path of length~$\leq 1 + 3\eps- \delta$ to its closest Steiner point.
        This Steiner point must have a path of length~$\leq 2$ to $s(v)$.
        Thus, the length of the path in the spanner connecting $p$ and $q$ is at most $d_\G(p,q) \leq 1 + 3\eps -\delta + 2 + \eps \leq 3 + 4\eps < 3 d(p,q)$.
    
        \item[(iii) $p$ and $q$ belong to non-adjacent vertices.] The distance between $p$ and $q$ is $\lambda + 2 \eps$ for some integer $\lambda \geq 2$. There is a path of length at most $\lambda + 2$ between the Steiner points $p$ and $q$ are connected to. Thus, $d_\G(p,q) \leq 1 + 3\eps -\delta + \lambda + 2 + 1 + 3\eps - \delta \leq 4 + \lambda + 6 \eps$. As $\lambda \geq 2$, $d_\G(p,q) \leq 3d(p,q)$.
    \end{description}
    
    Thus, $\G$ is a 3-spanner. 
    Finally, we need to bound the complexity of the spanner. 
    We first consider the links from a site $p_h(v)$ for all $v \in V$. 
    Such a link either has complexity $3M$ or $5M$, depending on whether there is a Steiner point $s(v)$. 
    The total complexity for all $v\in V$ is thus $5M(n/2 - k) + 3Mk = M(\frac{5}{2}n - 2k)$. 
    A link from a site $p_\ell(v)$ to a Steiner point has complexity as most $M+1$. 
    These links thus have complexity at most $(M+1) \cdot n/2$ in total. 
    As the maximum degree of a penny graph is six, and any vertex of degree greater than two is adjacent to only degree two vertices, the number of links from a Steiner point to other Steiner points can be at most six. 
    As each such link has complexity at most two, the total complexity of these links is at most $12k$. 
    The total complexity is thus at most $M(\frac{5}{2}n - 2k) + Mn/2 + n/2 + 12k \leq M(3n-2k) + 13 n \leq M(3n-2k+1/2)$, as $M > 26n$. 
    Thus, our spanner adheres to all of the posed conditions.

\subparagraph{Spanner implies vertex cover.}
    Suppose that $\G$ is a $3$-spanner on $S$ in $P$ that uses at most $k$ Steiner points and has complexity at most $C$.
    We show that $G$ then has a vertex cover of size at most $k$.
    
    \begin{observation}\label{obs:close_to_v}
        At least $k$ vertices are within distance $2\eps$ of a Steiner point.
    \end{observation}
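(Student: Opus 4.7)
The plan is to show that each vertex $v$ that is not within distance $2\eps$ of any Steiner point forces $\G$ to contain the expensive direct link $(p_h(v), p_\ell(v))$ of complexity $5M$. Combined with a careful segment-by-segment count against the budget $C = M(3n-2k+1/2)$, this caps the number of such ``far'' vertices, yielding $K \geq k$ where $K$ is the number of ``near'' vertices.

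The first step is to argue that whenever $v$ is far from every Steiner point, the $\G$-path from $p_h(v)$ to $p_\ell(v)$ has no intermediate vertex. Any intermediate vertex $x$ would have to lie outside both corridors of $v$: by assumption no Steiner point lies within $2\eps$ of $v$, and any site at another vertex $u$ is at distance $d(v,u) + \eps \geq 1 + \eps > 2\eps$ from $v$. Because any path from $p_h(v)$ or $p_\ell(v)$ leaving $v$'s corridor must first traverse the entrance at $v$, we have $d(p_h(v),x) = d(p_\ell(v),x) = \eps + d(v,x) > 3\eps$. Applied to the first and last link on the $\G$-path, this forces the total length to exceed $6\eps$, violating $d_\G(p_h(v),p_\ell(v)) \leq 3(2\eps-\delta) = 6\eps - 3\delta$. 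Hence the direct link itself must be in $\G$.

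The second step is to lower-bound the total complexity by attributing polygon segments to vertices: every link incident to $p_h(v)$ uses all $3M$ segments of $v$'s high corridor, every link incident to $p_\ell(v)$ uses all $M$ segments of its low corridor, and the $M$ segments of $v$'s convex chain are used only by the direct link (when present). These regions are pairwise disjoint across vertices, so the per-vertex contributions sum without double counting. A near vertex contributes at least $3M + M = 4M$ from its two required incident links; a far vertex contributes the direct link's $3M + M + M = 5M$ plus a further $\geq M$ from the outgoing link needed to connect $v$'s sites to the rest of $S$, for a total of at least $6M$. Hence $C \geq 4KM + 6(n/2-K)M = (3n-2K)M$. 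Comparing with $C = M(3n-2k+1/2)$ yields $2K \geq 2k - 1/2$, so $K \geq k$.

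The main hurdle will be the tightness of the per-vertex count: dropping the extra $M$ for the outgoing link at far vertices would weaken the bound to $(5n/2 - K)M$, which is insufficient to conclude $K\geq k$ in general. The crucial observation is that $\G$ must still connect $\{p_h(v), p_\ell(v)\}$ to the rest of $S$, which forces an additional link out of $v$ whose cheapest realization costs $M$ low-corridor segments beyond what the direct link already contributes.
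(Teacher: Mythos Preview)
Your proof is correct and follows essentially the same approach as the paper's. Both argue that a vertex far from every Steiner point forces the direct link $(p_h(v),p_\ell(v))$ of complexity $5M$ plus an additional corridor traversal of at least $M$ from an outgoing link, while a near vertex contributes at least $4M$; summing yields a lower bound of $M(3n-2K)$ which, compared against $C=M(3n-2k+1/2)$, gives $K\geq k$. The paper phrases the near-vertex contribution as the complexity of the (locally confined) $\G$-path between $p_h(v)$ and $p_\ell(v)$ rather than via your explicit per-corridor segment attribution, and it runs the computation as a contradiction from $K\leq k-1$, but these are cosmetic differences.
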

    \begin{proof}
        For any vertex $v$, we have $d(p_h(v),p_\ell(v))=2\eps-\delta$.
        Let $\pi$ be a minimum length path in $\G$ that connects $p_h(v)$ to $p_\ell(v)$.
        Because $\G$ is a $3$-spanner, we have $\|\pi\|=d_\G(p_h(v),p_\ell(v))\leq 3d(p_h(v),p_\ell(v))\leq 6\eps-3\delta<6\eps$.
        Hence, the first half of $\pi$ lies within distance $3\eps$ of $p_h(v)$, and the second half lies within distance $3\eps$ of $p_\ell(v)$.
        Therefore, the entirety of $\pi$ lies within distance $2\eps$ of $v$, and if $p_h(v)$ and $p_\ell(v)$ are not connected by a direct link, there must exist a Steiner point within distance $2\eps$ of $v$.
        Suppose for a contradiction that there are at most $k-1$ vertices $v$ that are within distance $2\eps$ of a Steiner point.
        For those vertices $v$, the path in $\G$ connecting $p_h(v)$ to $p_\ell(v)$ has complexity at least $4M$, and these paths are disjoint for different vertices.
        For the (at least $n/2-k+1$) vertices $v$ that are not near a Steiner point, $\G$ contains a direct link from~$p_h(v)$ to~$p_\ell(v)$ of complexity~$5M$. Additionally, there exists a link that connects either~$p_h(v)$ or~$p_\ell(v)$ to another site to which a corridor of $v$ adds at least $M$ complexity.
        The total complexity of the spanner is then at least $(n/2-k + 1)(5M+M) + (k-1)4M = M(3n -2k + 2) > C$.
    \end{proof}
    
    \begin{observation}\label{obs:no_direct_edge}
        There is no link between any pair of sites $p,q$ that belong to different vertices and do not have a Steiner point within distance $2\eps$.
    \end{observation}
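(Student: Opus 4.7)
The plan is a complexity-counting argument by contradiction, extending the proof of Observation~\ref{obs:close_to_v}. Suppose for contradiction that $\G$ contains a link $(p,q)$ with $p$ belonging to vertex $u$, $q$ to vertex $v$, $u \neq v$, and neither $u$ nor $v$ within distance $2\eps$ of any Steiner point. As the shortest path from $p$ to $q$ in $P$, this link must traverse both corridors of $u$ and $v$ (neither containing a Steiner point) and at least $d_G(u,v) \geq 1$ graph edges between them; since each corridor contributes at least $M$ segments (and $3M$ when the $p_h$-side is involved), the link has complexity at least $2M + 1$.

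I would then invoke the per-vertex counting from Observation~\ref{obs:close_to_v}: each of the at least $k$ vertices within distance $2\eps$ of a Steiner point contributes complexity at least $4M$, while each of the remaining at most $n/2 - k$ such vertices contributes at least $6M$ (the $5M$ direct $p_h$-to-$p_\ell$ link plus an exit link of complexity at least $M$). This yields a baseline of at least $M(3n - 2k)$, which is only $M/2$ below the complexity budget $C$.

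The main difficulty is showing that the assumed direct link $(p,q)$ drives the total past this $M/2$ slack. If either $p = p_h(u)$ or $q = p_h(v)$, the link already has complexity at least $4M + 1$; even treating $(p,q)$ as the shared exit for both $u$ and $v$ (replacing two $M$-complexity exits from the baseline) yields an overage of at least $2M + 1$, immediately exceeding the slack. The delicate sub-case is $p = p_\ell(u)$, $q = p_\ell(v)$, where the link's complexity is only $\geq 2M + 1$ and sharing it as the common exit of $u$ and $v$ gives merely a $+1$ overage. Here I would argue that $(p,q)$ cannot serve as the sole exit for $u$: using that the 3-subdivided penny graph has minimum degree 2 and contains no triangles, $u$ has a neighbor $w \neq v$ with $d_G(u, w) = 1$ and $d_G(v, w) = 2$. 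Routing the spanner path from $p_\ell(u)$ to $p_\ell(w)$ only through $(p,q)$ would force $d_\G(p_\ell(v), p_\ell(w)) \leq 2 + 4\eps$, a bound unattainable except via another forbidden direct link or via a Steiner point lying close to the intermediate vertex $u$ (impossible since $u$ lies outside all $2\eps$-neighborhoods of Steiner points, forcing any would-be hub to incur a detour of at least $4\eps$). Hence $u$, and by symmetry $v$, needs an additional exit link of complexity at least $M$, pushing the total at least $2M + 1$ past the baseline and strictly above $C$, the desired contradiction.
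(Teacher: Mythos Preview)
Your overall strategy---complexity counting against the baseline $M(3n-2k)$ inherited from Observation~\ref{obs:close_to_v}---matches the paper's, but the paper closes the argument with a single line you are missing, and your substitute for it does not go through.

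The paper does not split into $p_h$/$p_\ell$ cases at all. It simply observes that the link $(p,q)$ together with the two forced $5M$ direct links at $u$ and $v$ only connects the four sites of $u$ and $v$ to one another; since a $3$-spanner is in particular connected, there must be \emph{another} link leaving a corridor of $u$ or $v$, contributing at least $M$ more. Adding this extra $M$ to the $2M$ of $(p,q)$, the two $5M$ direct links, and the standard contributions of the remaining $n/2-2$ vertices yields $\geq M(3n-2k+1)>C$. No spanning-ratio reasoning or case distinction is needed.

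Your spanning-ratio argument in the $p_\ell$--$p_\ell$ sub-case has a genuine gap. If $(p_\ell(u),p_\ell(v))$ is the sole exit of $u$ and $w\neq v$ is another neighbour of $u$, you get
\[
d_\G(p_\ell(u),p_\ell(w))\;\geq\;d(p_\ell(u),p_\ell(v))+d_\G(p_\ell(v),p_\ell(w))\;\geq\;(1+2\eps)+(2+2\eps)=3+4\eps,
\]
which does \emph{not} exceed $3\,d(p_\ell(u),p_\ell(w))=3+6\eps$. To obtain a contradiction you would need $d_\G(p_\ell(v),p_\ell(w))>2+4\eps$, but your justification for this (``unattainable except via another forbidden direct link or a Steiner point near $u$'') is both circular---it invokes the very observation you are proving---and unsubstantiated: nothing rules out a short spanner path from $p_\ell(v)$ to $p_\ell(w)$ via Steiner points or sites elsewhere in the domain. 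Replace this detour with the paper's connectivity step and the proof is complete in all cases at once.
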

    \begin{proof}
        Let $v$ and $w$ be the vertices that $p$ and $q$ belong to, respectively. Suppose that there would be such a link. This link has complexity $\geq 2M$. Additionally, there must be a link from both $p$ and $q$ to its counterpart in the other corridor of $v$ or $w$ of complexity $5M$. As the spanner is connected, there must also be a path in the spanner from $p$ and $q$ to every other site. There must thus be another link of complexity $\geq M$ leaving a corridor of $v$ or $w$. To connect all other sites we still require at least $(n/2-2-k)5M + k3M + (n/2-2)M$ complexity. The total complexity is thus $\geq M(1 + 2(5 + 1) + 3n - 2k -10-2) = M(3n-2k+1)> C$, which is a contradiction.
    \end{proof}

        \begin{figure}
        \centering
        \includegraphics[page=4]{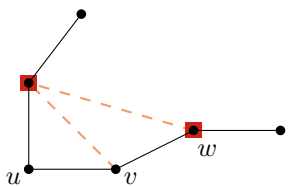}
        \caption{There are no 3- or 4-cycles in $G$.}
        \label{fig:proof_vc}
    \end{figure}
    To obtain a vertex cover $\VC$, we assign a vertex to $\VC$ if it has a Steiner point within distance $2\eps$. Observation~\ref{obs:close_to_v} tells us that $|\VC| \leq k$. To prove that $\VC$ is a vertex cover, assume for contradiction that there is an uncovered edge $(u,v) \in E$. Consider the two sites $p_\ell(u)$ and $p_\ell(v)$. First, note that there are no 3- and 4-cycles in $G$ by construction, thus the orange edges in Figure~\ref{fig:proof_vc} do not exist. Also, Observation~\ref{obs:no_direct_edge} implies that there is no direct edge between sites of $u$ and $v$, and thus  $p_\ell(u)$ and $p_\ell(v)$ both have a link to a Steiner point. Suppose the link between $p_\ell(u)$ and the Steiner point it is connected to does not visit $v$, and the link between $p_\ell(v)$ and the Steiner point it is connected to does not visit $u$. 
    Then $d_\G(p_\ell(u),p_\ell(v))\geq \eps + 2(1-2\eps) + 1 + 2(1-2\eps) + \eps = 5 - 6\eps$. This contradicts the spanning ratio, as $d(p_\ell(u),p_\ell(v)) = 1+2\eps$. Next, suppose without loss of generality that the link of $p_\ell(v)$ does visit $u$ (and the link of $p_\ell(u)$ does not visit $v$). In that case, consider another neighbor $w \neq u$ of $v$. Because the distance from $v$ to the Steiner point is at least $2-2\eps$, and there are no short cycles, we have that $d_\G(p_\ell(v),p_\ell(w)) \geq \eps + 2(2 -2\eps) + \eps = 4 -2\eps> 3d(p_\ell(v),p_\ell(w))$. This again contradicts the spanning ratio. We conclude that $\VC$ is a vertex cover for $G$.

    \begin{theorem}
      \steinerspanner is NP-hard. More precisely, it is NP-hard to decide whether a polygonal domain with $n$ sites admits a $3$-spanner using at most $k$ Steiner points and complexity at most $C = M \cdot (3n - 2k + 1/2)$ with $M> 26 n$.
    \end{theorem}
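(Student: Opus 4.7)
The plan is to give a polynomial-time reduction from \vertexcover on $3$-subdivided penny graphs without degree-one vertices, which is NP-complete by Lemma~\ref{lem:split_edges}. Given such a graph $G = (V,E)$ with a straight-line penny graph drawing, I build a polygonal domain $P$ whose free space is a thin neighborhood of the drawing, together with a set $S$ of $n = 2|V|$ sites. Near each vertex $v$ of $G$ the free space forms a gadget: a corridor carved into an adjacent hole that splits into two sub-corridors of length $\eps < 1/8$, with sites $p_\ell(v)$ and $p_h(v)$ placed at their tips. One sub-corridor has complexity $M$, the other $3M$, and a convex chain of $M$ vertices between them forces the direct shortest path $\pi(p_h(v),p_\ell(v))$ to have complexity $5M$ while its length remains $2\eps-\delta$ for some tiny $\delta<\eps/3$. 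I then set $M>26n$, $t=3$, and $C = M(3n-2k+1/2)$, and ask whether this \steinerspanner instance is a Yes-instance.

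\textbf{Forward direction.} If $G$ has a vertex cover $\VC$ of size at most $k$, place a Steiner point $s(v)$ at the corridor entrance of each $v \in \VC$, and build $\G$ by linking every $p_\ell(v)$ to its nearest Steiner point, every $p_h(v)$ either to $s(v)$ (when $v\in\VC$) or to $p_\ell(v)$ otherwise, and every pair of Steiner points whose connecting path avoids a third Steiner point. A short case analysis on whether two sites belong to the same vertex, to adjacent vertices, or to non-adjacent vertices establishes the $3$-spanner property (the non-trivial case being adjacent vertices, where the path of length $\le 1 + 3\eps - \delta + 2 + \eps$ has to be compared to $d(p,q) = 1+2\eps$). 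Summing complexities ($\le M(5n/2-2k)$ from $p_h(v)$-links, $\le (M+1)n/2$ from $p_\ell(v)$-links, $\le 12k$ from inter-Steiner links, using that penny graphs have maximum degree six) and invoking $M>26n$ shows the total is at most $C$.

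\textbf{Backward direction.} Conversely, assume a valid $3$-spanner $\G$ exists. The crucial observation is that $d(p_h(v),p_\ell(v)) = 2\eps-\delta$ while any path in $\G$ between these two sites must either be a direct link of complexity $5M$ or detour through a Steiner point within distance $2\eps$ of $v$ (otherwise the $3$-spanner property fails); by comparing against the budget $C$, at least $k$ vertices must have a Steiner point within distance $2\eps$. A similar budget argument rules out any direct link between sites belonging to two distinct vertices neither of which has a nearby Steiner point. Defining $\VC$ as the set of vertices with a Steiner point within distance $2\eps$, one verifies $\VC$ is a vertex cover: for any edge $(u,v)$ with $u,v\notin\VC$, both $p_\ell(u)$ and $p_\ell(v)$ are forced to reach a Steiner point via a detour, and since $G$ has no $3$- or $4$-cycles, any such detour produces a path in $\G$ of length $\ge 5-6\eps$ or $\ge 4-2\eps$, contradicting the spanning ratio applied to $d(p_\ell(u),p_\ell(v))=1+2\eps$ or to the distance to a further neighbor of $v$.

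\textbf{Main obstacle.} The delicate part is calibrating $M$ and $C$ so that the budget is simultaneously tight enough to force both ``at least $k$ vertices carry a nearby Steiner point'' and ``every edge is covered'', while still being slack enough to be met by the honest spanner in the forward direction. This calibration also explains why we need to reduce from the $3$-subdivided version of the problem: the absence of short cycles in $G$ is precisely what prevents alternative short detours in the backward direction, which would otherwise let the adversary avoid covering edges.
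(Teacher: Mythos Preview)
Your proposal is correct and follows essentially the same approach as the paper: the same reduction from \vertexcover on $3$-subdivided penny graphs without degree-one vertices, the same gadget with two corridors of complexities $M$ and $3M$ and the convex chain forcing the direct link to have complexity $5M$, the same parameter choices, the same three-case analysis in the forward direction, and the same pair of budget observations followed by the no-short-cycles contradiction in the backward direction. Your identification of the ``main obstacle'' (the calibration of $M$ and $C$, and the role of $3$-subdivision in eliminating short cycles) is also exactly what drives the paper's argument.
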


\section{Future work}\label{sec:future_work}
On the side of constructing low-complexity spanners, an interesting direction for future work would be to close the gap between the upper and lower bounds, both with and without using Steiner points. We believe it might be possible to increase the $n^{1/(t+1)}$ term to $n^{1/t}$ (or even $n^{1/(t-1)}$) in 
Lemma~\ref{lem:lowerbound_tspanner}. On the side of the hardness, many interesting open questions remain, such as: Is the problem still hard in a simple polygon? Can we show hardness for other spanning ratios and/or a less restricted complexity requirement? Is the problem even in NP?

\bibliography{./bibliography.bib}

\begin{thebibliography}{10}

\bibitem{SpannerPolygonalDomain}
Mohammad~Ali Abam, Marjan Adeli, Hamid Homapour, and Pooya~Zafar Asadollahpoor.
\newblock Geometric spanners for points inside a polygonal domain.
\newblock In {\em Proc. 31st International Symposium on Computational Geometry,
  SoCG}, volume~34 of {\em LIPIcs}, pages 186--197. Schloss Dagstuhl -
  Leibniz-Zentrum f{\"{u}}r Informatik, 2015.

\bibitem{SpannerPolyhedralTerrain}
Mohammad~Ali Abam, Mark de~Berg, and Mohammad Javad~Rezaei Seraji.
\newblock Geodesic spanners for points on a polyhedral terrain.
\newblock {\em {SIAM} J. Comput.}, 48(6):1796--1810, 2019.

\bibitem{alzoubi03geomet_spann_wirel_ad_hoc_networ}
Khaled~M. Alzoubi, Xiang{-}Yang Li, Yu~Wang, Peng{-}Jun Wan, and Ophir Frieder.
\newblock Geometric spanners for wireless ad hoc networks.
\newblock {\em {IEEE} Trans. Parallel Distributed Syst.}, 14(4):408--421, 2003.

\bibitem{Arya95short_thin_lanky}
Sunil Arya, Gautam Das, David~M. Mount, Jeffrey~S. Salowe, and Michiel H.~M.
  Smid.
\newblock Euclidean spanners: short, thin, and lanky.
\newblock In {\em Proc. 27th Annual {ACM} Symposium on Theory of Computing,
  {STOC}}, pages 489--498. {ACM}, 1995.

\bibitem{bhore21light_euclid_stein_spann_plane}
Sujoy Bhore and Csaba~D. T{\'{o}}th.
\newblock Light {Euclidean Steiner} spanners in the plane.
\newblock In {\em Proc. 37th International Symposium on Computational Geometry,
  {SoCG}}, volume 189 of {\em LIPIcs}, pages 15:1--15:17. Schloss Dagstuhl -
  Leibniz-Zentrum f{\"{u}}r Informatik, 2021.

\bibitem{borradaile15near_stein}
Glencora Borradaile and David Eppstein.
\newblock Near-linear-time deterministic plane {Steiner} spanners for
  well-spaced point sets.
\newblock {\em Comput. Geom.}, 49:8--16, 2015.

\bibitem{Bose05Bounded_degree_low_weight}
Prosenjit Bose, Joachim Gudmundsson, and Michiel H.~M. Smid.
\newblock Constructing plane spanners of bounded degree and low weight.
\newblock {\em Algorithmica}, 42(3-4):249--264, 2005.

\bibitem{survey_geometric_spanners}
Prosenjit Bose and Michiel H.~M. Smid.
\newblock On plane geometric spanners: {A} survey and open problems.
\newblock {\em Comput. Geom.}, 46(7):818--830, 2013.

\bibitem{vertexcover_penny}
M{\'{a}}rcia~R. Cerioli, Lu{\'{e}}rbio Faria, Talita~O. Ferreira, and
  F{\'{a}}bio Protti.
\newblock A note on maximum independent sets and minimum clique partitions in
  unit disk graphs and penny graphs: complexity and approximation.
\newblock {\em {RAIRO} Theor. Informatics Appl.}, 45(3):331--346, 2011.

\bibitem{RoutingInDoublingMetrics}
T.{-}H.~Hubert Chan, Anupam Gupta, Bruce~M. Maggs, and Shuheng Zhou.
\newblock On hierarchical routing in doubling metrics.
\newblock {\em {ACM} Trans. Algorithms}, 12(4):55:1--55:22, 2016.

\bibitem{Chan15Doubling_spanners}
T.{-}H.~Hubert Chan, Mingfei Li, Li~Ning, and Shay Solomon.
\newblock New doubling spanners: Better and simpler.
\newblock {\em {SIAM} J. Comput.}, 44(1):37--53, 2015.

\bibitem{Chazelle_triangulate}
Bernard Chazelle.
\newblock Triangulating a simple polygon in linear time.
\newblock {\em Discret. Comput. Geom.}, 6:485--524, 1991.

\bibitem{Clarkson87_approx_sp}
Kenneth~L. Clarkson.
\newblock Approximation algorithms for shortest path motion planning.
\newblock In {\em Proc. 19th Annual {ACM} Symposium on Theory of Computing,
  {STOC}}, pages 56--65. {ACM}, 1987.

\bibitem{cohen-addad20light_spann_low_embed_effic}
Vincent Cohen{-}Addad, Arnold Filtser, Philip~N. Klein, and Hung Le.
\newblock On light spanners, low-treewidth embeddings and efficient traversing
  in minor-free graphs.
\newblock In {\em Proc. 61st {IEEE} Annual Symposium on Foundations of Computer
  Science, {FOCS}}, pages 589--600. {IEEE}, 2020.

\bibitem{CyganFKLMPPS15parameterized}
Marek Cygan, Fedor~V. Fomin, Lukasz Kowalik, Daniel Lokshtanov, D{\'{a}}niel
  Marx, Marcin Pilipczuk, Michal Pilipczuk, and Saket Saurabh.
\newblock {\em Parameterized Algorithms}.
\newblock Springer, 2015.

\bibitem{full_version}
Sarita de~Berg, Marc van Kreveld, and Frank Staals.
\newblock The complexity of geodesic spanners.
\newblock {\em CoRR}, abs/2303.02997, 2023.

\bibitem{complexity_spanners}
Sarita de~Berg, Marc~J. van Kreveld, and Frank Staals.
\newblock The complexity of geodesic spanners.
\newblock In {\em Proc. 39th International Symposium on Computational Geometry,
  {SoCG}}, volume 258 of {\em LIPIcs}, pages 16:1--16:16. Schloss Dagstuhl -
  Leibniz-Zentrum f{\"{u}}r Informatik, 2023.

\bibitem{shallow_low_light_trees}
Yefim Dinitz, Michael Elkin, and Shay Solomon.
\newblock Shallow-low-light trees, and tight lower bounds for {Euclidean}
  spanners.
\newblock In {\em Proc. 49th Annual {IEEE} Symposium on Foundations of Computer
  Science, {FOCS}}, pages 519--528, 2008.

\bibitem{Elkin_low_light_Steiner}
Michael Elkin and Shay Solomon.
\newblock Narrow-shallow-low-light trees with and without {Steiner} points.
\newblock {\em {SIAM} J. Discret. Math.}, 25(1):181--210, 2011.

\bibitem{Elkin15optimal_short_thin_lanky}
Michael Elkin and Shay Solomon.
\newblock Optimal {Euclidean} spanners: Really short, thin, and lanky.
\newblock {\em J. {ACM}}, 62(5):35:1--35:45, 2015.

\bibitem{gottlieb17effic_regres_metric_spaces_approx_lipsc_exten}
Lee{-}Ad Gottlieb, Aryeh Kontorovich, and Robert Krauthgamer.
\newblock Efficient regression in metric spaces via approximate {Lipschitz}
  extension.
\newblock {\em {IEEE} Trans. Inf. Theory}, 63(8):4838--4849, 2017.

\bibitem{bounded_doubling_optimal_dynamic}
Lee{-}Ad Gottlieb and Liam Roditty.
\newblock An optimal dynamic spanner for doubling metric spaces.
\newblock In {\em Proc. 16th Annual European Symposium on Algorithms, {ESA}},
  volume 5193 of {\em LNCS}, pages 478--489, 2008.

\bibitem{2PSP_simple_polygon}
Leonidas~J. Guibas and John Hershberger.
\newblock Optimal shortest path queries in a simple polygon.
\newblock {\em J. Comput. Syst. Sci.}, 39(2):126--152, 1989.

\bibitem{FastConstructionDoublingMetrics}
Sariel Har{-}Peled and Manor Mendel.
\newblock Fast construction of nets in low-dimensional metrics and their
  applications.
\newblock {\em {SIAM} J. Comput.}, 35(5):1148--1184, 2006.

\bibitem{SPMHershbergerSuri}
John Hershberger and Subhash Suri.
\newblock An optimal algorithm for {Euclidean} shortest paths in the plane.
\newblock {\em {SIAM} J. Comput.}, 28(6):2215--2256, 1999.

\bibitem{Kirkpatrick_point_location}
David~G. Kirkpatrick.
\newblock Optimal search in planar subdivisions.
\newblock {\em {SIAM} J. Comput.}, 12(1):28--35, 1983.

\bibitem{Le19truly_opt_Euclidean}
Hung Le and Shay Solomon.
\newblock Truly optimal {Euclidean} spanners.
\newblock In {\em Proc. 60th {IEEE} Annual Symposium on Foundations of Computer
  Science, {FOCS}}, pages 1078--1100. {IEEE} Computer Society, 2019.

\bibitem{Levcopoulos98_fault_tolerant}
Christos Levcopoulos, Giri Narasimhan, and Michiel H.~M. Smid.
\newblock Efficient algorithms for constructing fault-tolerant geometric
  spanners.
\newblock In {\em Proc. 13th Annual {ACM} Symposium on the Theory of Computing,
  {STOC}}, pages 186--195. {ACM}, 1998.

\bibitem{proximity_algorithms_book}
Joseph S.~B. Mitchell and Wolfgang Mulzer.
\newblock Proximity algorithms.
\newblock In {\em Handbook of Discrete and Computational Geometry (3rd
  Edition)}, chapter~32, pages 849--874. {Chapman \& Hall/CRC}, 2017.

\bibitem{book_spanners}
Giri Narasimhan and Michiel H.~M. Smid.
\newblock {\em Geometric Spanner Networks}.
\newblock Cambridge University Press, 2007.

\bibitem{remy10euclid}
Jan Remy, Reto Sp{\"{o}}hel, and Andreas Wei{\ss}l.
\newblock On {Euclidean} vehicle routing with allocation.
\newblock {\em Comput. Geom.}, 43(4):357--376, 2010.

\end{thebibliography}

\end{document}